\documentclass[12pt]{article}
\pdfoutput=1
\usepackage{fullpage}
\def\citep#1{\cite{#1}}

\newif\ifpreprint
\preprintfalse


\usepackage[utf8]{inputenc} 
\usepackage[T1]{fontenc}    

\usepackage{amsfonts}       
\usepackage{amssymb}
\usepackage{amsmath}
\usepackage{euscript}
\usepackage{stmaryrd} 
\usepackage{color}
\usepackage{cases}
\usepackage[colors]{optsys}
\usepackage{causal}
\usepackage{snippets}
\newcommand{\SnippetActiveOe}{
%
  \sep
  \begin{sentence}
    \begin{input}
      ~~\PY{k+kn}{Definition}~\PY{n+nf}{ActiveOe}~\PY{o}{(}\PY{n+nv}{W}\PY{o}{:}~\PY{n+nb}{set}~\PY{n}{T}\PY{o}{)}~\PY{o}{(}\PY{n+nv}{E}\PY{o}{:}~\PY{n}{relation}~\PY{n}{T}\PY{o}{)}~\PY{o}{:=}~\nl
      ~~~~\PY{o}{[}\PY{n+nb}{set}~\PY{n}{oe}~\PY{o}{:}~\PY{o}{(}\PY{n}{T}\PY{o}{*}\PY{n}{T}\PY{o}{*}\PY{n}{O}\PY{o}{)}~\PY{o}{*}~\PY{o}{(}\PY{n}{T}\PY{o}{*}\PY{n}{T}\PY{o}{*}\PY{n}{O}\PY{o}{)}~\PY{o}{|}~\nl
      ~~~~~~\PY{n}{Oedge}~\PY{n}{E}~\PY{n}{oe}\PY{o}{.}\PY{l+m+mi}{1}~\PY{o}{\(\,\wedge\,\)}~\PY{n}{Oedge}~\PY{n}{E}~\PY{n}{oe}\PY{o}{.}\PY{l+m+mi}{2}~\PY{o}{\(\,\wedge\,\)}~\PY{o}{(}\PY{n}{ChrelO}~\PY{n}{oe}\PY{o}{)}\nl
      ~~~~~~\PY{o}{\(\,\wedge\,\)}~\PY{k+kr}{match}~\PY{o}{(}\PY{n}{oe}\PY{o}{.}\PY{l+m+mi}{1}\PY{o}{.}\PY{l+m+mi}{2}\PY{o}{,}\PY{n}{oe}\PY{o}{.}\PY{l+m+mi}{2}\PY{o}{.}\PY{l+m+mi}{2}\PY{o}{,}~\PY{n}{oe}\PY{o}{.}\PY{l+m+mi}{1}\PY{o}{.}\PY{l+m+mi}{1}\PY{o}{.}\PY{l+m+mi}{2}\PY{o}{)}~\PY{k+kr}{with}~\nl
      ~~~~~~~~\PY{o}{|}~\PY{o}{(}\PY{n}{P}\PY{o}{,}\PY{n}{P}\PY{o}{,}\PY{n}{v}\PY{o}{)}~\PY{o}{=\PYZgt{}}~\PY{n}{W}\PY{o}{.\PYZca{}}\PY{n}{c}~\PY{n}{v}\nl
      ~~~~~~~~\PY{o}{|}~\PY{o}{(}\PY{n}{N}\PY{o}{,}\PY{n}{N}\PY{o}{,}\PY{n}{v}\PY{o}{)}~\PY{o}{=\PYZgt{}}~\PY{n}{W}\PY{o}{.\PYZca{}}\PY{n}{c}~\PY{n}{v}\nl
      ~~~~~~~~\PY{o}{|}~\PY{o}{(}\PY{n}{N}\PY{o}{,}\PY{n}{P}\PY{o}{,}\PY{n}{v}\PY{o}{)}~\PY{o}{=\PYZgt{}}~\PY{n}{W}\PY{o}{.\PYZca{}}\PY{n}{c}~\PY{n}{v}\nl
      ~~~~~~~~\PY{o}{|}~\PY{o}{(}\PY{n}{P}\PY{o}{,}\PY{n}{N}\PY{o}{,}\PY{n}{v}\PY{o}{)}~\PY{o}{=\PYZgt{}}~\PY{o}{(}\PY{n}{Fset}~\PY{n}{E}\PY{o}{.*}~\PY{n}{W}\PY{o}{)}~\PY{n}{v}\nl
      ~~~~~~~~\PY{k+kr}{end}\PY{o}{].}
    \end{input}
  \end{sentence}
}

\newcommand{\SnippetDseparated}{
%
  \sep
  \begin{sentence}
    \begin{input}
      ~~\PY{k+kn}{Definition}~\PY{n+nf}{D\PYZus{}separated}~\PY{o}{(}\PY{n+nv}{W}\PY{o}{:}~\PY{n+nb}{set}~\PY{n}{T}\PY{o}{)}~\PY{o}{(}\PY{n+nv}{E}\PY{o}{:}~\PY{n}{relation}~\PY{n}{T}\PY{o}{)}~\PY{o}{(}\PY{n+nv}{x}~\PY{n+nv}{y}\PY{o}{:}~\PY{n}{T}\PY{o}{)}~\PY{o}{:=}~\nl
      ~~~~\PY{o}{\PYZti{}(}\PY{k+kr}{\(\exists\)}~\PY{o}{(}\PY{n+nv}{p}\PY{o}{:}~\PY{n}{seq}~\PY{o}{(}\PY{n}{T}\PY{o}{*}\PY{n}{T}\PY{o}{*}\PY{n}{O}\PY{o}{)),}~\PY{n}{Active\PYZus{}path}~\PY{n}{W}~\PY{n}{E}~\PY{n}{p}~\PY{n}{x}~\PY{n}{y}\PY{o}{).}
    \end{input}
  \end{sentence}
}
\newcommand{\SnippetAeop}{
%
  \sep
  \begin{sentence}
    \begin{input}
      ~~\PY{k+kn}{Definition}~\PY{n+nf}{Active\PYZus{}path}\nl
      ~~~~\PY{o}{(}\PY{n+nv}{W}\PY{o}{:}~\PY{n+nb}{set}~\PY{n}{T}\PY{o}{)}~\PY{o}{(}\PY{n+nv}{E}\PY{o}{:}~\PY{n}{relation}~\PY{n}{T}\PY{o}{)}~\PY{o}{(}\PY{n+nv}{p}\PY{o}{:}~\PY{n}{seq}~\PY{o}{(}\PY{n}{T}\PY{o}{*}\PY{n}{T}\PY{o}{*}\PY{n}{O}\PY{o}{))}~\PY{o}{(}\PY{n+nv}{x}~\PY{n+nv}{y}\PY{o}{:}~\PY{n}{T}\PY{o}{)}~\PY{o}{:=}\nl
      ~~~~\PY{k+kr}{match}~\PY{n}{p}~\PY{k+kr}{with}~\nl
      ~~~~\PY{o}{|}~\PY{o}{[::]}~\PY{o}{=\PYZgt{}}~\PY{n}{x}~\PY{o}{=}~\PY{n}{y}~\nl
      ~~~~\PY{o}{|}~\PY{o}{[::}\PY{n}{eo1}\PY{o}{]}~\PY{o}{=\PYZgt{}}~\PY{n}{eo1}\PY{o}{.}\PY{l+m+mi}{1}\PY{o}{.}\PY{l+m+mi}{1}~\PY{o}{=}~\PY{n}{x}~\PY{o}{\(\,\wedge\,\)}~~\PY{n}{eo1}\PY{o}{.}\PY{l+m+mi}{1}\PY{o}{.}\PY{l+m+mi}{2}~\PY{o}{=}~\PY{n}{y}~\PY{o}{\(\,\wedge\,\)}~~\PY{n}{Oedge}~\PY{n}{E}~\PY{n}{eo1}~\nl
      ~~~~\PY{o}{|}~\PY{n}{eo1}~\PY{o}{::}~\PY{o}{[::}~\PY{n}{eo2}~\PY{o}{\PYZam{}}~\PY{n}{p}\PY{o}{]}\nl
      ~~~~~~\PY{o}{=\PYZgt{}}~\PY{n}{eo1}\PY{o}{.}\PY{l+m+mi}{1}\PY{o}{.}\PY{l+m+mi}{1}~\PY{o}{=}~\PY{n}{x}~\PY{o}{\(\,\wedge\,\)}~\PY{o}{(}\PY{n+nb}{last}~\PY{n}{eo2}~\PY{n}{p}\PY{o}{).}\PY{l+m+mi}{1}\PY{o}{.}\PY{l+m+mi}{2}~\PY{o}{=}~\PY{n}{y}~\nl
      ~~~~~~~~\PY{o}{\(\,\wedge\,\)}~\PY{n}{allL}~\PY{o}{(}\PY{n}{ActiveOe}~\PY{n}{W}~\PY{n}{E}\PY{o}{)}~\PY{o}{(}\PY{n}{belast}~\PY{n}{eo2}~\PY{n}{p}\PY{o}{)}~\PY{n}{eo1}~\PY{o}{(}\PY{n+nb}{last}~\PY{n}{eo2}~\PY{n}{p}\PY{o}{)}\nl
      ~~~~\PY{k+kr}{end}\PY{o}{.}
    \end{input}
  \end{sentence}
}
\newcommand{\Snippetallnotation}{
%
  \sep
  \begin{sentence}
    \begin{input}
      ~~\PY{k+kn}{Notation}~\PY{l+s+s2}{\PYZdq{}p~\([{\in}]\)~X\PYZdq{}}~\PY{o}{:=}~\PY{o}{(}\PY{k+kp}{all}~\PY{o}{(}\PY{k+kr}{fun}~\PY{n+nv}{x}~\PY{o}{=\PYZgt{}}~\PY{n}{x}~\(\in\)~\PY{n}{X}\PY{o}{)}~\PY{n}{p}\PY{o}{).}~
    \end{input}
  \end{sentence}
}
\newcommand{\SnippetChrel}{
%
  \sep
  \begin{sentence}
    \begin{input}
      ~~\PY{k+kn}{Definition}~\PY{n+nf}{Chrel}~\PY{o}{\PYZob{}}\PY{n+nv}{T}\PY{o}{:}\PY{k+kt}{Type}\PY{o}{\PYZcb{}}~\PY{o}{:=[}\PY{n+nb}{set}~\PY{n}{s}\PY{o}{:}~\PY{o}{(}\PY{n}{T}\PY{o}{*}\PY{n}{T}\PY{o}{)*(}\PY{n}{T}\PY{o}{*}\PY{n}{T}\PY{o}{)|}~\PY{o}{(}\PY{n}{s}\PY{o}{.}\PY{l+m+mi}{1}\PY{o}{).}\PY{l+m+mi}{2}~\PY{o}{=}~\PY{o}{(}\PY{n}{s}\PY{o}{.}\PY{l+m+mi}{2}\PY{o}{).}\PY{l+m+mi}{1}\PY{o}{].}
    \end{input}
  \end{sentence}
}
\newcommand{\SnippetDU}{
%
  \sep
  \begin{sentence}
    \begin{input}
      ~~\PY{k+kn}{Definition}~\PY{n+nf}{D\PYZus{}U}~\PY{o}{(}\PY{n+nv}{R}~\PY{n+nv}{E}\PY{o}{:}~\PY{n}{relation}~\PY{n}{T}\PY{o}{)}~\PY{o}{:=}~\PY{o}{[}\PY{n+nb}{set}~\PY{n}{stto}~\PY{o}{:}~\PY{n}{seq}~\PY{o}{(}\PY{n}{T}\PY{o}{*}\PY{n}{T}\PY{o}{*}\PY{n}{O}\PY{o}{)}~\PY{o}{|}\PY{n}{size}\PY{o}{(}\PY{n}{stto}\PY{o}{)\PYZgt{}}\PY{l+m+mi}{0}~\nl
      ~~~~~\PY{o}{\(\,\wedge\,\)}~\PY{n}{R}~\PY{o}{(}\PY{n}{Eope}~\PY{n}{stto}\PY{o}{)}~\PY{o}{\(\,\wedge\,\)}~\PY{n}{stto}~\([{\in}]\)~\PY{o}{(}\PY{n}{Oedge}~\PY{n}{E}\PY{o}{)}~\PY{o}{\(\,\wedge\,\)}~\PY{n}{stto}~\PY{o}{[}\PY{n}{Suc}\(\in\)\PY{o}{]}~\PY{n}{ChrelO}\PY{o}{].}
    \end{input}
  \end{sentence}
}
\newcommand{\SnippetDUa}{
%
  \sep
  \begin{sentence}
    \begin{input}
      ~~\PY{k+kn}{Definition}~\PY{n+nf}{D\PYZus{}U\PYZus{}a}~\PY{o}{(}\PY{n+nv}{R}~\PY{n+nv}{E}\PY{o}{:}~\PY{n}{relation}~\PY{n}{T}\PY{o}{)}~\PY{o}{(}\PY{n+nv}{W}\PY{o}{:}~\PY{n+nb}{set}~\PY{n}{T}\PY{o}{)}~\PY{o}{(}\PY{n+nv}{x}~\PY{n+nv}{y}\PY{o}{:}\PY{n}{T}\PY{o}{):=}\nl
      ~~~~\PY{o}{(}\PY{n}{D\PYZus{}U}~\PY{n}{R}~\PY{n}{E}\PY{o}{)}~\PY{o}{\(\cap\)}~\PY{o}{[}\PY{n+nb}{set}~\PY{n}{stto}~\PY{o}{|}~\PY{n}{stto}~\PY{o}{[}\PY{n}{Suc}\(\in\)\PY{o}{]}~\PY{o}{(}\PY{n}{A\PYZus{}tr}~\PY{n}{W}~\PY{n}{E}\PY{o}{)].}
    \end{input}
  \end{sentence}
}
\newcommand{\SnippetDUaone}{
%
  \sep
  \begin{sentence}
    \begin{input}
      ~~\PY{k+kn}{Definition}~\PY{n+nf}{D\PYZus{}U\PYZus{}a1}~\PY{o}{(}\PY{n+nv}{E}\PY{o}{:}~\PY{n}{relation}~\PY{n}{T}\PY{o}{)}~\PY{o}{(}\PY{n+nv}{W}\PY{o}{:}~\PY{n+nb}{set}~\PY{n}{T}\PY{o}{)}~\PY{o}{(}\PY{n+nv}{x}~\PY{n+nv}{y}\PY{o}{:}\PY{n}{T}\PY{o}{):=}\nl
      ~~~~\PY{o}{[}\PY{n+nb}{set}~\PY{n}{stto}~\PY{o}{|}\PY{n}{size}\PY{o}{(}\PY{n}{stto}\PY{o}{)\PYZgt{}}\PY{l+m+mi}{0}~\PY{o}{\(\,\wedge\,\)}~\PY{o}{(}\PY{n}{Eope}~\PY{n}{stto}\PY{o}{)=(}\PY{n}{x}\PY{o}{,}\PY{n}{y}\PY{o}{)}~\PY{o}{\(\,\wedge\,\)}~\PY{n}{stto}~\([{\in}]\)~\PY{o}{(}\PY{n}{Oedge}~\PY{n}{E}\PY{o}{)}~\nl
      ~~~~~\PY{o}{\(\,\wedge\,\)}~\PY{n}{stto}~\PY{o}{[}\PY{n}{Suc}\(\in\)\PY{o}{]}~\PY{n}{ChrelO}~\PY{o}{\(\,\wedge\,\)}~\PY{n}{stto}~\PY{o}{[}\PY{n}{Suc}\(\in\)\PY{o}{]}~\PY{o}{(}\PY{n}{A\PYZus{}tr}~\PY{n}{W}~\PY{n}{E}\PY{o}{)].}
    \end{input}
  \end{sentence}
}

\newcommand{\SnippetOedge}{
%
  \sep
  \begin{sentence}
    \begin{input}
      ~~\PY{k+kn}{Definition}~\PY{n+nf}{Oedge}~\PY{o}{(}\PY{n+nv}{E}\PY{o}{:}~\PY{n}{relation}~\PY{n}{T}\PY{o}{):}~\PY{n+nb}{set}~\PY{o}{(}\PY{n}{T}\PY{o}{*}\PY{n}{T}\PY{o}{*}\PY{n}{O}\PY{o}{)}~\PY{o}{:=}\nl
      ~~~~\PY{k+kr}{fun}~\PY{o}{(}\PY{n+nv}{oe}\PY{o}{:}~\PY{n}{T}\PY{o}{*}\PY{n}{T}\PY{o}{*}\PY{n}{O}\PY{o}{)}~\PY{o}{=\PYZgt{}}~\PY{k+kr}{match}~\PY{n}{oe}~\PY{k+kr}{with}~\PY{o}{|}~\PY{o}{(}\PY{n}{e}\PY{o}{,}\PY{n}{P}\PY{o}{)}~\PY{o}{=\PYZgt{}}~\PY{n}{E}~\PY{n}{e}~\PY{o}{|}~\PY{o}{(}\PY{n}{e}\PY{o}{,}\PY{n}{N}\PY{o}{)}~\PY{o}{=\PYZgt{}}~\PY{n}{E}\PY{o}{.\PYZhy{}}\PY{l+m+mi}{1}~\PY{n}{e}~\PY{k+kr}{end}\PY{o}{.}
    \end{input}
  \end{sentence}
}
\newcommand{\SnippetTCP}{
%
  \sep
  \begin{sentence}
    \begin{input}
      ~~\PY{k+kn}{Lemma}~\PY{n+nf}{TCP}\PY{o}{:}~\PY{n}{E}\PY{o}{.+}~\PY{o}{=}~\PY{o}{[}\PY{n+nb}{set}~\PY{n}{vp}\PY{o}{|}~\PY{k+kr}{\(\exists\)}~\PY{n+nv}{p}\PY{o}{,}~\PY{n}{size}\PY{o}{(}\PY{n}{p}\PY{o}{)}~\PY{o}{\PYZgt{}}~\PY{l+m+mi}{1}~\PY{o}{\(\,\wedge\,\)}~\PY{n}{Pe}~\PY{n}{ptv}~\PY{n}{p}~\PY{o}{=}~\PY{n}{vp}~\PY{o}{\(\,\wedge\,\)}~\PY{n}{p}~\PY{o}{[}\PY{n}{L}\(\in\)\PY{o}{]}~\PY{n}{E}\PY{o}{].}
    \end{input}
  \end{sentence}
}

\newcommand{\Snippetdsepnota}{
%
  \sep
  \begin{sentence}
    \begin{input}
      ~~\PY{k+kn}{Notation}~\PY{l+s+s2}{\PYZdq{}(~x~[\(\bot\)d]~y~|~W~)\PYZdq{}}~\PY{o}{:=}~\PY{o}{(}\PY{n}{D\PYZus{}separated}~\PY{n}{W}~\PY{n}{E}~\PY{n}{x}~\PY{n}{y}\PY{o}{).}
    \end{input}
  \end{sentence}
}

\newcommand{\SnippetLiftO}{
%
  \sep
  \begin{sentence}
    \begin{input}
      ~~\PY{k+kn}{Definition}~\PY{n+nf}{LiftO}~\PY{o}{(}\PY{n+nv}{st}\PY{o}{:}~\PY{n}{seq}~\PY{n}{T}\PY{o}{)}~\PY{o}{(}\PY{n+nv}{so}\PY{o}{:}~\PY{n}{seq}~\PY{n}{O}\PY{o}{)}~\PY{o}{:=}~\PY{n}{pair}~\PY{o}{(}\PY{n}{Lift}~\PY{n}{st}\PY{o}{)}~\PY{n}{so}\PY{o}{.}
    \end{input}
  \end{sentence}
}
\newcommand{\SnippetSn}{
%
  \sep
  \begin{sentence}
    \begin{input}
      ~~\PY{k+kn}{Definition}~\PY{n+nf}{Sn}~\PY{o}{(}\PY{n+nv}{n}\PY{o}{:}~\PY{n}{nat}\PY{o}{)}~\PY{o}{(}\PY{n+nv}{D}\PY{o}{:}~\PY{n+nb}{set}~\PY{n}{T}\PY{o}{):=}~\PY{o}{[}\PY{n+nb}{set}~\PY{n}{st}\PY{o}{|}~\PY{n}{st}~\([{\in}]\)~\PY{n}{D}\PY{o}{\(\,\wedge\,\)}\PY{n}{size}\PY{o}{(}\PY{n}{st}\PY{o}{)=}\PY{n}{n}\PY{o}{].}
    \end{input}
  \end{sentence}
}
\newcommand{\SnippetLift}{
%
  \sep
  \begin{sentence}
    \begin{input}
      ~~\PY{k+kn}{Fixpoint}~\PY{n+nf}{Lift}~\PY{o}{(}\PY{n+nv}{st}\PY{o}{:}~\PY{n}{seq}~\PY{n}{T}\PY{o}{):}~\PY{n}{seq}~\PY{o}{(}\PY{n}{T}\PY{o}{*}\PY{n}{T}\PY{o}{)}~\PY{o}{:=}~\nl
      ~~~~\PY{k+kr}{match}~\PY{n}{st}~\PY{k+kr}{with}~\nl
      ~~~~\PY{o}{|}~\PY{n}{x}~\PY{o}{::}~\PY{o}{[::}~\PY{n}{y}~\PY{o}{\PYZam{}}~\PY{n}{st}\PY{o}{]}~\PY{k+kr}{as}~\PY{n}{st1}~\PY{o}{=\PYZgt{}}~\PY{o}{(}\PY{n}{x}\PY{o}{,}\PY{n}{y}\PY{o}{)::(}\PY{n}{Lift}~\PY{n}{st1}\PY{o}{)}\nl
      ~~~~\PY{o}{|}~\PY{n}{\PYZus{}}~\PY{o}{=\PYZgt{}}~\PY{n}{Nil}~\PY{o}{(}\PY{n}{T}\PY{o}{*}\PY{n}{T}\PY{o}{)}\nl
      ~~~~\PY{k+kr}{end}\PY{o}{.}
    \end{input}
  \end{sentence}
}
\newcommand{\SnippetOO}{
%
  \sep
  \begin{sentence}
    \begin{input}
      ~~\PY{k+kn}{Inductive}~\PY{n+nf}{O}~\PY{o}{:=}~\PY{o}{|}~\PY{n}{P}~\PY{o}{|}~\PY{n}{N}\PY{o}{.}
    \end{input}
  \end{sentence}
}
\newcommand{\Snippetpair}{
%
  \sep
  \begin{sentence}
    \begin{input}
      ~~\PY{k+kn}{Fixpoint}~\PY{n+nf}{pair}~\PY{o}{(}\PY{n+nv}{stt}\PY{o}{:}~\PY{n}{seq}~\PY{o}{(}\PY{n}{T}\PY{o}{*}\PY{n}{T}\PY{o}{))}~\PY{o}{(}\PY{n+nv}{so}\PY{o}{:}~\PY{n}{seq}~\PY{n}{O}\PY{o}{)}~\PY{o}{:=}~\nl
      ~~~~\PY{k+kr}{match}~\PY{n}{stt}\PY{o}{,}~\PY{n}{so}~\PY{k+kr}{with}~\nl
      ~~~~\PY{o}{|}~\PY{o}{(}\PY{n}{pt}\PY{o}{)::}\PY{n}{stt}\PY{o}{,}~\PY{n}{o}\PY{o}{::}\PY{n}{so}~\PY{o}{=\PYZgt{}}~\PY{o}{(}\PY{n}{pt}\PY{o}{,}\PY{n}{o}\PY{o}{)::(}\PY{n}{pair}~\PY{n}{stt}~\PY{n}{so}\PY{o}{)}\nl
      ~~~~\PY{o}{|}~\PY{o}{(}\PY{n}{pt}\PY{o}{)::}\PY{n}{stt}\PY{o}{,}~\PY{o}{[::]}~\PY{o}{=\PYZgt{}}~~\PY{o}{(}\PY{n}{pt}\PY{o}{,}\PY{n}{P}\PY{o}{)::(}\PY{n}{pair}~\PY{n}{stt}~\PY{o}{[::])}\nl
      ~~~~\PY{o}{|}~\PY{n}{\PYZus{}}~\PY{o}{,}~\PY{n}{\PYZus{}}~\PY{o}{=\PYZgt{}}~\PY{n}{Nil}~\PY{o}{(}\PY{n}{T}\PY{o}{*}\PY{n}{T}\PY{o}{*}\PY{n}{O}\PY{o}{)}\nl
      ~~~~\PY{k+kr}{end}\PY{o}{.}
    \end{input}
  \end{sentence}
}

\newcommand{\Snippetrelation}{
%
  \sep
  \begin{sentence}
    \begin{input}
      ~~\PY{k+kn}{Definition}~\PY{n+nf}{relation}~\PY{o}{(}\PY{n+nv}{T}\PY{o}{:}~\PY{k+kt}{Type}\PY{o}{)}~\PY{o}{:=}~\PY{n+nb}{set}~\PY{o}{(}\PY{n}{T}~\PY{o}{*}~\PY{n}{T}\PY{o}{).}
    \end{input}
  \end{sentence}
}
\newcommand{\SnippetRPathequiv}{
%
  \sep
  \begin{sentence}
    \begin{input}
      ~~\PY{k+kn}{Lemma}~\PY{n+nf}{RPath\PYZus{}equiv}\PY{o}{:}~\PY{k+kr}{\(\forall\)}~\PY{o}{(}\PY{n+nv}{st}\PY{o}{:}~\PY{n}{seq}~\PY{n}{T}\PY{o}{),}~\PY{n}{st}~\PY{o}{[}\PY{n}{L}\(\in\)\PY{o}{]}~\PY{n}{R}~\PY{o}{\(\leftrightarrow\)}~\PY{n}{st}~\PY{o}{[}\PY{n}{Suc}\(\in\)\PY{o}{]}~\PY{n}{R}\PY{o}{.}
    \end{input}
  \end{sentence}
}
\newcommand{\SnippetRPath}{
%
  \sep
  \begin{sentence}
    \begin{input}
      ~~\PY{k+kn}{Notation}~\PY{l+s+s2}{\PYZdq{}s~[Suc{\(\in\)}]~R\PYZdq{}}~\PY{o}{:=}~\PY{o}{(}\PY{n}{RPath}~\PY{n}{R}~\PY{n}{s}\PY{o}{).}
    \end{input}
  \end{sentence}
}
\newcommand{\SnippetLiftSuc}{
%
  \sep
  \begin{sentence}
    \begin{input}
      ~~\PY{k+kn}{Lemma}~\PY{n+nf}{Lift\PYZus{}Suc}\PY{o}{:}~\PY{k+kr}{\(\forall\)}~\PY{o}{(}\PY{n+nv}{st}\PY{o}{:}\PY{n}{seq}~\PY{n}{T}\PY{o}{),}~\PY{o}{(}\PY{n}{Lift}~\PY{n}{st}\PY{o}{)}~\PY{o}{[}\PY{n}{Suc}\(\in\)\PY{o}{]}~\PY{n}{Chrel}\PY{o}{.}~
    \end{input}
  \end{sentence}
}
\newcommand{\SnippetRpathLone}{
%
  \sep
  \begin{sentence}
    \begin{input}
      ~~\PY{k+kn}{Lemma}~\PY{n+nf}{Rpath\PYZus{}L1}\PY{o}{:}~\PY{k+kr}{\(\forall\)}~\PY{o}{(}\PY{n+nv}{st}\PY{o}{:}~\PY{n}{seq}~\PY{n}{T}\PY{o}{),}~\PY{n}{st}~\([{\in}]\)~\PY{n}{X}~\PY{o}{\(\rightarrow\)}~\PY{n}{st}~\PY{o}{[}\PY{n}{L}\(\in\)\PY{o}{]}~\PY{o}{(}\PY{n}{X}~\PY{o}{`*`}~\PY{n}{X}\PY{o}{).}~
    \end{input}
  \end{sentence}
}
\newcommand{\SnippetDI}{
%
  \sep
  \begin{sentence}
    \begin{input}
      ~~\PY{k+kn}{Definition}~\PY{n+nf}{D}~\PY{o}{\PYZob{}}\PY{n+nv}{T}\PY{o}{:}~\PY{k+kt}{Type}\PY{o}{\PYZcb{}:=}~\PY{o}{[}\PY{n+nb}{set}~\PY{n}{st}\PY{o}{:}\PY{n}{seq}~\PY{n}{T}\PY{o}{|}~\PY{n}{size}\PY{o}{(}\PY{n}{st}\PY{o}{)}~\PY{o}{\PYZgt{}}~\PY{l+m+mi}{1}\PY{o}{].}\nl
    \end{input}
  \end{sentence}
  \sep
  \begin{sentence}
    \begin{input}
      ~~\PY{k+kn}{Definition}~\PY{n+nf}{I}~\PY{o}{\PYZob{}}\PY{n+nv}{T}\PY{o}{:}~\PY{k+kt}{Type}\PY{o}{\PYZcb{}:=}~\PY{o}{[}\PY{n+nb}{set}~\PY{n}{spt}\PY{o}{:}\PY{n}{seq}~\PY{o}{(}\PY{n}{T}\PY{o}{*}\PY{n}{T}\PY{o}{)|}~\PY{n}{size}\PY{o}{(}\PY{n}{spt}\PY{o}{)}~\PY{o}{\PYZgt{}}~\PY{l+m+mi}{0}~\PY{o}{\(\,\wedge\,\)}~\PY{n}{spt}~\PY{o}{[}\PY{n}{Suc}\(\in\)\PY{o}{]}~\PY{n}{Chrel}\PY{o}{].}
    \end{input}
  \end{sentence}
}
\newcommand{\SnippetLiftinj}{
%
  \sep
  \begin{sentence}
    \begin{input}
      ~~\PY{k+kn}{Lemma}~\PY{n+nf}{Lift\PYZus{}inj}\PY{o}{:}~\PY{k+kr}{\(\forall\)}~\PY{o}{(}\PY{n+nv}{st}~\PY{n+nv}{st\PYZsq{}}\PY{o}{:}~\PY{n}{seq}~\PY{n}{T}\PY{o}{),}~\PY{n}{st}~\(\in\)~\PY{n}{D}~\PY{o}{\(\rightarrow\)}~\PY{n}{Lift}~\PY{n}{st}~\PY{o}{=}~\PY{n}{Lift}~\PY{n}{st\PYZsq{}}~\PY{o}{\(\rightarrow\)}~\PY{n}{st}~\PY{o}{=}~\PY{n}{st\PYZsq{}}\PY{o}{.}
    \end{input}
  \end{sentence}
}
\newcommand{\SnippetLiftsurj}{
%
  \sep
  \begin{sentence}
    \begin{input}
      ~~\PY{k+kn}{Lemma}~\PY{n+nf}{Lift\PYZus{}surj}\PY{o}{:}~\PY{k+kr}{\(\forall\)}~\PY{o}{(}\PY{n+nv}{spt}\PY{o}{:}~\PY{n}{seq}~\PY{o}{(}\PY{n}{T}\PY{o}{*}\PY{n}{T}\PY{o}{)),}~\PY{n}{spt}~\(\in\)~\PY{n}{I}~\PY{o}{\(\rightarrow\)}~\PY{k+kr}{\(\exists\)}~\PY{n+nv}{st}\PY{o}{,}~\PY{n}{st}\(\in\)~\PY{n}{D}~\PY{o}{\(\,\wedge\,\)}~\PY{n}{Lift}~\PY{n}{st}\PY{o}{=}\PY{n}{spt}\PY{o}{.}~
    \end{input}
  \end{sentence}
}
\newcommand{\SnippetEpathgt}{
%
  \sep
  \begin{sentence}
    \begin{input}
      ~~\PY{k+kn}{Definition}~\PY{n+nf}{P\PYZus{}gt}~\PY{o}{(}\PY{n+nv}{n}\PY{o}{:}~\PY{n}{nat}\PY{o}{)}~\PY{o}{(}\PY{n+nv}{E}\PY{o}{:}~\PY{n}{relation}~\PY{n}{T}\PY{o}{)}~~\PY{o}{:=}~\nl
      ~~~~\PY{o}{[}\PY{n+nb}{set}~\PY{n}{spt}~\PY{o}{|}~\PY{n}{size}\PY{o}{(}\PY{n}{spt}\PY{o}{)}~\PY{o}{\PYZgt{}}~\PY{n}{n}~\PY{o}{\(\,\wedge\,\)}~\PY{n}{spt}~\([{\in}]\)~\PY{n}{E}~\PY{o}{\(\,\wedge\,\)}~\PY{n}{spt}~\PY{o}{[}\PY{n}{Suc}\(\in\)\PY{o}{]}~\PY{n}{Chrel}\PY{o}{].}
    \end{input}
  \end{sentence}
}
\newcommand{\SnippetDP}{
%
  \sep
  \begin{sentence}
    \begin{input}
      ~~\PY{k+kn}{Definition}~\PY{n+nf}{D\PYZus{}P}~\PY{o}{(}\PY{n+nv}{R}~\PY{n+nv}{E}\PY{o}{:}~\PY{n}{relation}~\PY{n}{T}\PY{o}{):=}~\nl
      ~~~~\PY{o}{[}\PY{n+nb}{set}~\PY{n}{spt}\PY{o}{|}~\PY{n}{spt}~\(\in\)~\PY{n}{I}~\PY{o}{\(\,\wedge\,\)}~\PY{n}{R}~\PY{o}{(}\PY{n}{Epe}~\PY{n}{spt}\PY{o}{)}~\PY{o}{\(\,\wedge\,\)}~\PY{n}{spt}~\([{\in}]\)~\PY{n}{E}~\PY{o}{].}
    \end{input}
  \end{sentence}
}
\newcommand{\SnippetDV}{
%
  \sep
  \begin{sentence}
    \begin{input}
      ~~\PY{k+kn}{Definition}~\PY{n+nf}{D\PYZus{}V}~\PY{o}{(}\PY{n+nv}{R}~\PY{n+nv}{E}\PY{o}{:}~\PY{n}{relation}~\PY{n}{T}\PY{o}{):=}\nl
      ~~~~\PY{o}{[}\PY{n+nb}{set}~\PY{n}{st}\PY{o}{|}~\PY{n}{st}~\(\in\)~\PY{n}{D}~\PY{o}{\(\,\wedge\,\)}~\PY{n}{R}~\PY{o}{(}\PY{n}{Pe}~\PY{n}{st}\PY{o}{)}~\PY{o}{\(\,\wedge\,\)}~\PY{n}{st}~\PY{o}{[}\PY{n}{Suc}\(\in\)\PY{o}{]}~\PY{n}{E}\PY{o}{].}
    \end{input}
  \end{sentence}
}
\newcommand{\SnippetDPDV}{
%
  \sep
  \begin{sentence}
    \begin{input}
      ~~\PY{k+kn}{Lemma}~\PY{n+nf}{DP\PYZus{}DV}\PY{o}{:}~\PY{k+kr}{\(\forall\)}~\PY{o}{(}\PY{n+nv}{R}~\PY{n+nv}{E}\PY{o}{:}~\PY{n}{relation}~\PY{n}{T}\PY{o}{),}~\PY{n}{image}~\PY{o}{(}\PY{n}{D\PYZus{}V}~\PY{n}{R}~\PY{n}{E}\PY{o}{)}~\PY{o}{(@}\PY{n}{Lift}~\PY{n}{T}\PY{o}{)}~\PY{o}{=}~\PY{o}{(}\PY{n}{D\PYZus{}P}~\PY{n}{R}~\PY{n}{E}\PY{o}{).}
    \end{input}
  \end{sentence}
}
\newcommand{\SnippetPe}{
%
  \sep
  \begin{sentence}
    \begin{input}
      ~~\PY{k+kn}{Definition}~\PY{n+nf}{Pe}~\PY{o}{(}\PY{n+nv}{st}\PY{o}{:}~\PY{n}{seq}~\PY{n}{T}\PY{o}{)}~\PY{o}{:=}~\PY{o}{(}\PY{n}{head}~\PY{n}{ptv}\PY{o}{.}\PY{l+m+mi}{1}~\PY{n}{st}\PY{o}{,}~\PY{n+nb}{last}~\PY{n}{ptv}\PY{o}{.}\PY{l+m+mi}{1}~\PY{n}{st}\PY{o}{).}
    \end{input}
  \end{sentence}
}
\newcommand{\SnippetEpe}{
%
  \sep
  \begin{sentence}
    \begin{input}
      ~~\PY{k+kn}{Definition}~\PY{n+nf}{Epe}~\PY{o}{(}\PY{n+nv}{spt}\PY{o}{:}~\PY{n}{seq}~\PY{o}{(}\PY{n}{T}\PY{o}{*}\PY{n}{T}\PY{o}{))}~\PY{o}{:=}~\PY{o}{((}\PY{n}{head}~\PY{n}{ptv}~\PY{n}{spt}\PY{o}{).}\PY{l+m+mi}{1}\PY{o}{,}~\PY{o}{(}\PY{n+nb}{last}~\PY{n}{ptv}~\PY{n}{spt}\PY{o}{).}\PY{l+m+mi}{2}\PY{o}{).}
    \end{input}
  \end{sentence}
}
\newcommand{\SnippetUgt}{
%
  \sep
  \begin{sentence}
    \begin{input}
      ~~\PY{k+kn}{Definition}~\PY{n+nf}{U\PYZus{}gt}~\PY{o}{(}\PY{n+nv}{n}\PY{o}{:}~\PY{n}{nat}\PY{o}{)}~\PY{o}{(}\PY{n+nv}{E}\PY{o}{:}~\PY{n}{relation}~\PY{n}{T}\PY{o}{):=}\nl
      ~~~~\PY{o}{[}\PY{n+nb}{set}~\PY{n}{sto}~\PY{o}{|}~\PY{n}{size}\PY{o}{(}\PY{n}{sto}\PY{o}{)}~\PY{o}{\PYZgt{}}~\PY{n}{n}~\PY{o}{\(\,\wedge\,\)}~\PY{n}{sto}~\([{\in}]\)~\PY{o}{(}\PY{n}{Oedge}~\PY{n}{E}\PY{o}{)}~\PY{o}{\(\,\wedge\,\)}~\PY{o}{(}\PY{n}{Lift}~\PY{n}{sto}\PY{o}{)}~\([{\in}]\)~\PY{n}{ChrelO}\PY{o}{].}
    \end{input}
  \end{sentence}
}
\newcommand{\SnippetChrelO}{
%
  \sep
  \begin{sentence}
    \begin{input}
      ~~\PY{k+kn}{Definition}~\PY{n+nf}{ChrelO}~\PY{o}{:=}~\PY{o}{[}\PY{n+nb}{set}~\PY{n}{ppa}\PY{o}{:}~\PY{o}{(}\PY{n}{T}\PY{o}{*}\PY{n}{T}\PY{o}{*}\PY{n}{O}\PY{o}{)*(}\PY{n}{T}\PY{o}{*}\PY{n}{T}\PY{o}{*}\PY{n}{O}\PY{o}{)}~\PY{o}{|}~\PY{o}{(}\PY{n}{ppa}\PY{o}{.}\PY{l+m+mi}{1}\PY{o}{.}\PY{l+m+mi}{1}\PY{o}{).}\PY{l+m+mi}{2}~\PY{o}{=}~\PY{o}{(}\PY{n}{ppa}\PY{o}{.}\PY{l+m+mi}{2}\PY{o}{.}\PY{l+m+mi}{1}\PY{o}{).}\PY{l+m+mi}{1}\PY{o}{].}
    \end{input}
  \end{sentence}
}
\newcommand{\SnippetEpeLift}{
%
  \sep
  \begin{sentence}
    \begin{input}
      ~~\PY{k+kn}{Lemma}~\PY{n+nf}{Epe\PYZus{}Lift}\PY{o}{:}~\PY{k+kr}{\(\forall\)}~\PY{o}{(}\PY{n+nv}{st}\PY{o}{:}\PY{n}{seq}~\PY{n}{T}\PY{o}{),}~\PY{n}{st}~\(\in\)~\PY{n}{D}~\PY{o}{\(\rightarrow\)}~\PY{n}{Epe}~\PY{o}{(}\PY{n}{Lift}~\PY{n}{st}\PY{o}{)}~\PY{o}{=}~\PY{n}{Pe}~\PY{n}{st}\PY{o}{.}
    \end{input}
  \end{sentence}
}
\newcommand{\SnippetPeUnLift}{
%
  \sep
  \begin{sentence}
    \begin{input}
      ~~\PY{k+kn}{Lemma}~\PY{n+nf}{Pe\PYZus{}UnLift}\PY{o}{:}~\PY{k+kr}{\(\forall\)}~\PY{o}{(}\PY{n+nv}{spt}\PY{o}{:}~\PY{n}{seq}~\PY{o}{(}\PY{n}{T}\PY{o}{*}\PY{n}{T}\PY{o}{)),}~\PY{n}{spt}~\(\in\)~\PY{n}{I}\PY{o}{\(\rightarrow\)}\PY{n}{Pe}~\PY{o}{(}\PY{n}{UnLift}~\PY{n}{spt}~\PY{n}{ptv}\PY{o}{.}\PY{l+m+mi}{1}\PY{o}{)=}\PY{n}{Epe}~\PY{n}{spt}\PY{o}{.}
    \end{input}
  \end{sentence}
}
\newcommand{\SnippetEopeLiftO}{
%
  \sep
  \begin{sentence}
    \begin{input}
      ~~\PY{k+kn}{Lemma}~\PY{n+nf}{Eope\PYZus{}LiftO}\PY{o}{:}~\PY{k+kr}{\(\forall\)}~\PY{o}{(}\PY{n+nv}{st}\PY{o}{:}\PY{n}{seq}~\PY{n}{T}\PY{o}{)}~\PY{o}{(}\PY{n+nv}{so}\PY{o}{:}\PY{n}{seq}~\PY{n}{O}\PY{o}{),}\nl
      ~~~~~~\PY{n}{size}\PY{o}{(}\PY{n}{st}\PY{o}{)}~\PY{o}{\PYZgt{}}~\PY{l+m+mi}{1}~\PY{o}{\(\rightarrow\)}~\PY{n}{size}~\PY{o}{(}\PY{n}{so}\PY{o}{)}~\PY{o}{=}~\PY{n}{size}~\PY{n}{st}~\PY{o}{\PYZhy{}}\PY{l+m+mi}{1}~\PY{o}{\(\rightarrow\)}~\PY{n}{Eope}~\PY{o}{(}\PY{n}{LiftO}~\PY{n}{st}~\PY{n}{so}\PY{o}{)}~\PY{o}{=}~\PY{n}{Pe}~\PY{n}{ptv}~\PY{n}{st}\PY{o}{.}
    \end{input}
  \end{sentence}
}
\newcommand{\SnippetEope}{
%
  \sep
  \begin{sentence}
    \begin{input}
      ~~\PY{k+kn}{Definition}~\PY{n+nf}{Eope}~\PY{o}{(}\PY{n+nv}{stto}~\PY{o}{:}~\PY{n}{seq}\PY{o}{(}\PY{n}{T}\PY{o}{*}\PY{n}{T}\PY{o}{*}\PY{n}{O}\PY{o}{))}~\PY{o}{:}~\PY{n}{T}\PY{o}{*}\PY{n}{T}~\PY{o}{:=}\nl
      ~~~~\PY{o}{((}\PY{n}{head}~\PY{o}{(}\PY{n}{ptv}\PY{o}{,}\PY{n}{P}\PY{o}{)}~\PY{n}{stto}\PY{o}{).}\PY{l+m+mi}{1}\PY{o}{.}\PY{l+m+mi}{1}\PY{o}{,}~\PY{o}{(}\PY{n+nb}{last}~\PY{o}{(}\PY{n}{ptv}\PY{o}{,}\PY{n}{P}\PY{o}{)}~\PY{n}{stto}\PY{o}{).}\PY{l+m+mi}{1}\PY{o}{.}\PY{l+m+mi}{2}\PY{o}{).}
    \end{input}
  \end{sentence}
}
\newcommand{\SnippetPeUnLiftO}{
%
  \sep
  \begin{sentence}
    \begin{input}
      ~~\PY{k+kn}{Lemma}~\PY{n+nf}{Pe\PYZus{}UnLiftO}\PY{o}{:}~\PY{k+kr}{\(\forall\)}~\PY{o}{(}\PY{n+nv}{stto}\PY{o}{:}~\PY{n}{seq}~\PY{o}{(}\PY{n}{T}\PY{o}{*}\PY{n}{T}\PY{o}{*}\PY{n}{O}\PY{o}{)),}~\nl
      ~~~~~~\PY{n}{size}\PY{o}{(}\PY{n}{stto}\PY{o}{)}~\PY{o}{\PYZgt{}}~\PY{l+m+mi}{0}~\PY{o}{\(\rightarrow\)}~\PY{n}{stto}~\PY{o}{[}\PY{n}{Suc}\(\in\)\PY{o}{]}~\PY{n}{ChrelO}~\PY{o}{\(\rightarrow\)}~\nl
      ~~~~~~\PY{o}{(}\PY{n}{Pe}~\PY{n}{ptv}~\PY{o}{(}\PY{n}{UnLiftO}~\PY{n}{stto}~\PY{n}{ptv}\PY{o}{.}\PY{l+m+mi}{1}\PY{o}{).}\PY{l+m+mi}{1}\PY{o}{)}~\PY{o}{=}~~\PY{n}{Eope}~\PY{n}{stto}\PY{o}{.}
    \end{input}
  \end{sentence}
}
\newcommand{\SnippetActiveeq}{
%
  \sep
  \begin{sentence}
    \begin{input}
      ~~\PY{k+kn}{Lemma}~\PY{n+nf}{Active\PYZus{}eq}\PY{o}{:}~\PY{k+kr}{\(\forall\)}~\PY{o}{(}\PY{n+nv}{E}\PY{o}{:}~\PY{n}{relation}~\PY{n}{T}\PY{o}{)}~\PY{o}{(}\PY{n+nv}{W}\PY{o}{:}~\PY{n+nb}{set}~\PY{n}{T}\PY{o}{)}~\PY{o}{(}\PY{n+nv}{x}~\PY{n+nv}{y}\PY{o}{:}\PY{n}{T}\PY{o}{)}~\PY{n+nv}{stto}\PY{o}{,}\nl
      ~~~~~~\PY{o}{((}\PY{n}{x}\PY{o}{=}\PY{n}{y}~\PY{o}{\(\,\wedge\,\)}~\PY{n}{stto}~\PY{o}{=}~\PY{o}{[::])}~\PY{o}{\(\,\vee\,\)}~~\PY{n}{stto}~\(\in\)~\PY{o}{(}\PY{n}{D\PYZus{}U\PYZus{}a1}~\PY{n}{E}~\PY{n}{W}~\PY{n}{x}~\PY{n}{y}\PY{o}{))}\nl
      ~~~~~~\PY{o}{\(\leftrightarrow\)}~\PY{n}{Active\PYZus{}path}~\PY{n}{W}~\PY{n}{E}~\PY{n}{stto}~\PY{n}{x}~\PY{n}{y}\PY{o}{.}
    \end{input}
  \end{sentence}
}
\newcommand{\SnippetallL}{
%
  \sep
  \begin{sentence}
    \begin{input}
      ~~\PY{k+kn}{Definition}~\PY{n+nf}{allL}~\PY{o}{(}\PY{n+nv}{R}\PY{o}{:}~\PY{n}{relation}~\PY{n}{T}\PY{o}{)}~\PY{n+nv}{st}~\PY{n+nv}{x}~\PY{n+nv}{y}~\PY{o}{:=}~\PY{o}{(}\PY{n}{x}\PY{o}{::(}\PY{n}{rcons}~\PY{n}{st}~\PY{n}{y}\PY{o}{))}~\PY{o}{[}\PY{n}{L}\(\in\)\PY{o}{]}~\PY{n}{R}\PY{o}{.}
    \end{input}
  \end{sentence}
}
\newcommand{\SnippetAtr}{
%
  \sep
  \begin{sentence}
    \begin{input}
      ~~\PY{k+kn}{Definition}~\PY{n+nf}{A\PYZus{}tr}~\PY{o}{(}\PY{n+nv}{W}\PY{o}{:}~\PY{n+nb}{set}~\PY{n}{T}\PY{o}{)}~\PY{o}{(}\PY{n+nv}{E}\PY{o}{:}~\PY{n}{relation}~\PY{n}{T}\PY{o}{)}~\PY{o}{:=}~\PY{n}{ChrelO}~\PY{o}{\(\cap\)}~\nl
      ~~~~\PY{o}{[}\PY{n+nb}{set}~\PY{n}{oe}~\PY{o}{:}~\PY{o}{(}\PY{n}{T}\PY{o}{*}\PY{n}{T}\PY{o}{*}\PY{n}{O}\PY{o}{)}~\PY{o}{*}~\PY{o}{(}\PY{n}{T}\PY{o}{*}\PY{n}{T}\PY{o}{*}\PY{n}{O}\PY{o}{)|}~\PY{k+kr}{match}~\PY{o}{(}\PY{n}{oe}\PY{o}{.}\PY{l+m+mi}{1}\PY{o}{.}\PY{l+m+mi}{2}\PY{o}{,}\PY{n}{oe}\PY{o}{.}\PY{l+m+mi}{2}\PY{o}{.}\PY{l+m+mi}{2}\PY{o}{,}~\PY{n}{oe}\PY{o}{.}\PY{l+m+mi}{1}\PY{o}{.}\PY{l+m+mi}{1}\PY{o}{.}\PY{l+m+mi}{2}\PY{o}{)}~\PY{k+kr}{with}~\nl
      ~~~~~~\PY{o}{|}~\PY{o}{(}\PY{n}{P}\PY{o}{,}\PY{n}{P}\PY{o}{,}\PY{n}{v}\PY{o}{)}~\PY{o}{=\PYZgt{}}~\PY{n}{W}\PY{o}{.\PYZca{}}\PY{n}{c}~\PY{n}{v}~\PY{o}{|}~\PY{o}{(}\PY{n}{N}\PY{o}{,}\PY{n}{N}\PY{o}{,}\PY{n}{v}\PY{o}{)}~\PY{o}{=\PYZgt{}}~\PY{n}{W}\PY{o}{.\PYZca{}}\PY{n}{c}~\PY{n}{v}~\PY{o}{|}~\PY{o}{(}\PY{n}{N}\PY{o}{,}\PY{n}{P}\PY{o}{,}\PY{n}{v}\PY{o}{)}~\PY{o}{=\PYZgt{}}~\PY{n}{W}\PY{o}{.\PYZca{}}\PY{n}{c}~\PY{n}{v}\nl
      ~~~~~~\PY{o}{|}~\PY{o}{(}\PY{n}{P}\PY{o}{,}\PY{n}{N}\PY{o}{,}\PY{n}{v}\PY{o}{)}~\PY{o}{=\PYZgt{}}~\PY{o}{(}\PY{n}{Fset}~\PY{n}{E}\PY{o}{.*}~\PY{n}{W}\PY{o}{)}~\PY{n}{v}~\PY{k+kr}{end}\PY{o}{].}
    \end{input}
  \end{sentence}
}
\newcommand{\SnippetAw}{
%
  \sep
  \begin{sentence}
    \begin{input}
      ~~\PY{k+kn}{Definition}~\PY{n+nf}{Em}~\PY{o}{:=}~\PY{n}{E}\PY{o}{.\PYZhy{}}\PY{l+m+mi}{1}\PY{o}{.}\nl
    \end{input}
  \end{sentence}
  \sep
  \begin{sentence}
    \begin{input}
      ~~\PY{k+kn}{Definition}~\PY{n+nf}{Ew}~\PY{o}{:=}~\PY{n}{\(\Delta\)\PYZus{}}\PY{o}{(}\PY{n}{W}\PY{o}{.\PYZca{}}\PY{n}{c}\PY{o}{)}\PY{o}{;}\PY{n}{E}\PY{o}{.}\nl
    \end{input}
  \end{sentence}
  \sep
  \begin{sentence}
    \begin{input}
      ~~\PY{k+kn}{Definition}~\PY{n+nf}{Bw}~\PY{o}{:=}~\PY{n}{E}\PY{o}{;}\PY{n}{Ew}\PY{o}{.*}~\PY{o}{.}\nl
    \end{input}
  \end{sentence}
  \sep
  \begin{sentence}
    \begin{input}
      ~~\PY{k+kn}{Definition}~\PY{n+nf}{Emw}~\PY{o}{:=}~\PY{n}{Ew}\PY{o}{.\PYZhy{}}\PY{l+m+mi}{1}\PY{o}{.}~\nl
    \end{input}
  \end{sentence}
  \sep
  \begin{sentence}
    \begin{input}
      ~~\PY{k+kn}{Definition}~\PY{n+nf}{Bmw}~\PY{o}{:=}~\PY{n}{Bw}\PY{o}{.\PYZhy{}}\PY{l+m+mi}{1}\PY{o}{.}\nl
    \end{input}
  \end{sentence}
  \sep
  \begin{sentence}
    \begin{input}
      ~~\PY{k+kn}{Definition}~\PY{n+nf}{Kw}~\PY{o}{:=}~\PY{o}{(}\PY{n}{Bmw}\PY{o}{;}\PY{n}{\(\Delta\)\PYZus{}}\PY{o}{(}\PY{n}{W}\PY{o}{.\PYZca{}}\PY{n}{c}\PY{o}{)}\PY{o}{;}\PY{n}{Bw}\PY{o}{).}\nl
    \end{input}
  \end{sentence}
  \sep
  \begin{txt}
    ~~\nl
  \end{txt}
  \sep
  \begin{sentence}
    \begin{input}
      ~~\PY{k+kn}{Definition}~\PY{n+nf}{DKD}~\PY{o}{:=}~\PY{o}{(}~\PY{n}{\(\Delta\)\PYZus{}}\PY{o}{(}\PY{n}{W}\PY{o}{)}\PY{o}{;}\PY{n}{Kw}\PY{o}{;}~\PY{n}{\(\Delta\)\PYZus{}}\PY{o}{(}\PY{n}{W}\PY{o}{)).}\nl
    \end{input}
  \end{sentence}
  \sep
  \begin{sentence}
    \begin{input}
      ~~\PY{k+kn}{Definition}~\PY{n+nf}{Cw}~\PY{o}{:=}~\PY{o}{((}\PY{n}{DKD}\PY{o}{).+)}~\PY{o}{\(\cup\)}~\PY{n}{\(\Delta\)\PYZus{}}\PY{o}{(}\PY{n}{W}\PY{o}{).}\nl
    \end{input}
  \end{sentence}
  \sep
  \begin{sentence}
    \begin{input}
      ~~\PY{k+kn}{Definition}~\PY{n+nf}{Dw}~\PY{o}{:=}~\PY{o}{(}\PY{n}{Bw}~\PY{o}{\(\cup\)}~\PY{n}{Kw}\PY{o}{)}\PY{o}{;}\PY{o}{(}\PY{n}{Cw}\PY{o}{;}\PY{o}{(}\PY{n}{Bmw}~\PY{o}{\(\cup\)}~\PY{n}{Kw}\PY{o}{)).}\nl
    \end{input}
  \end{sentence}
  \sep
  \begin{sentence}
    \begin{input}
      ~~\PY{k+kn}{Definition}~\PY{n+nf}{Aw}~\PY{o}{:=}~\PY{o}{\PYZsq{}}\PY{n}{\(\Delta\)}~\PY{o}{\(\cup\)}~\PY{n}{Bw}~\PY{o}{\(\cup\)}~\PY{n}{Bmw}~\PY{o}{\(\cup\)}~\PY{n}{Kw}~\PY{o}{\(\cup\)}~\PY{n}{Dw}\PY{o}{.}
    \end{input}
  \end{sentence}
}
\newcommand{\Snippettheoremfive}{
%
  \sep
  \begin{sentence}
    \begin{input}
      ~~\PY{k+kn}{Theorem}~\PY{n+nf}{Th5}\PY{o}{:}~\PY{k+kr}{\(\forall\)}~\PY{o}{(}\PY{n+nv}{x}~\PY{n+nv}{y}\PY{o}{:}~\PY{n}{T}\PY{o}{),}~\PY{o}{(}~\PY{n}{x}~\PY{o}{[\(\bot\)}\PY{n}{d}\PY{o}{]}~\PY{n}{y}~\PY{o}{|}~\PY{n}{W}~\PY{o}{)}~\PY{o}{\(\leftrightarrow\)}~\PY{o}{\PYZti{}}~\PY{n}{Aw}~\PY{o}{(}\PY{n}{x}\PY{o}{,}\PY{n}{y}\PY{o}{).}
    \end{input}
  \end{sentence}
}

\usepackage{framed}
\usepackage{graphicx}
\graphicspath{{FIGURES/}}


\newcommand{\myparagraph}[1]{{\medskip\noindent\bf #1}}

\newcommand{\pinX}{\PY{n}{p}~\PY{n}{$[\in]$}~\PY{n}{X}}
\newcommand{\Liftp}{\PY{n}{Lift}~\PY{n}{p}}
\newcommand{\pSucR}{\PY{n}{p}~\PY{o}{$[\text{Suc}{\in}]$}~\PY{n}{R}}
\newcommand{\Liftpin}{\PY{n}{p}~\PY{o}{$[\text{L}{\in}]$}~\PY{n}{R}}

\newlength{\leftbarwidth}
\setlength{\leftbarwidth}{1pt}
\newlength{\leftbarsep}
\setlength{\leftbarsep}{5pt}

\colorlet{leftbarcolor}{red}

\renewenvironment{leftbar}{%
    \MakeFramed {\advance \hsize -\width \FrameRestore }%
}{%
    \endMakeFramed
}

\usepackage{amsmath}

\title{Conditional Separation as a Binary Relation. \\ A Coq Assisted Proof}

\author{%
  Jean-Philippe Chancelier$^\dagger$,
  Michel De Lara\footnote{CERMICS, Ecole des Ponts, Marne-la-Vall\'ee, France},
  Benjamin Heymann\footnote{Criteo AI Lab, Paris, France}
}

\date{\today}

\begin{document}

\maketitle

\begin{abstract}
  The concept of d-separation holds a pivotal role in causality theory, serving
  as a fundamental tool for deriving conditional independence properties from
  causal graphs. Pearl defined the d-separation of two subsets conditionally on
  a third one. In this study, we present a novel perspective by showing
  i) how the d-separation can be extended beyond acyclic graphs,
  possibly infinite, and ii) how it can be
  expressed and characterized as a binary relation between vertices. Compared
  to the typical perspectives in causality theory, our equivalence opens the
  door to more compact and computational proofing techniques, because the
  language of binary relations is well adapted to equational reasoning.
Additionally, and of independent interest, the proofs
  of the results presented in this paper are checked with the Coq proof
  assistant.
\end{abstract}



\section{Introduction} 

In an era increasingly driven by data-informed decision-making, the significance
of causal inference has grown substantially across applied sciences, statistics,
and machine learning.  Pioneering this field, Pearl's seminal work
\citep{pearl1995causal,pearl2018book} leverages graphical models
\citep{cowell2006probabilistic} to introduce the do-calculus and the concept of
d-separation on directed acyclic graphs (DAGs). This concept plays a pivotal role in causality theory by providing a tool for
deducing conditional independence properties from causal graphs.

This study introduces a novel perspective by handling graphs
as binary relations --- hence what we call graph is a directed simple graph
permitting loops in graph theory, and we allow for infinite such graphs ---
and, from there, move in two successive directions.
First, we extend the d-separation beyond acyclic graphs,
to general, possibly infinite, graphs.
Second, we characterize the d-separation
property as a binary relation among the vertices of the graph.
Compared to the
typical perspectives in causality theory, our equivalence opens the door to more
compact and computational proofing techniques, because the language of binary
relations is well adapted to equational reasoning.  Additionally, and of
independent interest, the proofs of the results presented in this paper are
checked with the Coq proof assistant.  Last but not least, the
characterization presented in this work serves as a building block for two
concurrently developed works.\footnote{The mathematical side of the present
  paper was written in~\cite{Chancelier-De-Lara-Heymann-2021} in parallel to two
  other papers
  \citep{De-Lara-Chancelier-Heymann-2021,Heymann-De-Lara-Chancelier-2021}, all
  of which aimed at providing another perspective on conditional independence
  (and do-calculus). The first paper~\cite{Chancelier-De-Lara-Heymann-2021} was
  a prerequisite for \citep{De-Lara-Chancelier-Heymann-2021} and
  both~\cite{Chancelier-De-Lara-Heymann-2021,De-Lara-Chancelier-Heymann-2021}
  were a prerequisite for~\citep{Heymann-De-Lara-Chancelier-2021}.  In order to
  facilitate the reading
  of~\cite{Chancelier-De-Lara-Heymann-2021,De-Lara-Chancelier-Heymann-2021},
  which were quite long and technical, we have implemented Coq proofs for them.
  The aim of the present paper is to provide a version of the mathematical
  results of the preprint~\cite{Chancelier-De-Lara-Heymann-2021} complemented
  with the description of the Coq formalization used for the proofs.  The aim is
  thus twofold, as it gives the proof of yet unpublished results together with
  their Coq assisted proof.}

As far as we know, this is the first attempt to formalize d-separation in a
proof assistant.  However, some works can be found in the literature on
probabilistic conditional independence and proof assistants.  For example,
in~\cite{Affeldt-et-al:2020}, the authors introduce a formalism for reasoning
with conditional probabilities and joint distributions in
Coq. In~\cite{Yamaguchi-et-al:2016}, the authors introduce a formalism for
reasoning on probabilistic conditional independence (PCI) in Coq based on
universal algebraic structure suitable for studying PCI relations called cains
(derived from \emph{ca}usal \emph{in}ference) and developed
in~\cite{Wang:2010}. What we present in this work could be a starting point to
make a link between~\cite{Wang:2010} and Pearl's d-separation in Coq.  The Coq
code developed by J.P. Chancelier for proving the results exposed in this paper
is publicly available on GitHub\footnote{at URL
  \texttt{https://github.com/jpc-cermics/relations.git}} and counts around
$7000$ lines of code using
Mathcomp/SSReflect~\cite{MathComp:2022,Gonthier-Mahboubi-Tassi:2016}.

The paper is organized as follows. In
Sect.~\ref{Conditional_active_relation_induced_by_endpoints_of_non-blocking_paths},
we revisit graphs as binary relations and define \undirectedEdgePaths;
then, we present our extended definitions of active \undirectedEdgePaths\ and of d-separation.
In
Sect.~\ref{Characterization__of_the_conditional_directional_separation_binary_relation},
we state and sketch the proof of our main result, the characterization of the
d-separation relation as the complementary of the conditional active relation.
The main body of the proof is to be found in the Appendices, which follow its sketch.
In Appendix~\ref{Comments_on_Coq_in_the_appendices_proofs}, we comment on how Coq is used
in parallel to mathematical proofs.
In Appendix~\ref{appendix_The_star_conditional_active_relation}, we show that
the conditional active relation can be replaced by 
the star conditional active relation in the statement of our main result.
In Appendix~\ref{Proof_of_Theorem_implies}, we show that
the star conditional active relation is included in
the complementary of the d-separation relation.
In Appendix~\ref{Proof_of_Theorem_isimplied}, we show the reverse inclusion.

\section{A formal Pearl's d-separation definition}
\label{Conditional_active_relation_induced_by_endpoints_of_non-blocking_paths}

In~\S\ref{Binary_relations_and_graphs}, we deal with graphs but using the
concepts of binary relations.
In~\S\ref{UndirectedEdgePaths_in_a_graph}, we formally define what we call
\undirectedEdgePaths\ in a graph and discuss the Coq implementation used to
formalize \undirectedEdgePaths.  Thus equipped,
in~\S\ref{Definition_of_active_undirectedEdgePaths_and_of_conditional_directional_separation},
we formally adapt Pearl's definition of active (and blocked)
\undirectedEdgePaths\ in a graph, from which we deduce the (conditional)
d-separation binary relation.

\subsection{Binary relations and graphs}
\label{Binary_relations_and_graphs}

We employ the vocabulary and concepts both of binary relations and of graph
theory.
We denote by~$\NN$ the set of natural numbers (including zero),
and \( \NN^* =\NN\setminus\na{0} \).
We use the notation $\ic{r,s}=$ $\{$ $r$, $r+1$, $\ldots$, $s-1$, $s\}$ for two natural numbers $r \leq s$.

In~\S\ref{defs:binary-rel}, we provide background on binary relations.
In~\S\ref{sec:sequences}, we list tools that will be useful
to navigate between vertices, edges and pair of edges representations in a
graph,
as defined in~\S\ref{Graphs_as_binary_relations}.

\subsubsection{Background on binary relations}
\label{defs:binary-rel}

Let $\AGENT$ be a nonempty set (finite or not).  We recall that a \emph{(binary)
  relation}~$\relation$ on~$\AGENT$ is a subset
$\relation \subset \AGENT\times\AGENT $ and that
\( \bgent\, \relation\, \cgent \) means \( \np{\bgent,\cgent} \in \relation \).
For any subset \( \Bgent \subset \AGENT \), the \emph{(sub)diagonal relation} is
\( \Delta_{\Bgent} = \bset{ \np{\bgent,\cgent} \in \AGENT\times\AGENT }%
{ \bgent=\cgent \in \Bgent } \) and the \emph{diagonal relation} is
\( \Delta=\Delta_{\AGENT} \). A relation is \emph{reflexive} if \(\Delta \subset \relation\).
A \emph{foreset} of a relation~$\relation$ is any set of the form
\( \relation \, \cgent = \defset{ \bgent \in \AGENT }{ \bgent\, \relation \,
  \cgent } \), where \( \cgent \in \AGENT \), or, by extension, of the form
\( \relation \, \Cgent = \defset{ \bgent \in \AGENT }{ \exists \cgent \in \Cgent \eqsepv
  \bgent\, \relation \, \cgent } \), where \( \Cgent \subset \AGENT \).
An \emph{afterset} of a relation~$\relation$ is
any set of the form \( \bgent \, \relation = 
\defset{ \cgent \in  \AGENT }{ \bgent\, \relation \, \cgent } \),
where \( \bgent \in \AGENT \), 
or, by extension, of the form \( \Bgent \, \relation = 
\defset{ \cgent \in  \AGENT }{ \exists \bgent \in \Bgent \eqsepv \bgent\,
  \relation \, \cgent } \), where \( \Bgent \subset \AGENT \).
The \emph{opposite} or \emph{complementary~$\Complementary{\relation}$} of a binary
relation~$\relation$ is the relation~$\Complementary{\relation}=\AGENT\times\AGENT\setminus\relation$,
that is, defined by \( \bgent\, \relation^{\mathsf{c}} \, \cgent \iff 
\neg \np{ \bgent\, \relation \, \cgent } \).
The \emph{converse~$\Converse{\relation}$} of a binary relation~$\relation$ is
defined by \( \bgent\, \Converse{\relation} \, \cgent \iff \cgent\, \relation \, \bgent
\) (and $\relation$ is  \emph{symmetric} if \( \Converse{\relation}=\relation \)).
The \emph{composition}
$\relation\relation'$ of two
binary relations~$\relation, \relation'$ on~$\AGENT$ is defined by
\( \bgent (\relation\relation') \cgent \iff
\exists \delta \in  \AGENT \), \( \bgent\, \relation \, \delta \)
and \( \delta\, \relation' \, \cgent \);
then, by induction we define\footnote{%
  In what follows, when we consider a binary relation
  as a subset $\relation \subset \AGENT\times\AGENT $, we will use the notation
  \( \SetProd{\relation}{n} \subset \SetProd{\AGENT\times\AGENT}{n}\), where $n$ is
  a positive integer, to
  denote a product subset of the product set~\( \AGENT^{2n} \),
  thus making the distinction with the binary relation
  $\relation^{n} \subset \AGENT\times\AGENT $ obtained by $n$~compositions.}
\( \relation^{n+1}=\relation\relation^{n} \) for \( n \in \NN^* \). 
The \emph{transitive closure} of a binary relation~$\relation$ is
\( \TransitiveClosure{\relation} = \cup_{k=1}^{\infty} \relation^{k} \)
(and $\relation$ is  \emph{transitive} if \( \TransitiveClosure{\relation}=\relation \))
and the \emph{reflexive and transitive closure} is 
\( \TransitiveReflexiveClosure{\relation}= \TransitiveClosure{\relation} \cup
\Delta = \cup_{k=0}^{\infty} \relation^{k} \) with the convention $\relation^0=\Delta$. 
A \emph{partial equivalence relation} is a symmetric and transitive binary
relation (generally denoted by~$\sim$ or~$\equiv$).
An \emph{equivalence relation} is a reflexive, symmetric and transitive binary
relation.
\bigskip

\begin{leftbar}
  Binary relations are implemented as sets on a product space using the classical sets implemented in
  \texttt{classical\_sets.v} from the Coq mathcomp library~\cite{MathComp:2022}
  using SSReflect tactics~\cite{Gonthier-Mahboubi-Tassi:2016}
\end{leftbar}
\begin{alectryon}
  {\small \Snippetrelation{.\footnote{At the end of a Coq statement, ended by a
        dot belonging to the Vernacular (the language of Coq commands)
        we add a dot or a comma which serve as text punctuation.}}}
\end{alectryon}
\begin{leftbar}
  \noindent As described in more details below, we have developed a library for relations taking into account all the
  definitions recalled at the beginning of~\S\ref{defs:binary-rel}.
\end{leftbar}

\subsubsection{Sequences, sets and binary relations}
\label{sec:sequences}
\newcommand{\fullset}{T}

Being an active extended oriented path, as defined later, involves mixed
properties of vertices, oriented edges and successive pairs of oriented edges
path. This is why we need to develop tools that permit to navigate between
vertices, edges and pair of edges representations. This part is devoted to list
these tools and some of their properties.  To formalize graph paths, we use
sequences (as defined in mathcomp \texttt{seq.v}) combined with set
formalization (defined in mathcomp \texttt{classical\_sets.v}).

\myparagraph{$\bullet$ \protect\pinX.} We consider a set $\fullset$ and, for any
subset ${\Set} \subset \fullset$ and $n\in \NN$, we denote by
$\Sequence_n(\Set)=\SetProd{\Set}{n}$ the set of sequences of length~$n$ of
elements of the set~$\Set$ ($\Sequence_0(\Set)$ being the singleton set with the
empty sequence) and by $\Sequence_{\ge n}(\Set)$ the set of finite sequences of
length greater than or equal to $n$ of elements of the set $\Set$, that is, the
disjoint union\footnote{%
  The symbol~$\sqcup$ stands for a disjoint union.}
$\sqcup_{k\ge n} \Sequence_k(\Set)$. The largest set $\Sequence_{\ge 0}(\Set)$ will be
denoted by $\Sequence(\Set)$:
\begin{equation}
  \Sequence_{\ge n}(\Set)=\bigsqcup_{k\ge n} \Sequence_k(\Set) \mtext{ and }
  \Sequence(\Set)=  \Sequence_{\ge 0}(\Set)
  \eqfinp
  \label{eq:set_of_sequences}
\end{equation}
\medskip

\begin{leftbar} The sets $\Sequence_{\ge n}(\fullset)$ and
  $\Sequence_{n}(\fullset)$ are formalized in Coq in the mathcomp library as
  sequences of elements of type \texttt{T} and the restriction to elements in a
  subset \texttt{(\PY{n+nv}{D}: set T)} is obtained using the function
  \texttt{all} (in mathcomp library~\texttt{seq.v}).  As an example, the set
  $\Sequence_{n}(D)$ is implemented as follows
  \begin{alectryon}
    {\small \Snippetallnotation}\\
    {\small \SnippetSn}.
  \end{alectryon}
\end{leftbar}

\myparagraph{$\bullet$ (\protect\Liftp) and (\protect\Liftpin).}  Then, we define a
lift operator
\[ \Lift: \Sequence(\fullset) \to \Sequence(\fullset{\times}\fullset)
  \eqfinv\] such that
\begin{itemize}
\item 
  for all $n \ge 2$, the restriction of the operator $\Lift$ to the set
  $\Sequence_{n}(\fullset)$ coincides with the following mapping
  \(\Lift_n: \Sequence_{n}(\fullset)\to \Sequence_{n{-}1}(\fullset{\times}\fullset)\),
  given by
  \begin{align}
    \forall (\vertex_1,\dots, \vertex_n)\in \SetProd{\fullset}{n}
    \eqsepv
    \Lift_n(\vertex_1,\dots, \vertex_n) = \bp{\np{\vertex_1, \vertex_2},\np{\vertex_2, \vertex_3},\dots,
    \np{\vertex_{n-1},\vertex_n}}
    \eqfinv
  \end{align}
  transforming a sequence of elements of $\fullset$ of length $n$
  into a sequence of oriented pairs in $\fullset{\times}\fullset$ of length $n{-}1$,
\item 
  the restriction of the operator $\Lift$ on
  $\Sequence_{0}(\fullset) \cup\Sequence_{1}(\fullset)$ is the constant mapping
  giving the empty list on $\fullset{\times}\fullset$.
\end{itemize}

\begin{leftbar} The lift operator $\Lift$ is implemented in Coq as a recursive
  mapping denoted \texttt{Lift}:
  \begin{alectryon}  {\small \SnippetLift.}\end{alectryon}
  We note that, thanks to Coq polymorphism, the lift mapping is parameterized by
  a type and thus can be used to lift a sequence of vertices into a sequences of
  edges, but also to lift a sequence of edges (in $\fullset{\times}\fullset$) into a
  sequence of ordered pairs of edges (in $(\fullset{\times}\fullset)^2$).
\end{leftbar}

The notation $\protect\Liftpin$ is used to denote the
expression {\PY{n}{(Lift p)}~\PY{o}{$[{\in}]$}~\PY{n}{R}}.

\myparagraph{$\bullet$ \protect\pSucR.} We must be able to check that successive
elements of a sequence whose elements are in $\fullset$ belong to a given subset
$R$ of $\fullset{\times}\fullset$, that is, satisfy a relation $R$ on $\fullset$.

\begin{leftbar} This is easily implemented with the help of an inductive
  predicate (\texttt{RPath} in \texttt{seq1.v})
  \begin{alectryon}{\small \SnippetRPath,}\end{alectryon}
  \noindent that we do not detail here as we prove that it can be equivalently implemented with the Lift mapping
  (which enables more computational proofs) as we have
  \begin{alectryon}{\small \SnippetRPathequiv.}\end{alectryon}
\end{leftbar}

As a first example, consider the (chain) relation $C_{\textsc{H}}$ -- denoted by
\texttt{Chrel} in Coq -- defined by
$((v_1, v_2)\, C_{\textsc{H}}\, (v_3, v_4) \iff v_2 = v_3)$, on the product set
$T{\times}T$
\begin{alectryon}{\small \SnippetChrel.}\end{alectryon}
Now, the fact that lifted sequences are well chained sequences can be stated as proving the following Coq Lemma
\begin{alectryon}{\small \SnippetLiftSuc.}\end{alectryon}

As a second example, if the elements of a sequence belong to a set $X$, then the
elements of the lifted sequence belong to the product relation $X{\times}X$ as proved
in the following lemma
\begin{alectryon}{\small \SnippetRpathLone.}\end{alectryon}

\myparagraph{$\bullet$ \texttt{Lift} bijection.}
The lift operation, when restricted to the subset \texttt{D} defined below, is
bijective onto its image \texttt{I}
\begin{leftbar}
  \begin{alectryon}
    {\small \SnippetDI\\\SnippetLiftinj\\\SnippetLiftsurj.}
  \end{alectryon}
\end{leftbar}
\noindent Moreover, the inverse of \texttt{Lift} is explicitely obtained by a recursive mapping \texttt{UnLift} (not detailed here).

\subsubsection{Graphs as binary relations}
\label{Graphs_as_binary_relations}

Let $\VERTEX$ be a nonempty set (finite or not), whose elements are called
\emph{vertices}.  Let \( \EDGE \subset \VERTEX\times\VERTEX \) be a relation
on~$\VERTEX$, whose elements are ordered pairs (that is, couples) of vertices
called \emph{edges}.  The first element of an edge is the \emph{tail of the
  edge}, whereas the second one is the \emph{head of the edge}.  Both tail and
head are called \emph{endpoints} of the edge, and we say that the edge connects
its endpoints.  We define a \emph{loop} 
as an element of \( \Delta \cap \EDGE \), that is, a loop is an edge that connects a
vertex to itself.

A \emph{graph}, as we use it throughout this paper, is a
couple~$(\VERTEX,\EDGE)$.  This definition is basic, and we now stress
proximities and differences with classic notions in graph theory.  As we define
a graph, it may hold a finite or infinite number of vertices; there is at most
one edge that has a couple of ordered vertices as single endpoints, hence a
graph (in our sense) is not a multigraph (in graph theory); loops are not
excluded (since we do not impose $\Delta \cap \EDGE=\emptyset$).  Hence, what we call a graph
would be called a directed simple graph permitting loops in graph theory.

To define blocked and active \undirectedEdgePaths\ -- an essential notion in
causal inference -- relative to the graph~\( \npOrientedGraph \), we need to fix
additional vocabulary and notation.  In the graph~$(\VERTEX,\EDGE)$, the
\emph{undirected edges} are the elements of $\EDGE \cap \Converse{\EDGE}$ --- that
is, edges with both $(\cgent,\bgent)\in \EDGE$ and $(\bgent, \cgent)\in \EDGE$
(hence, including loops).  Then, the graph~$(\VERTEX,\EDGE)$ is said to be
\emph{undirected} if all edges are undirected edges, or, equivalently, if
\( \EDGE=\EDGE \cap \Converse{\EDGE} \) or if \( \Converse{\EDGE} = \EDGE \).  The
\emph{undirected extension} of a graph~$(\VERTEX,\EDGE)$ is the
graph~$(\VERTEX,\EDGE \cup \Converse{\EDGE})$.

In the graph~$(\VERTEX,\EDGE)$, the \emph{directed edges} are the elements of
$\EDGE \cap \npComplementary{ \Converse{\EDGE} }$ --- that is, edges with
$(\cgent,\bgent)\in \EDGE$ such that $(\bgent, \cgent)\not\in \EDGE$ (recall that we
do not assume that $\EDGE \cap \Converse{\EDGE}=\emptyset$).  Then, the
graph~$(\VERTEX,\EDGE)$ is said to be \emph{directed} if all edges are directed
edges, or, equivalently, if \( \EDGE \cap \Converse{\EDGE}=\emptyset \), that is, when no
two edges have the same endpoints.  \medskip

\begin{leftbar} 
  A graph $(\VERTEX,\EDGE)$ is given in Coq by an oriented pair composed of a
  type \texttt{(T: Type)} and a relation on \texttt{T}, that is
  $(\EDGE:\texttt{relation T})$ (which is equivalent to a set declaration
  $(\EDGE:\texttt{set T*T})$).  Thus, the (classical) set definition of mathcomp
  analysis \texttt{classical\_sets.v} is used to formalize a graph.  We have not
  used the Coq package \texttt{graph-theory} to formalize graph as we did not
  want to stick to finite graphs.
\end{leftbar}

\subsection{\UndirectedEdgePaths\ in a graph}
\label{UndirectedEdgePaths_in_a_graph}

In graph theory, one finds the notions of path, chain and walk.  To avoid
ambiguities, we formally define
in~\S\ref{Edge_paths_in_a_(directed_simple)_graph_(permitting_loops)} an
\EdgePath\ in a graph --- in our sense, that is, a (directed simple) graph
(permitting loops) --- as the classical notion of path in a graph~\citep{Diestel}.
Then,
in~\S\ref{Extended-oriented_paths_in_a_(directed_simple)_graph_(permitting_loops)},
we define an \undirectedEdgePath\ in a graph as what corresponds to a chain path
in~\citep{Lauritzen-et-al-1990}.
We consider a graph~$(\VERTEX,\EDGE)$ as defined
in~\S\ref{Graphs_as_binary_relations}, that is, a (directed simple) graph
(permitting loops).

\subsubsection{Edge paths in a graph} 
\label{Edge_paths_in_a_(directed_simple)_graph_(permitting_loops)}

After defining \EdgePaths\ and their endpoints, we introduce deployments in
\EdgePaths\ 
(see the summary Table~\ref{tab:Notions_for_EdgePath_s_in_a_graph}).

\myparagraph{$\bullet$ Definition of \EdgePath s.}
We define the set of
\emph{\EdgePaths\ of length~$n$} ($n\ge 1$), relative to the
graph~$\npOrientedGraph$, by
\begin{subequations}
  \label{def:epaths_defs}
  \begin{align}
    \PATH_n \np{\OrientedGraph} 
    &=
      \Bset{ \bseqa{ \np{\tail{\vertex_i},\head{\vertex_i}} }{i\in\ic{1,n}} \in \SetProd{\EDGE}{n} }
      { \head{\vertex_i}=\tail{\vertex_{i+1}} \text{ for } i\in\ic{1,n{-}1}}
      \eqfinp
      \label{def:epaths_gtzero_}
  \end{align}
  For $n\ge1$, we define the set of 
  \emph{\EdgePaths\ of length greater than~$n$},
  relative to the graph~$\npOrientedGraph$,
  by 
  \begin{align}
    \PATH_{> n}\np{\OrientedGraph}
    &=
      \sqcup_{n'> n} \PATH_{n'}\np{\OrientedGraph}
      \eqfinv
      \intertext{and finally the set of \emph{\EdgePaths}, relative to the graph~$\npOrientedGraph$, by}
      \PATH\np{\OrientedGraph}
    &=
      \PATH_{> 0}\np{\OrientedGraph}
    \eqfinp
    \label{def:epaths_gtzero}
  \end{align}
\end{subequations}
  
\begin{leftbar} Using the tools introduced in \S\ref{sec:sequences} and
  Equations~\eqref{def:epaths_defs} we obtain the following formalization of
  $\PATH_{> n}$
  \begin{alectryon}{\small \SnippetEpathgt,}\end{alectryon}
  \noindent where \texttt{E} is the edge relation $\EDGE$ and where \texttt{Chrel}
  was defined in~\S\ref{sec:sequences}.
\end{leftbar}

We denote by $\cardinal{\path}$ the length of an \EdgePath~$\path \in
\PATH\np{\OrientedGraph}$ (computed in Coq by the mapping \texttt{size}).
An \emph{\EdgeSubPath} of the \EdgePath~$\path$ is an \EdgePath\ obtained by a
subsequence of consecutive indices.

\myparagraph{$\bullet$ Definition of endpoints of \EdgePath s.} The first
element~$\tail{\vertex_1}$ of an \EdgePath\
\( \path=\bseqa{ \np{\tail{\vertex_i},\head{\vertex_i}} }{i\in\ic{1,n}} \) is the
\emph{tail of the \EdgePath}, whereas the last one~$\head{\vertex_n}$ is the
\emph{head of the \EdgePath}.  Both tail (obtained with function \texttt{head}
in Coq) and head (obtained with function \texttt{last} in Coq) are called
\emph{endpoints} of the \EdgePath.

\begin{subequations}
  We define the \emph{projection mapping
    \( \Projection^n: \PATH_n \np{\OrientedGraph} \to\VERTEX\times\VERTEX \) on the
    tail and head endpoints of an \EdgePath\ of length~$n$} by
  \begin{equation}
    \forall
    \path=(\tail{\vertex_i},\head{\vertex_i})_{i\in\ic{1,n}}\in\PATH_n\np{\OrientedGraph}
    \eqsepv 
    \Projection^n\np{\path}=
    \Projection^n\np{(\tail{\vertex_i},\head{\vertex_i})_{i\in\ic{1,n}}} 
    = \np{\tail{\vertex_1},\head{\vertex_{n}}} \in \VERTEX\times\VERTEX 
    \eqfinp
  \end{equation}
  We define the \emph{projection mapping
    \( \Projection: \PATH \np{\OrientedGraph} \to \VERTEX\times\VERTEX \) on the tail
    and head endpoints of an \EdgePath} by
  \begin{equation}
    \forall \path\in \PATH\np{\OrientedGraph}
    \eqsepv
    \Projection\np{\path} = \Projection^{\cardinal{\path}}\np{\path} \in \VERTEX\times\VERTEX 
    \eqfinp
    \label{eq:Projection_PATH_EdgePath}
  \end{equation}
  We also distinguish \emph{the tail and the head endpoints projection mappings
    of an \EdgePath} by (see Figure~\ref{fig:tailhead_EdgePath})
  \begin{equation}
    \Projection=\np{\Tail{\Projection},\Head{\Projection}}
    \mtext{ where }
    \Tail{\Projection}: \PATH \np{\OrientedGraph} \to \VERTEX
    \mtext{ and }
    \Head{\Projection}: \PATH \np{\OrientedGraph} \to \VERTEX
    \eqfinp 
    \label{eq:Projection_PATH_EdgePath_first_and_last}
  \end{equation}  
    \label{eq:Projection_PATH_EdgePath_all}
\end{subequations}

\begin{figure}[hbtp]
  \begin{center}
    \fbox{\includegraphics[width=0.5\textwidth]{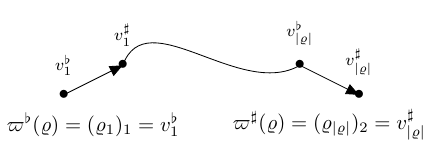}}
    \caption{Tail and head endpoints of an \EdgePath\ projection
      mappings~\eqref{eq:Projection_PATH_EdgePath_all} for $\path \in \PATH\np{\OrientedGraph}$ 
      \label{fig:tailhead_EdgePath}}
  \end{center}
\end{figure}

\begin{leftbar} The endpoints are obtained in Coq by the mapping \texttt{Pe}
  (meaning path endpoints) for a sequence of vertices, and by the mapping
  \texttt{Epe} (meaning extended path endpoints) for a sequence of edges
  \begin{alectryon}{\small \SnippetPe\SnippetEpe.}\end{alectryon}
  \noindent We prove in the next two lemmata that \texttt{Pe} and \texttt{Epe} behave properly with respect to the \texttt{Lift} bijection between \texttt{D} and \texttt{I}
  \begin{alectryon}{\small \SnippetEpeLift\SnippetPeUnLift.}\end{alectryon}
\end{leftbar}

As a first result linking (edge) paths and relations, we prove that
\begin{alectryon}  {\small \SnippetTCP}\end{alectryon}
which asserts that two nodes
\texttt{(v1,v2)} are in relation through the transitive closure of a
relation~\texttt{E}, that is \texttt{(v1,v2) $\in$ E.+} if and only if there
exists an edge path with endpoints \texttt{(v1,v2)} in the graph~$(\VERTEX,\EDGE)$ represented by
\texttt{(T: Type),(E: relation T)}.

\myparagraph{$\bullet$ Concatenation of \EdgePath s.} Concatenation of sequences 
denoted by the infix operator $\ltimes$ (and denoted by \texttt{++} in Coq) is easily defined and is associative.
When considering \EdgePaths, concatenation of 
\( \pathbis \in \PATH\np{\OrientedGraph}\) and
\( \pathter \in \PATH\np{\OrientedGraph}\) gives a sequence $(\pathbis \ltimes \pathter) \in \Sequence(\VERTEX{\times}\VERTEX)$ which belongs to
$\PATH\np{\OrientedGraph}$ under the additional assumption that 
\( \Head{\Projection}\np{\pathbis}=\Tail{\Projection}\np{\pathter} \) (see
Equation~\eqref{eq:Projection_PATH_EdgePath_first_and_last}), that is, 
\begin{equation}
  \forall\pathbis \in \PATH\np{\OrientedGraph}\eqsepv
  \forall \pathter \in \PATH\np{\OrientedGraph} \eqsepv
  \Head{\Projection}\np{\pathbis}=\Tail{\Projection}\np{\pathter}
  \implies (\pathbis \ltimes \pathter) \in \PATH\np{\OrientedGraph}
  \eqfinp
  \label{eq:concatenation_Path}
\end{equation}

\myparagraph{$\bullet$ Deployment in \EdgePaths.}
With any binary relation
\( \relation \subset \VERTEX{\times}\VERTEX \), we associate the
subset~\( \DeploymentInPaths{\relation}{\OrientedGraph} \)
of~\( \PATH\np{\OrientedGraph} \), that we call the \emph{deployment in
  \EdgePaths}, defined by
\begin{equation}
  \forall \relation \subset \VERTEX\times\VERTEX 
  \eqsepv 
  \DeploymentInPaths{\relation}{\OrientedGraph}
  = \Converse{\Projection}(\relation) \subset \PATH\np{\OrientedGraph} 
  \eqfinv
  \label{eq:DeploymentInPaths}
\end{equation}
where the projection~\( \Projection \) has been defined
in~\eqref{eq:Projection_PATH_EdgePath_all}.  The deployment in \EdgePath
s~\( \DeploymentInPaths{\relation}{\OrientedGraph} \) is made of the \EdgePaths\
whose endpoints satisfy the binary relation~\( \relation \).

\begin{leftbar}
  \begin{alectryon}  {\small \SnippetDP}\end{alectryon}
  It is to be noted that the deployment in \EdgePaths\ may be obtained as the
  image by the Lift mapping of a subset of sequences of vertices as follows
  \begin{alectryon}{\small \SnippetDV\SnippetDPDV.}\end{alectryon}
  Note that, when an edge path of length greater than zero is given as a lifted
  sequence of elements of~$T$ (as for example in
  \texttt{\PY{n}{Lift}~\PY{o}{(}\PY{n}{x}\PY{o}{::(}\PY{n}{rcons}~\PY{n}{p}~\PY{n}{y}\PY{o}{))}~\PY{o}{=}~\PY{n}{st}\PY{o}{.}}),
  the endpoints \texttt{(x,y)} and the intermediate nodes \texttt{p} of the
  \EdgePath\ \texttt{st} are immediately obtained.
\end{leftbar}

\begin{table}
  \centering
  \begin{tabular}{||c|c|c||}
    \hline\hline
    Name
    & Expression
    & Equation 
    \\ \hline\hline
    set of (edge) paths
    &
      \( \PATH\np{\OrientedGraph} \)
    & Equation~\eqref{def:epaths_gtzero}
    \\  \hline         
    tail and head endpoints
    & 
      \( \Projection=\np{\Tail{\Projection},\Head{\Projection}} \)
    &
      Equation~\eqref{eq:Projection_PATH_EdgePath}
    \\
    projection mappings
    &
      \(  \Tail{\Projection}: \PATH \np{\OrientedGraph} \to \VERTEX \)
    &
      Equation~\eqref{eq:Projection_PATH_EdgePath_first_and_last}
    \\
    & \( \Head{\Projection}: \PATH \np{\OrientedGraph} \to \VERTEX \)
    & Equation~\eqref{eq:Projection_PATH_EdgePath_first_and_last}
    \\  \hline
    deployment in \EdgePaths\
    & \( \DeploymentInPaths{\cdot}{\OrientedGraph} = \Converse{\Projection} \)
    & Equation~\eqref{eq:DeploymentInPaths}
    \\ \hline \hline         
  \end{tabular}
  \caption{Notions for \EdgePaths\ in a graph (\S\ref{Edge_paths_in_a_(directed_simple)_graph_(permitting_loops)})}
  \label{tab:Notions_for_EdgePath_s_in_a_graph}
\end{table}

\subsubsection{\UndirectedEdgePaths\ in a graph}
\label{Extended-oriented_paths_in_a_(directed_simple)_graph_(permitting_loops)}

To define \emph{\undirectedEdgePaths}, we consider a set
\begin{equation}
  {\Orientation}=\na{-1,+1}
  \eqfinv
\end{equation}
with two elements, and implemented in Coq as an inductive type taking two values
\texttt{N} (for~$-1$) and \texttt{P} (for~$1$)
\begin{alectryon}{\small \SnippetOO,}\end{alectryon}
\noindent which will serve as an orientation specification of an edge. 
We also introduce the set $\EDGEo \subset \VERTEX\times\VERTEX{\times}\Orientation$ defined by 
\begin{align}
  (\vertex,\vertexbis,o) \in \EDGEo
  \iff (\vertex, \vertexbis) \in \EDGE^{(o)}
  \eqfinv
  \label{def:Eo_relation}
\end{align}
where \( \EDGE^{(+1)}= \EDGE \) and \( \EDGE^{(-1)}= \Converse{\EDGE} \).

After defining \undirectedEdgePaths\ and their endpoints, we introduce
deployments in \undirectedEdgePaths\ (see the summary
Table~\ref{tab:Notions_for_undirectedEdgePath_s_in_a_graph}).

\myparagraph{$\bullet$ Definition of \undirectedEdgePaths.}
We define the set of \emph{\undirectedEdgePaths\ of length~$n$} ($n\ge 1$),
relative to the graph~$\npOrientedGraph$, by 
\begin{subequations}
  \label{eq:eopaths_defs}
  \begin{align}
  \UPATH_n \np{\OrientedGraph} 
  &=
    \Bset{ \bseqa{ \np{\tail{\vertex_i},\head{\vertex_i}, o_i} }{i\in\ic{1,n}} 
    \in
    \SetProd{\EDGEo}{n}}
    {\head{\vertex_i}=\tail{\vertex_{i+1}} \text{ for } i\in\ic{1,n{-}1}}
    \eqfinv
    \label{def:eopaths_gtzero_}
  \end{align}
  For $n\ge1$, we define the set of 
  \emph{\undirectedEdgePath s\footnote{%
  It is to be noted that an \undirectedEdgePath\ is \emph{not a path in the graph
    $\npOrientedGraph$}, neither in the undirected graph $\npUnorientedGraph$.
  However, considering a couple $(\path,\orient)\in \UPATH_n\np{\OrientedGraph}$, we obtain that
  $\path \in \PATH_n\np{\UnorientedGraph}$,
  that is, $\path$ is an (edge) path in the unoriented graph
  \( \npUnorientedGraph \).
  We thus obtain a natural surjection
  $(\path,\orient)\mapsto \path$
    from $\UPATH\np{\OrientedGraph}$ to $\PATH\np{\UnorientedGraph}$.
    This canonical surjection is not necessary injective because
    a path in $\PATH\np{\UnorientedGraph}$ that has an edge in $\EDGE \cap \Converse{\EDGE}$
    is the image of two distinct \undirectedEdgePaths.
    The surjection $(\path,\orient)\mapsto \path$ 
    is a bijection in the special case when 
    the graph~\( \npOrientedGraph \) is directed,
    that is, when $\EDGE \cap \Converse{\EDGE}=\emptyset$, that is, 
    when no two edges have the same endpoints. 
  }
 of length greater than~$n$},
  relative to the graph~$\npOrientedGraph$,
  by 
  \begin{align}
    \UPATH_{> n}\np{\OrientedGraph}
    &=
      \sqcup_{n'> n} \UPATH_{n'}\np{\OrientedGraph}
      \eqfinv
      \intertext{and finally the set of \emph{\undirectedEdgePaths}, relative to the graph~$\npOrientedGraph$, by}
      \UPATH\np{\OrientedGraph}
    &=
      \UPATH_{> 0}\np{\OrientedGraph}
    \eqfinp
    \label{def:eopaths_gtzero}
  \end{align}
  \end{subequations}
  \begin{leftbar}
    Using the tools introduced in \S\ref{sec:sequences} and
    Equations~\eqref{eq:eopaths_defs}, we obtain the following Coq formalization of
    $\UPATH_{> n}\np{\OrientedGraph}$
    \begin{alectryon}
      {\small \SnippetUgt\SnippetOedge\SnippetChrelO,}\end{alectryon}
    \noindent where \texttt{Oede E} is used to formalize the $\EDGEo$~subset
    in~\eqref{def:Eo_relation}, and where \texttt{ChrelO} is the chain relation on 
    $\VERTEX\times\VERTEX{\times}\Orientation$.
  \end{leftbar}

  An \undirectedEdgePath\  can be decomposed as an oriented pair composed of an \EdgePath\ 
  and a sequence of orientations.
  \begin{subequations}
    For that purpose, we introduce the mapping
    \begin{align}
      \pi : \Sequence(\VERTEX{\times}\VERTEX{\times}\Orientation)
      &\to 
        \Sequence(\VERTEX\times\VERTEX)\times \Sequence(\Orientation)
        \eqfinv
        \intertext{where $\Sequence(\cdot)$ was defined in~\eqref{eq:set_of_sequences}, 
        given by}
        \forall \UndirectedPath \in
        \Sequence(\VERTEX{\times}\VERTEX{\times}\Orientation)
        \eqsepv 
        \pi\np{\UndirectedPath}
      &= 
        \bp{ \PiSVV(\UndirectedPath),
        \PiSO(\UndirectedPath)}
        \in \Sequence(\VERTEX\times\VERTEX)\times \Sequence(\Orientation)
        \eqfinv
      \\
      \textrm{where } 
      \PiSVV(\UndirectedPath)
      &=
        \nseqa{ \PiVV(\UndirectedPath_i)}{i \in\ic{1,\cardinal{\UndirectedPath}}} 
        \in   \Sequence(\VERTEX\times\VERTEX)
        \eqfinv
        \label{eq:def_pi_VERTEXtimesVERTEX}
      \\
      \textrm{and } 
      \PiSO(\UndirectedPath)
      &=
        \nseqa{ \PiO (\UndirectedPath_i)}{i \in\ic{1,\cardinal{\UndirectedPath}}}
        \in \Sequence(\Orientation)
        \eqfinv
    \end{align}
    where $\PiVV$ (resp. $\PiO$) is the projection
    from the set $\VERTEX{\times}\VERTEX{\times}\Orientation$ onto the set $\VERTEX{\times}\VERTEX$ 
    (resp. $\Orientation$).
    When $\UndirectedPath \in \UPATH\np{\OrientedGraph}$, we obtain that
    $\path = \PiSVV(\UndirectedPath)$ is an \EdgePath\, that is
    $\path \in  \PATH\np{\OrientedGraph}$.
    \label{eq:def_pi}
  \end{subequations}

  Reciprocally, given an \EdgePath\ $\path \in \PATH\np{\OrientedGraph}$ and
  a sequence~$o\in \Orientation^{\cardinal{\path}}$ of orientations of the same size, we denote by
  $\UndirectedPath=\Pair{\path,o}$ the \undirectedEdgePath\,
  $\UndirectedPath \in \UPATH\np{\OrientedGraph}$, defined by
  \begin{equation}
    \Pairname: (\path, o) \in \text{Im} \pi  \mapsto \UndirectedPath \text{ with }
    \UndirectedPath_i = (\path_i, o_i)\eqsepv \forall i\in \ic{1, \cardinal{\path}}.
    \label{def:pair}
  \end{equation}
  The mapping $\Pairname$ is the inverse of the mapping~$\pi$, defined in
  Equation~\eqref{eq:def_pi}, \emph{but on the range of the mapping~$\pi$} (so that
  using~$\Pairname$ is a slight abuse of notation).
  
  An \emph{\undirectedEdgeSubPath} of the \undirectedEdgePath~$\UndirectedPath
  \in\UPATH\np{\OrientedGraph}$ is an \undirectedEdgePath\ obtained by a
  subsequence of consecutive indices. 
  
  \myparagraph{$\bullet$ Definition of endpoints of \undirectedEdgePaths.}
  The endpoints of an \undirectedEdgePath\
  \( \UndirectedPath \) are defined as the endpoints of the \EdgePath\
  $\PiSVV (\UndirectedPath)$ as defined in~\eqref{eq:def_pi_VERTEXtimesVERTEX}.
  
  We define the \emph{projection mapping
  \( \Projection_{\UPATH}: \UPATH \np{\OrientedGraph} \to\VERTEX\times\VERTEX \) 
  on the tail and head endpoints of \undirectedEdgePaths} by
  \begin{subequations}
    \begin{equation}
      \forall \UndirectedPath \in \UPATH\np{\OrientedGraph}
      \eqsepv
      \Projection_{\UPATH}\np{\UndirectedPath} = \Projection \bp{ \PiSVV(\UndirectedPath)} \in \VERTEX\times\VERTEX 
      \eqfinv
      \label{eq:Projection_UPATH}
    \end{equation}
    where the projection mapping
    \( \Projection: \PATH \np{\OrientedGraph} \to \VERTEX\times\VERTEX \) 
    on the tail and head endpoints of an \EdgePath\ has been introduced in~\eqref{eq:Projection_PATH_EdgePath_all}.
    We also distinguish \emph{the tail and the head endpoints projection mappings
      on \undirectedEdgePaths} by
    (see Figure~\ref{fig:tailhead_undirectedEdgePath})
    \begin{equation}
      \Projection_{\UPATH}=\np{\Tail{\Projection}_{\UPATH},\Head{\Projection}_{\UPATH}}
      \mtext{ where }
      \Tail{\Projection}_{\UPATH}: \UPATH \np{\OrientedGraph} \to \VERTEX
      \mtext{ and }
      \Head{\Projection}_{\UPATH}: \UPATH \np{\OrientedGraph} \to \VERTEX
      \eqfinp 
      \label{eq:Projection_UPATH_first_and_last}
    \end{equation}  
    \label{eq:Projection_UPATH_all}
\end{subequations}

\begin{figure}[hbtp]
  \begin{center}
    \fbox{\includegraphics[width=0.5\textwidth]{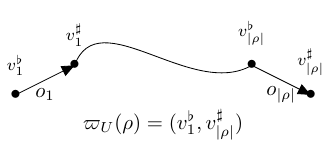}}
    \caption{
      Projection mappings~\eqref{eq:Projection_UPATH_all} on the tail and the head endpoints of an
      \undirectedEdgePath\ $\UndirectedPath \in \UPATH\np{\OrientedGraph}$ 
      \label{fig:tailhead_undirectedEdgePath}}
  \end{center}
\end{figure}
  
\begin{leftbar} The Coq definition of $\Projection_{\UPATH}$ follows
  \begin{alectryon}
    {\small \SnippetEope,}
  \end{alectryon}
\end{leftbar}

\myparagraph{$\bullet$ Concatenation of \undirectedEdgePaths.}
As already noted, concatenation of sequences 
denoted by the infix operator $\ltimes$ is easily defined and is associative.
When considering \undirectedEdgePaths, concatenation of
\( \UndirectedPathbis \in \UPATH\np{\OrientedGraph}\) and
\( \UndirectedPathter \in \UPATH\np{\OrientedGraph}\) gives a sequence $(\UndirectedPathbis \ltimes \UndirectedPathter) \in\Sequence(\VERTEX{\times}\VERTEX{\times}\Orientation)$
which belongs to $\UPATH\np{\OrientedGraph}$ under the additional assumption that
\( \Head{\Projection}_{\UPATH}\np{\UndirectedPathbis}=\Tail{\Projection}_{\UPATH}\np{\UndirectedPathter} \) (see
Equation~\eqref{eq:Projection_UPATH_first_and_last}), that is,
\begin{equation}
  \forall\UndirectedPathbis \in \PATH\np{\OrientedGraph}\eqsepv
  \forall \UndirectedPathter \in \PATH\np{\OrientedGraph} \eqsepv
  \Head{\Projection}\np{\UndirectedPathbis}=\Tail{\Projection}\np{\UndirectedPathter}
  \implies (\UndirectedPathbis \ltimes \UndirectedPathter) \in \UPATH\np{\OrientedGraph}
  \eqfinp
  \label{eq:concatenation_UPath}
\end{equation}
Moreover, forall \( \UndirectedPathbis \in \UPATH\np{\OrientedGraph}\) and
\( \UndirectedPathter \in \UPATH\np{\OrientedGraph}\), we have that
\begin{equation}
  \UndirectedPathbis \ltimes\UndirectedPathter
  = \Pairname \bp{ \np{ \PiSVV (\pathbis)} \ltimes \np{\PiSVV(\pathter)},
    \np{\PiSO(\orientbis) \ltimes \PiSO(\orientter)}}
  \in \UPATH\np{\OrientedGraph} 
  \eqfinp
  \label{eq:concatenation_UndirectedPath}
\end{equation}

\myparagraph{$\bullet$ Deployment in \undirectedEdgePaths.} With any binary
relation \( \relation \subset \VERTEX\times\VERTEX \), we associate the
subset~\( \DeploymentInUPaths{\relation}{\OrientedGraph} \)
of~\( \UPATH\np{\OrientedGraph} \) in~\eqref{eq:eopaths_defs} that we call the
\emph{deployment in \undirectedEdgePaths}, defined by
\begin{equation}
  \forall \relation \subset \VERTEX\times\VERTEX 
  \eqsepv 
  \DeploymentInUPaths{\relation}{\OrientedGraph}
  = \Converse{\Projection_{\UPATH}}(\relation) \subset \UPATH\np{\OrientedGraph} 
  \eqfinv
  \label{eq:DeploymentInUPaths}
\end{equation}
where the projection~\( \Projection_{\UPATH} \) has been defined
in~\eqref{eq:Projection_UPATH_all}. 

The deployment \(\DeploymentInUPaths{\relation}{\OrientedGraph} \)
is made of the \undirectedEdgePaths\ whose endpoints satisfy the 
binary relation~\( \relation \).

\begin{leftbar} It is formalized in Coq as follows
  \begin{alectryon}  {\small \SnippetDU} \end{alectryon}
  \noindent Moreover, As for \EdgePath, it is to be noted that the \undirectedEdgePaths\
  may be obtained as the image of a product of sequences of vertices and sequences of orientation.
  This is done by the \texttt{LiftO} mapping combining lift and pairing
  \begin{alectryon}{\small \Snippetpair\SnippetLiftO}\end{alectryon}
\end{leftbar}

\begin{leftbar} Finally, we prove that \texttt{LiftO} is bijective on restricted
  domain and image, with inverse \texttt{UnLiftO}, and that the bijection
  properly commutes with \texttt{Eope}
  \begin{alectryon}{\small \SnippetEopeLiftO\SnippetPeUnLiftO.}\end{alectryon}
\end{leftbar}

\begin{table}
  \centering
  \begin{tabular}{||c|c|c||}
    \hline\hline
    Name
    & Expression
    & Equation 
    \\ \hline\hline
    set of extended-oriented paths
    &
      \( \UPATH\np{\OrientedGraph} \)
    & Equation~\eqref{def:eopaths_gtzero}
    \\  \hline         
    tail and head endpoints
    & 
      \( \Projection_{\UPATH}: \UPATH \np{\OrientedGraph} \to\VERTEX\times\VERTEX \)
    &
      Equation~\eqref{eq:Projection_UPATH}
    \\
    projection mappings
    &
      \(  \Tail{\Projection}_{\UPATH}: \UPATH \np{\OrientedGraph} \to \VERTEX \)
    &
      Equation~\eqref{eq:Projection_UPATH_first_and_last}
    \\
    & \( \Head{\Projection}_{\UPATH}: \UPATH \np{\OrientedGraph} \to \VERTEX \)
    & Equation~\eqref{eq:Projection_UPATH_first_and_last}
    \\  \hline
    deployment in \undirectedEdgePaths
    & 
      \( \DeploymentInUPaths{\cdot}{\OrientedGraph} = \Converse{\Projection_{\UPATH}} \)
    &
      Equation~\eqref{eq:DeploymentInUPaths}
    \\ \hline \hline         
  \end{tabular}
  \caption{Notions for \undirectedEdgePaths\ in a graph (\S\ref{Extended-oriented_paths_in_a_(directed_simple)_graph_(permitting_loops)})}
  \label{tab:Notions_for_undirectedEdgePath_s_in_a_graph}
\end{table}

\subsection{Active \undirectedEdgePaths\ and d-separation}
\label{Definition_of_active_undirectedEdgePaths_and_of_conditional_directional_separation}
  
Let \( \npOrientedGraph \) be a graph --- as defined
in~\S\ref{Graphs_as_binary_relations}, that is, a (directed simple) graph
(permitting loops) --- and $\AgentSubsetW\subset\VERTEX$ be a subset of vertices.

In~\S\ref{Definition_of_active_undirectedEdgePaths} we formally adapt Pearl's
definition of active (and blocked) \undirectedEdgePaths\ in a graph, from which
we deduce the (conditional) d-separation binary relation
in~\S\ref{Definition_of_conditional_directional_separation}.

\subsubsection{Definition of active \undirectedEdgePaths}
\label{Definition_of_active_undirectedEdgePaths}
We take inspiration from~\citep{PEARL1986357} to define the notion of blocked
paths on a graph, not necessarily finite nor acyclic.  For this purpose, we
first define {active} \undirectedEdgePaths\ relative to the
graph~$\npOrientedGraph$ in Definition~\ref{de:ActivePaths}.  Then, we obtain
the definition of {blocked} \undirectedEdgePaths\ relative to the
graph~$\npOrientedGraph$, as defined by~\citep{PEARL1986357}, by switching to
the complementary set.

We start by introducing a binary relation, $\ActiveTr$ (active triplet), on the
set $\VERTEX{\times}\VERTEX{\times}\Orientation$, which is parameterized by the set of
edges $\EDGE$ of a graph \( \npOrientedGraph \) and by a subset
$\AgentSubsetW\subset\VERTEX$.

\begin{definition}
  The \emph{active triplet binary relation $\ActiveTr$} on the set
  $\VERTEX{\times}\VERTEX{\times}\Orientation$ is defined as follows
  \begin{subnumcases}%
    {
      \np{\tail{\vertex_1},\head{\vertex_{1}},\orient_1}
      \,\ActiveTr\,
      \np{\tail{\vertex_2},\head{\vertex_{2}},\orient_2}
      \iff
      \label{it:ActivePaths}
    }  
    \orient_1 = +1\eqsepv \orient_{2}= +1 \text{ and }
    \head{\vertex_1} = \tail{\vertex_{2}} \in \Complementary{\AgentSubsetW}\eqsepv
    \label{it:ActivePaths_case1}
    \\
    \orient_1 = -1\eqsepv \orient_{2}= -1 \text{ and }
    \head{\vertex_1} = \tail{\vertex_{2}} \in \Complementary{\AgentSubsetW}\eqsepv
    \label{it:ActivePaths_case2}
    \\
    \orient_1 = -1\eqsepv \orient_{2}= +1 \text{ and }
    \head{\vertex_1} = \tail{\vertex_{2}} \in \Complementary{\AgentSubsetW}\eqsepv
    \label{it:ActivePaths_case3}
    \\
    \orient_1 = +1\eqsepv \orient_{2}= -1 \text{ and }
    \head{\vertex_1} = \tail{\vertex_{2}} \in
    \TransitiveReflexiveClosure{\EDGE}\AgentSubsetW 
    \eqsepv
    \label{it:ActivePaths_case4}
  \end{subnumcases}
  where \( \TransitiveReflexiveClosure{\EDGE}= \TransitiveClosure{\EDGE} \cup \Delta \)
  is the reflexive and transitive closure of the relation~$\EDGE$.
\end{definition}

\begin{definition} (active \undirectedEdgePaths\ \( \ActiveUndirectedPaths{\OrientedGraph}\))
  \label{de:ActivePaths}
  We say that an \undirectedEdgePath\
  \(\UndirectedPath \in \UPATH\np{\OrientedGraph} \) in~\eqref{eq:eopaths_defs}
  relative to the graph~\(\npOrientedGraph\), is an \emph{active
    \undirectedEdgePath} (\wrt\footnote{\wrt\ stands for ``with respect to''.} the
  subset~$\AgentSubsetW$) if the successive elements (triplets) of
  \(\UndirectedPath\) satisfy the binary relation \(\ActiveTr\) defined in
  Equation~\eqref{it:ActivePaths},
  that is, $\UndirectedPath_i\, \ActiveTr\, \UndirectedPath_{i+1}$, for all
  $i \in \ic{1, \cardinal{\UndirectedPath}{-}1}$.
  Notice that any \undirectedEdgePath\ of length~1 is
  active by definition.

  We denote by \( \ActiveUndirectedPaths{\OrientedGraph} \subset \UPATH\np{\OrientedGraph}\)
  the subset of all active \undirectedEdgePaths\ (\wrt\ the subset~$\AgentSubsetW$). 
  We say that an \undirectedEdgePath\ is \emph{blocked} if it is not active
  and we denote by \( \BlockedUndirectedPaths{\OrientedGraph} = 
  \bpComplementary{\ActiveUndirectedPaths{\OrientedGraph}}\)   
  the subset of all blocked \undirectedEdgePaths\ (\wrt\ the subset~$\AgentSubsetW$).
\end{definition} 
\begin{leftbar}
  The Coq formalization of the binary relation \(\ActiveTr\) easily follows 
  \begin{alectryon}  {\small \SnippetAtr.}\end{alectryon}
  The binary relation $\ActiveTr$ contains the (oriented chain) relation --
  denoted by \texttt{ChrelO} in Coq -- and defined by avec
  $((v_1^\flat, v_1^\sharp,o_1)\, C_{\textsc{H}}^{\Orientation}\, (v_2^\flat, v_2^\sharp , o_2) \iff
  v_1^\sharp = v_2^\flat)$.  The forward set
  $\TransitiveReflexiveClosure{\EDGE}\AgentSubsetW$ is implemented by
  \texttt{Fset} in Coq.
\end{leftbar}

\begin{leftbar} Now, the Coq formalization of
  $\ActiveDeploymentInUPaths{\relation}{\OrientedGraph}$ is obtained as an
  intersection of two sets
\begin{alectryon}  {\small \SnippetDUa.} \end{alectryon}
\noindent When the relation $\relation$ is a singleton
$\relation=\na{\np{x,y}}$, the set
$\ActiveDeploymentInUPaths{\relation}{\OrientedGraph}$ boils down to the
following equivalent definition
\begin{alectryon}  {\small \SnippetDUaone.}\end{alectryon}
\end{leftbar}

\subsubsection{Definition of  conditional directional separation (d-separation)}
\label{Definition_of_conditional_directional_separation}
We introduce in Definition~\ref{de:vertices-d-separated} a new binary relation
between vertices: we say that two vertices are (conditionally) directionally
separated if and only if the two vertices are different and all the
\undirectedEdgePaths, having them as endpoints, are blocked (equivalently they
are different and there does not exist an active \undirectedEdgePaths\ having
them as endpoints). This definition mimics Pearl's
d-separation~\citep{PEARL1986357}, but with two differences: the graph is not
supposed to be acyclic, and the separation is between vertices and not between
disjoint subsets.

\begin{definition}
  \label{de:vertices-d-separated}
  Let \( \npOrientedGraph \) be a graph, 
  and $\AgentSubsetW\subset\VERTEX$ be a subset of vertices.
  %
  We denote
  \begin{equation}
    \bgent \ConditionalDirectionalSeparation \cgent \mid \AgentSubsetW
    \iff
    (\bgent \not= \cgent) \wedge 
    \Bp{\DeploymentInUPaths{\na{\np{\bgent,\cgent}}}{\graph}
      \subset \BlockedUndirectedPaths{\OrientedGraph}}
    \qquad \bp{ \forall \bgent,\cgent \in \AGENT }
    \eqfinv
    \label{eq:vertices-d-separated}
  \end{equation}
  and we say that the vertices~$\bgent$ and $\cgent$ are 
  (conditionally) \emph{directionally separated}  (\wrt\ the subset~$\AgentSubsetW$).
\end{definition}
\begin{leftbar}
  The Coq implementation of d-separation is given by
  \begin{alectryon}{\small \Snippetdsepnota\SnippetDseparated}\end{alectryon}
  \begin{alectryon}{\small \SnippetAeop\SnippetActiveOe\SnippetallL.}\end{alectryon}
\end{leftbar}
\begin{leftbar}
  The retained formulation is indeed equivalent to Definition~\ref{de:vertices-d-separated},
  as proved in Coq lemma~\texttt{Active\_eq} given below
  \begin{alectryon} {\small \SnippetActiveeq.}\end{alectryon}
\end{leftbar}

\section{Characterization  of $d$-separation by means of binary relations}
\label{Characterization__of_the_conditional_directional_separation_binary_relation}

Our main result is the characterization of the conditional directional
separation relation --- the extension
\( \bp{ \;\protect\ConditionalDirectionalSeparation \; \mid \AgentSubsetW } \), or
shortly \( \ConditionalDirectionalSeparation \), of the d-separation introduced
in Definition~\ref{de:vertices-d-separated} --- as the complementary of the
conditional active relation --- Equation~\eqref{eq:conditional_active_relation} in
Definition~\ref{de:all_the_relations} below.

For this purpose, we introduce the following binary relations on the vertices of
a graph --- as defined in~\S\ref{Graphs_as_binary_relations}, that is, a (directed
simple) graph (permitting loops).

\begin{definition}
  \label{de:all_the_relations}
  Let \( \npOrientedGraph \) be a graph, 
  and $\AgentSubsetW\subset\VERTEX$ be a subset of vertices.
  We define the \emph{conditional parental relation}~\( \ParentalPrecedence \)
  as
  \begin{subequations}
    \begin{align}
      \ParentalPrecedence 
      &= \Delta_{\Complementary{\AgentSubsetW}}\Precedence
        \mtext{ \qquad that is, }
        \bgent\ParentalPrecedence\cgent \iff
        \bgent\in\Complementary{\AgentSubsetW} \mtext{ and }
        \bgent\Precedence\cgent \qquad \bp{\forall \bgent,\cgent \in \AGENT }
        \eqfinv
        \label{eq:conditional_parental_relation}
        \intertext{the \emph{conditional ascendent relation} $\ConditionalAscendent$ as }
        \ConditionalAscendent
      &=
        \ConditionalDown = \Precedence \TransitiveReflexiveClosureParentalPrecedence
        \label{eq:conditional_ascendent_relation}
        \eqfinv
        \intertext{which relates a descendent with
        an ascendent by means of elements in~$\Complementary{\AgentSubsetW}$. 
        We define their converses~\( \ConverseParentalPrecedence \) and
        \( \ConverseConditionalAscendent \) as }
        \ConverseParentalPrecedence
      &= \npConverse{\ParentalPrecedence}
        = \Converse{\Precedence} \Delta_{\Complementary{\AgentSubsetW}}
        \eqfinv
        \label{eq:converse_conditional_parental_relation}
      \\
      \ConverseConditionalAscendent
      &= \Converse{\bp{\ConditionalAscendent}}
        = \ConditionalUp
        = \TransitiveReflexiveClosureConverseParentalPrecedence \Converse{\Precedence}
        \eqfinp
        \label{eq:converse_conditional_ascendent_relation}
        \intertext{With these elementary binary relations,
        we define the \emph{conditional common cause relation}~$\ConditionalCommonCause$ 
        as the symmetric relation}
        \ConditionalCommonCause 
      &=
        \ConverseConditionalAscendent \Delta_{\Complementary\AgentSubsetW}
        \ConditionalAscendent
        = \TransitiveClosureConverseParentalPrecedence \TransitiveClosureParentalPrecedence
        \label{eq:common_cause}
        \eqfinv
        \intertext{the \emph{conditional cousinhood relation}~$\Cousinhood$
        as the partial equivalence relation} 
        \Cousinhood
      &=
        \bpTransitiveClosure{\Delta_{\AgentSubsetW} \ConditionalCommonCause
        \Delta_{\AgentSubsetW} }
        \cup
        \Delta_{\AgentSubsetW}
        \eqfinv
        \label{eq:Cousinhood}
        \intertext{
        and the \emph{conditional active relation}~$\ConditionalActive$ 
        as the symmetric relation}
        \ConditionalActive
      &= \Delta \cup 
        \ConditionalAscendent \cup \ConverseConditionalAscendent \cup \ConditionalCommonCause
        \cup
        \bp{\ConditionalAscendent \cup \ConditionalCommonCause}
        \Cousinhood
        \bp{\ConverseConditionalAscendent \cup \ConverseConditionalCommonCause}
        \eqfinp
        \label{eq:conditional_active_relation}
    \end{align}
  \end{subequations}
\end{definition}

\begin{leftbar} The Coq implementation is straightforward using the binary
  relation library \texttt{rel.v} that we have developed.
  \begin{alectryon}  {\small \SnippetAw.}\end{alectryon}
\end{leftbar}

We now state the main result of this paper.

\begin{theorem} (Coq \PY{k+kn}{Theorem}~~\PY{n+nf}{Th\ref{th:ConditionalDirectionalSeparation_IFF_relation}})
  \label{th:ConditionalDirectionalSeparation_IFF_relation}
  Let \( \npOrientedGraph \) be a graph, 
  and $\AgentSubsetW\subset\VERTEX$ be a subset of vertices.
  The conditional directional separation
  relation~$\ConditionalDirectionalSeparation$
  (Definition~\ref{de:vertices-d-separated}) is the
  complementary~$\bpComplementary{ \ConditionalActive }$ of the conditional
  active relation~$\ConditionalActive$
  (Equation~\eqref{eq:conditional_active_relation} in
  Definition~\ref{de:all_the_relations}):
  \begin{equation}
    \bp{ \;\protect\ConditionalDirectionalSeparation \; \mid \AgentSubsetW }
    =
    \bpComplementary{ \ConditionalActive }
    \mtext{ or, equivalently, }
           \bgent\ConditionalDirectionalSeparation\cgent
      \; \mid \AgentSubsetW  \iff
      \neg \np{ \bgent\ConditionalActive\cgent }
      \qquad \bp{ \forall \bgent,\cgent \in \AGENT }
    \eqfinp 
  \end{equation}
  In other words, we have that 
  \begin{equation}
    \bset{ \np{\bgent,\cgent} \in \AGENT\times\AGENT }{
      \np{\bgent \not=\cgent} \wedge 
      \DeploymentInUPaths{\na{\np{\bgent,\cgent}}}{\graph} \subset
      \BlockedUndirectedPaths{\graph} }
    = \npComplementary{\ConditionalActive}
    \eqfinp
  \end{equation}
\end{theorem}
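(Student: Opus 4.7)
The plan is to prove the two inclusions separately, working from the decomposition
$\ConditionalActive = \Delta \cup \ConditionalAscendent \cup \ConverseConditionalAscendent \cup \ConditionalCommonCause \cup \bp{\ConditionalAscendent \cup \ConditionalCommonCause}\Cousinhood\bp{\ConverseConditionalAscendent \cup \ConverseConditionalCommonCause}$
and translating each summand into a structural statement about the endpoints of active extended-oriented paths, under the natural correspondence between the algebraic data ($\Precedence$-chains through $\Complementary{\AgentSubsetW}$) and the geometric data (non-collider segments of an active path).

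For the direction \emph{$\bgent\ConditionalActive\cgent$ implies that some active extended-oriented path with endpoints $\np{\bgent,\cgent}$ exists}, I would do a case analysis on the summand containing the pair. The diagonal $\Delta$ yields a path of length~0 or~1, which is active by definition. A chain $\bgent \Precedence v_1 \Precedence \cdots \Precedence v_{k-1} \Precedence \cgent$ with $v_1,\ldots,v_{k-1}\in\Complementary{\AgentSubsetW}$ witnessing $\bgent \ConditionalAscendent \cgent$ lifts to an extended-oriented path with all orientations $+1$, and each interior transition then matches case~1 of Definition~\ref{de:ActivePaths}; $\ConverseConditionalAscendent$ is symmetric, with orientations all $-1$ (case~2). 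For $\bgent \ConditionalCommonCause \cgent$, the factorization $\ConverseConditionalAscendent \Delta_{\Complementary{\AgentSubsetW}} \ConditionalAscendent$ produces a fork vertex $v \in \Complementary{\AgentSubsetW}$ at which the transition is case~3, with case~2 before $v$ and case~1 after. For the composite summand, each cousinhood vertex $\omega\in\AgentSubsetW$ becomes a case~4 collider (legitimate because $\AgentSubsetW \subset \TransitiveReflexiveClosure{\AgentSubsetW}$), and the full active path is assembled by concatenating the ascendent/common-cause pieces corresponding to the $(\ConditionalAscendent\cup\ConditionalCommonCause)$-head, the successive $\ConditionalCommonCause$ bridges of the cousinhood, and the $(\ConverseConditionalAscendent\cup\ConverseConditionalCommonCause)$-tail.

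For the converse, I would take an active extended-oriented path from $\bgent$ to $\cgent$ and isolate its collider vertices $\eta^{(1)},\ldots,\eta^{(r)}$, i.e., the interior vertices realizing case~4. Between two consecutive colliders, or between an endpoint and its nearest collider, the path has no case~4 transition; a short inspection of the forbidden $(+1,-1)$ transition forces the orientation sequence of each such segment to take the form $(-1)^k(+1)^m$, giving respectively $\ConverseConditionalAscendent$, $\ConditionalAscendent$, or $\ConditionalCommonCause$ since all strict interior vertices of a segment lie in $\Complementary{\AgentSubsetW}$. If $r=0$, the pair $\np{\bgent,\cgent}$ falls directly in $\Delta \cup \ConditionalAscendent \cup \ConverseConditionalAscendent \cup \ConditionalCommonCause$. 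If $r\ge 1$, each collider $\eta^{(i)} \in \TransitiveReflexiveClosure{\AgentSubsetW}$ is promoted to a witness $\omega^{(i)} \in \AgentSubsetW$: take $\omega^{(i)}=\eta^{(i)}$ when $\eta^{(i)}\in\AgentSubsetW$; otherwise pick a minimal descent chain $\eta^{(i)} \Precedence \cdots \Precedence \omega^{(i)}$ whose strict intermediates, together with $\eta^{(i)}$ itself, all lie in $\Complementary{\AgentSubsetW}$. Appending these descent chains to the neighbouring non-collider segments, one checks that the first and last segments remain in $\ConditionalAscendent \cup \ConditionalCommonCause$ and $\ConverseConditionalAscendent \cup \ConditionalCommonCause$ respectively, and that each middle segment becomes a $\ConditionalCommonCause$ bridge between $\omega^{(i)}$ and $\omega^{(i+1)}$; this exhibits $\np{\bgent,\cgent}$ in the composite summand of $\ConditionalActive$, using $\Delta_{\AgentSubsetW}\subset\Cousinhood$ for the edge case $r=1$.

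The main obstacle is the bookkeeping in this last step: one must verify that appending the descent $\eta^{(i)} \Precedence \cdots \Precedence \omega^{(i)}$ merges cleanly into the adjacent $\ConditionalAscendent$ or $\ConditionalCommonCause$ without placing a vertex of $\AgentSubsetW$ at a strict interior position, since the defining chains of $\ConditionalAscendent$ and $\ConditionalCommonCause$ require their strict intermediates to lie in $\Complementary{\AgentSubsetW}$. The key observation is that when $\eta^{(i)}\notin\AgentSubsetW$ the vertex $\eta^{(i)}$ automatically lies in $\Complementary{\AgentSubsetW}$ and is therefore absorbed as a legitimate interior vertex of the extended chain, whereas when $\eta^{(i)}\in\AgentSubsetW$ no extension is needed at all. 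Separate routine care is required for degenerate cases (loops at an endpoint, length-$0$ paths, cycles passing through $\bgent$ or $\cgent$), but these do not affect the structural argument.
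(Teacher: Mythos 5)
Your proposal is correct, and the forward direction (from \( \bgent\ConditionalActive\cgent \) to the existence of an active \undirectedEdgePath) proceeds essentially as the paper does in Lemmas~\ref{lem:BlockedUndirectedPaths_elementary} and~\ref{lem:ConditionalDirectionalSeparation_seilpmi_relation-two}: build a witness path for each of the five summands and check the junctions. The converse direction, however, is organized quite differently. The paper proves it by induction on the path length (Lemma~\ref{lem:induction-lemma}), peeling off the last edge and tracking whether the path ends in \(+1\) or \(-1\) via the auxiliary relations \( \ConditionalActivePlus \) and \( \ConditionalActiveMinus \), whose absorption properties \( \ConditionalActiveMinus\Delta_{\Complementary{\AgentSubsetW}}\Precedence\subset\ConditionalActivePlus \) and \( \ConditionalActivePlus\Delta_{\TransitiveReflexiveClosure{\AgentSubsetW}}\Converse{\Precedence}\subset\ConditionalActiveMinus \) drive the induction step; the mismatch between colliders living in \( \TransitiveReflexiveClosure{\AgentSubsetW} \) and cousinhood vertices living in \( \AgentSubsetW \) is then resolved once, purely algebraically, by Lemma~\ref{le:cond-act-equals-cond-act-star} (\( \ConditionalActive=\ConditionalActiveStar \)). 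You instead decompose the active path globally into collider-free segments, read off the relational type of each segment from its forced orientation pattern \( (-1)^{k}(+1)^{m} \), and resolve the same mismatch by hand, promoting each collider \( \eta^{(i)}\in\TransitiveReflexiveClosure{\AgentSubsetW} \) to a witness \( \omega^{(i)}\in\AgentSubsetW \) along a first-hitting descent chain and absorbing that chain into the adjacent \( \ConditionalAscendent \) or \( \ConditionalCommonCause \) factor --- which is exactly the content the paper packages into \( \CousinhoodStar \) and the identity \( \TransitiveReflexiveClosureParentalPrecedence\Delta_{\AgentSubsetW}\TransitiveReflexiveClosureConverseParentalPrecedence=\TransitiveReflexiveClosureParentalPrecedence\Delta_{\TransitiveReflexiveClosure{\Precedence}\AgentSubsetW}\TransitiveReflexiveClosureConverseParentalPrecedence \). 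Your route is more transparent structurally (one sees at a glance where each summand of \( \ConditionalActive \) comes from), at the price of the bookkeeping you yourself flag in the promotion step; the paper's route is more mechanical and local, hence better suited to the relation-algebraic, computer-checkable style the authors are aiming for, but it hides the geometric picture inside the auxiliary \(\pm\) relations. Both are complete proofs; your observation that \( \eta^{(i)}\notin\AgentSubsetW \) forces \( \eta^{(i)}\in\Complementary{\AgentSubsetW} \), so the collider itself is a legal strict intermediate of the extended chain, is precisely the point that makes the promotion go through.
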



\begin{figure}[hbtp]
  \begin{center}
    \mbox{\includegraphics[width=1.0\textwidth]{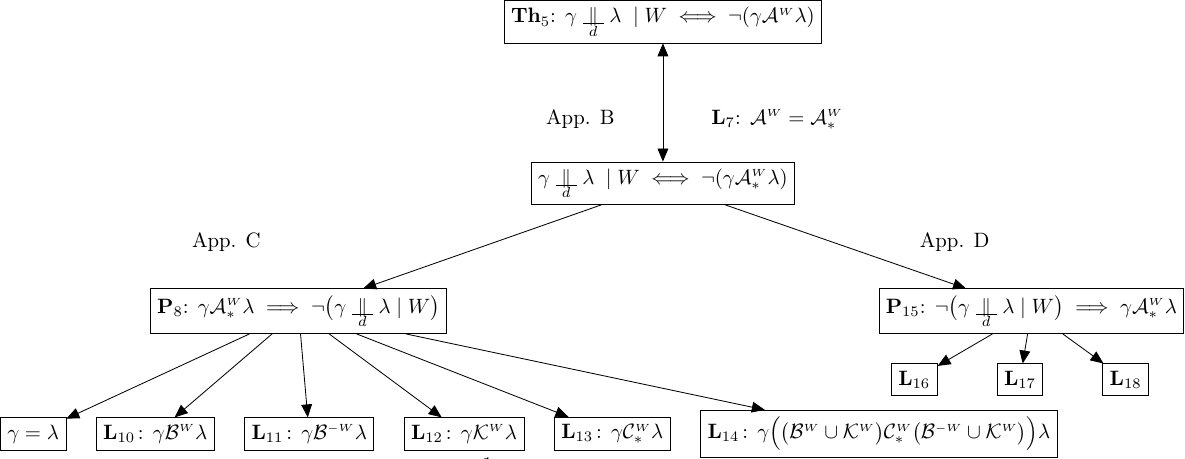}}
    \caption{
      Sketch of proof of
      Theorem~\ref{th:ConditionalDirectionalSeparation_IFF_relation},
      mentioning the corresponding Appendices and lemmata
      \label{fig:sketch_of_proof}}
  \end{center}
\end{figure}

\begin{proof}
  The proof of Theorem~\ref{th:ConditionalDirectionalSeparation_IFF_relation}
  is broken in three steps and summarized in Figure~\ref{fig:sketch_of_proof}.
  \medskip
  
  First, we will prove in postponed Lemma~\ref{le:cond-act-equals-cond-act-star}
  (in~\S\ref{appendix_Definitions_binary_relations}) that
  $\ConditionalActive =\ConditionalActiveStar$, where the binary relation~$\ConditionalActiveStar$ is defined by
  Equation~\eqref{eq:ConditionalActivePlus}.  \medskip
  
  Second, we will prove in postponed
  Proposition~\ref{pr:ConditionalDirectionalSeparation_seilpmi_relation}
  (in~Appendix~\ref{Proof_of_Theorem_implies}) that,
  for any vertices \( \bgent,\cgent \in \AGENT \), we have the implication
  \( \bgent \ConditionalActiveStar \cgent \implies \neg \bp{\bgent
    \ConditionalDirectionalSeparation \cgent \mid \AgentSubsetW} \) or,
  equivalently (see~\eqref{eq:vertices-d-separated} in
  Definition~\ref{de:vertices-d-separated}), the implication
  \[ \bgent {\ConditionalActiveStar} \cgent \implies
    \np{\bgent=\cgent} \vee \Bp{ \DeploymentInUPaths{\na{\np{\bgent,\cgent}}}{\graph} 
    \cap \ActiveUndirectedPaths{\graph} 
    \neq\emptyset }
    \eqfinp
  \]
  We simply give a sketch of proof here as details are to be found
  in Proposition~\ref{pr:ConditionalDirectionalSeparation_seilpmi_relation}
  accompanied by postponed lemmata given in
  Appendix~\ref{Proof_of_Theorem_implies}.
  The binary relation~\(  \ConditionalActiveStar \) defined
  in~\eqref{eq:conditional_active_relationStar}
  is given by the union of five relations. Then, 
  the proof of Proposition~\ref{pr:ConditionalDirectionalSeparation_seilpmi_relation}
  examines the five cases and exhibits an active path
  (one in~\( \ActiveUndirectedPaths{\graph} \), see Definition~\ref{de:ActivePaths})
  that joins the vertices~\( \bgent \) and \( \cgent \) in the five cases
  when \( \bgent \ConditionalActiveStar \cgent \).
  \medskip
  
  Third, we will prove in Proposition~\ref{th:main_isimplied}
  (in~Appendix~\ref{Proof_of_Theorem_isimplied}) that, for any vertices
  \( \bgent,\cgent \in \AGENT \), we have the implication
  $\neg \bp{\bgent \ConditionalDirectionalSeparation \cgent \mid \AgentSubsetW}
  \implies \bgent \ConditionalActiveStar \cgent $ or, equivalently
  (see~\eqref{eq:vertices-d-separated} in
  Definition~\ref{de:vertices-d-separated}), that
  \[ \np{\bgent=\cgent} \vee
   \Bp{ \DeploymentInUPaths{\na{\np{\bgent,\cgent}}}{\graph} 
    \cap \ActiveUndirectedPaths{\graph} 
    \neq\emptyset }
    \implies
    \bgent {\ConditionalActiveStar} \cgent
    \eqfinp
  \]
  We give again a sketch of proof. The case $\bgent = \cgent$ is immediate.
  Thus, we assume that there exists an \undirectedEdgePath\
  \( \UndirectedPath \in \UPATH\np{\graph} \) joining the vertices~$\bgent$ and
  $\cgent$ and such that \( \UndirectedPath \) is active.  If the path length
  of~\( \UndirectedPath \) is equal to one, the proof easily follows.  If the
  path length of~\( \UndirectedPath \) is $\ge 2$ a proof by induction on the path
  length is obtained using Lemma~\ref{lem:induction-lemma}.  During the
  induction step, four cases are to be discussed, following the fact that the
  active triplet binary relation $\ActiveTr$ is governed by four cases.  The
  scheme of the induction step is summarized in Figure~\ref{fig:isimplied}.
  
  This ends the proof.
\end{proof}
\begin{leftbar}
  We end this section by giving the Coq statement of
  Theorem~\ref{th:ConditionalDirectionalSeparation_IFF_relation} together with
  the corresponding Coq proof dependency graph in Figure~\ref{Fig:th5dpd}.
  \begin{alectryon}
    {\small \Snippettheoremfive} 
  \end{alectryon}
\end{leftbar}

\begin{figure}[h]
  \begin{center}
    \includegraphics[width=0.9\textwidth]{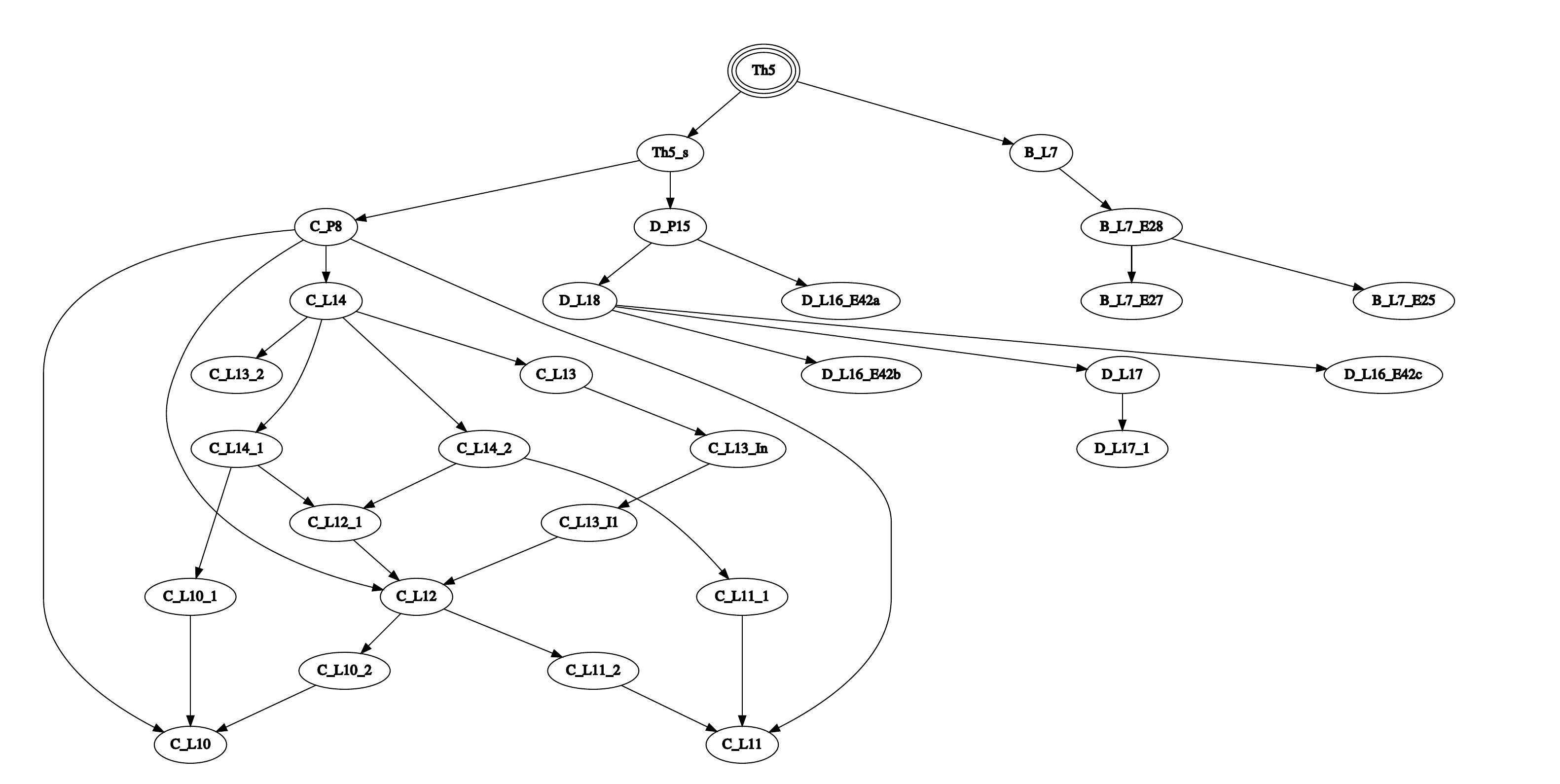}
    \caption{\label{Fig:th5dpd}
      Coq-produced dependency graph for Coq
      Theorem~\ref{th:ConditionalDirectionalSeparation_IFF_relation}
      (\texttt{B\_L7} is Lemma~\ref{le:cond-act-equals-cond-act-star},
      \texttt{D\_P15} is Proposition~\ref{th:main_isimplied}
      and \texttt{C\_P8} is Proposition~\ref{pr:ConditionalDirectionalSeparation_seilpmi_relation})
    }
  \end{center}
\end{figure}

\section{Conclusion}

Together with its two companion
papers~\citep{De-Lara-Chancelier-Heymann-2021,Heymann-De-Lara-Chancelier-2021},
this paper is a contribution to providing another perspective on conditional
independence and do-calculus.  In this paper, we have considered directed graphs
(DGs) not necessarily acyclic, and we have shown how the d-separation can be
extended beyond acyclic graphs and can be expressed and characterized as a
binary relation between vertices.
The results in this paper are instrumental in proving those
in~\citep{De-Lara-Chancelier-Heymann-2021} on topological conditional separation
(t-separation), hence in the use of t-separation to establish conditional
independence in~\citep{Heymann-De-Lara-Chancelier-2021}.

Moreover, there are other perspectives.  First, such developpement of a theory
based on binary relations is interesting in itself as it makes it amenable to
computer aided proof.  Second, there are other notions of separation (between
subsets) in graph theory that can also be expressed by means of binary relations
(between vertices).  We illustrate this with an example.  Let
\( \npOrientedGraph \) be a graph, $\AgentSubsetW\subset\VERTEX$ be a subset of
vertices, and $\bgent$, $\cgent\in \AGENT$ be vertices.  Then, the three following
statements are equivalent: any path from~$\bgent$ to~$\cgent$ passes
through~$\AgentSubsetW$; there does not exist a path from~$\bgent$ to~$\cgent$
which passes through~$\Complementary{\AgentSubsetW}$;
\( \neg \np{ \bgent \npTransitiveClosure{ \EDGE \Delta_{\Complementary{\AgentSubsetW}}
    \EDGE} \cgent } \).

\appendix

\section{Comments on Coq in the appendices proofs}
\label{Comments_on_Coq_in_the_appendices_proofs}

 The Coq proof of Theorem~\ref{th:ConditionalDirectionalSeparation_IFF_relation} closely follows the
 mathematical proof detailed in the three following Appendices
 (see also Figures~\ref{fig:sketch_of_proof} and~\ref{Fig:th5dpd}) and, for each (mathematical) lemma,
 we will give its Coq name.
 
 The Coq proof is obtained with the help of a novel library, developed by
 J.P. Chancelier and which provides tools
 for reasoning on binary relations and on active \undirectedEdgePaths.
 This library~\cite{Chancelier-Coq:2024} is publicly available on GitHub at URL
 \begin{center}
   \texttt{https://github.com/jpc-cermics/relations.git}
 \end{center}
  
 \subsection{Binary relations}

 All the mathematical objects described in \S\ref{defs:binary-rel}, together with
 associated lemmata for manipulating them, are implemented in a Coq library.
 As relations are coded as sets, the library we have developed (mainly contained in file \texttt{rel.v})
 is based on the \emph{mathcomp} implementation of sets
 \texttt{classical\_sets.v}. It also contains a SSReflect reimplementation of
 transitive (reflexive) closures of relation (\texttt{ssrel.v}) that are found in the Coq standard
 library.  Moreover --- as proofs on relations proceed by rewriting rules using intensively associative and commutative
 properties of union and intersection of relations --- we use the
 \texttt{AAC\_tactics} to ease equality proofs between long expressions with
 relations containing unions, intersections, compositions and diagonal
 relations. For this to be possible, some specific properties of relations with
 respect to \texttt{AAC} are to be listed and proved in Coq. This is done in
 \texttt{aacset.v}.
 
 \subsection{\UndirectedEdgePaths\ in a graph}

 We have also developed a library for manipulating paths in a graph (\texttt{seq1.v}), which contains
 the objects described in \S\ref{UndirectedEdgePaths_in_a_graph}.
 Even if the the concepts of \undirectedEdgePaths\ and active \undirectedEdgePaths\
 are quite specific to d-separation, the library we have developed contains many tools for
 manipulating node or edge paths which are of more general interest. 
  
 \subsection{Coq proofs of Theorem~\ref{th:ConditionalDirectionalSeparation_IFF_relation}}

 Two files are specifically devoted to the proof of Theorem~\ref{th:ConditionalDirectionalSeparation_IFF_relation}.
 The file \texttt{paper\_relations.v} contains the definition of the binary relations
 contained in Definition~\ref{de:all_the_relations} and~\ref{de:all_the_relations_new} and in 
 Lemma~\ref{le:lemma8}, together with some lemmata on their respective properties.
 The file \texttt{paper\_csbr.v} contains the proofs of the propositions and lemmata which are mathematically proved in
 the following appendices.

 \section{The star conditional active relation~$\ConditionalActiveStar$}
\label{appendix_The_star_conditional_active_relation}

In~\S\ref{appendix_Definitions_binary_relations}, we introduce new binary
relations to define the star conditional active
relation~$\ConditionalActiveStar$.
In~\S\ref{Proof_that_ConditionalActive=ConditionalActiveStar}, we prove that
$\ConditionalActive =\ConditionalActiveStar$, where
the conditional active relation~$\ConditionalActive$ has been introduced
in Equation~\eqref{eq:conditional_active_relation} of
Definition~\ref{de:all_the_relations}. 

\subsection{Definition and properties of the star conditional active relation~$\ConditionalActiveStar$}
\label{appendix_Definitions_binary_relations}

We refer the reader to Definition~\ref{de:all_the_relations} for the definitions
of basic binary relations.
We add two new ones.

\begin{definition}
  \label{de:all_the_relations_new}
  Let \( \npOrientedGraph \) be a graph, 
  and $\AgentSubsetW\subset\VERTEX$ be a subset of vertices.
  We introduce the notation
  \begin{equation}
    \TransitiveReflexiveClosureOfSet{\AgentSubsetW}=
    \TransitiveReflexiveClosure{\EDGE}\AgentSubsetW
    \eqfinv
    \label{eq:TransitiveReflexiveClosure_AgentSubsetW}
  \end{equation}
  and we define the \emph{star conditional cousinhood relation}
  \begin{align}
    \CousinhoodStar
    &=
      \bpTransitiveClosure{ \Delta_{\TransitiveReflexiveClosureOfSet{\AgentSubsetW}}
      \ConditionalCommonCause \Delta_{\TransitiveReflexiveClosureOfSet{\AgentSubsetW}} }
      \cup 
      \Delta_{\TransitiveReflexiveClosureOfSet{\AgentSubsetW}}
      \eqfinv
      \label{eq:CousinhoodStar}
      \intertext{and the \emph{star conditional active relation} } 
    \ConditionalActiveStar
    &=
      \Delta \cup 
      \ConditionalAscendent \cup \ConverseConditionalAscendent \cup \ConditionalCommonCause
      \cup
      \bp{\ConditionalAscendent \cup \ConditionalCommonCause}
      \CousinhoodStar
      \bp{\ConverseConditionalAscendent \cup \ConverseConditionalCommonCause}
      \eqfinp
      \label{eq:conditional_active_relationStar}
  \end{align}
\end{definition}
Notice that the {star conditional cousinhood relation}~\( \CousinhoodStar \)
in~\eqref{eq:CousinhoodStar} is the relation~\( \Cousinhood \)
in~\eqref{eq:Cousinhood} with \( \AgentSubsetW \) replaced
by~\( \TransitiveReflexiveClosureOfSet{\AgentSubsetW} \), and the {star
  conditional active relation}~$\ConditionalActiveStar$
in~\eqref{eq:conditional_active_relationStar} is the
relation~\( \ConditionalActive \) in~\eqref{eq:conditional_active_relation} with
\( \Cousinhood \) replaced by~\( \CousinhoodStar \).

\subsection{Proof that $\ConditionalActive =\ConditionalActiveStar$}
\label{Proof_that_ConditionalActive=ConditionalActiveStar}

Recall that the conditional active relation~$\ConditionalActive$ has been introduced
in Equation~\eqref{eq:conditional_active_relation} of
Definition~\ref{de:all_the_relations}.

\begin{lemma}
  (Coq \PY{k+kn}{Lemma}~\PY{n+nf}{B\_L\ref{le:cond-act-equals-cond-act-star}})
  \label{le:cond-act-equals-cond-act-star}
  We have that
  \begin{equation}
    \ConditionalActive =\ConditionalActiveStar
    \eqfinp
  \end{equation}
\end{lemma}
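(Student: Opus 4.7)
The inclusion $\ConditionalActive \subset \ConditionalActiveStar$ is immediate: from $\AgentSubsetW \subset \TransitiveReflexiveClosure{\AgentSubsetW}$ one obtains $\Delta_{\AgentSubsetW} \subset \Delta_{\TransitiveReflexiveClosure{\AgentSubsetW}}$, hence $\Cousinhood \subset \CousinhoodStar$ by comparing~\eqref{eq:Cousinhood} with~\eqref{eq:CousinhoodStar}, and therefore $\ConditionalActive \subset \ConditionalActiveStar$ by comparing~\eqref{eq:conditional_active_relation} with~\eqref{eq:conditional_active_relationStar}. All the substance lies in the reverse inclusion $\ConditionalActiveStar \subset \ConditionalActive$, which says that allowing the ``colliders'' in a cousinhood chain to range over $\TransitiveReflexiveClosure{\AgentSubsetW}$ rather than merely over $\AgentSubsetW$ produces no new pair once everything is composed with the outer factors $\ConditionalAscendent \cup \ConditionalCommonCause$ and $\ConverseConditionalAscendent \cup \ConditionalCommonCause$.

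The plan is to carry out a \emph{descendant collapse}. Given any $\bgent' \in \TransitiveReflexiveClosure{\AgentSubsetW}$, either $\bgent' \in \AgentSubsetW$ already, or, since $\TransitiveReflexiveClosure{\AgentSubsetW} = \TransitiveReflexiveClosure{\Precedence}\AgentSubsetW$, a shortest $\Precedence$-path from $\bgent'$ to $\AgentSubsetW$ ends at some $\bgent'' \in \AgentSubsetW$ and has all earlier vertices (including $\bgent'$) in $\Complementary{\AgentSubsetW}$; this translates, via the definition~\eqref{eq:conditional_ascendent_relation} of $\ConditionalAscendent$, to $\bgent' \in \Complementary{\AgentSubsetW}$ together with $\bgent' \ConditionalAscendent \bgent''$. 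I would then prove the extension identities: whenever $\bgent' \in \Complementary{\AgentSubsetW}$ and $\bgent' \ConditionalAscendent \bgent''$,
\[
  \ConditionalAscendent \bgent' \subset \ConditionalAscendent \bgent''
  \eqsepv
  \ConditionalCommonCause \bgent' \subset \ConditionalCommonCause \bgent''
\]
together with the two symmetric inclusions for the aftersets $\bgent' \ConverseConditionalAscendent$ and $\bgent' \ConditionalCommonCause$. Each is established by splicing the $\Precedence$-segment witnessing $\bgent' \ConditionalAscendent \bgent''$ onto the given path: the hypothesis $\bgent' \in \Complementary{\AgentSubsetW}$ supplies the $\Delta_{\Complementary{\AgentSubsetW}}$ factor needed at the splice, and the newly introduced intermediate vertices lie in $\Complementary{\AgentSubsetW}$ by the very definition of $\ConditionalAscendent$.

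The main proof then runs as follows. Take $\np{\bgent,\cgent} \in \ConditionalActiveStar$; if the pair is witnessed by one of the first four summands of~\eqref{eq:conditional_active_relationStar} it already belongs to $\ConditionalActive$, so assume the last summand supplies the witness, namely $\bgent \bp{\ConditionalAscendent \cup \ConditionalCommonCause} u \CousinhoodStar v \bp{\ConverseConditionalAscendent \cup \ConditionalCommonCause} \cgent$. Expanding $\CousinhoodStar$ via~\eqref{eq:CousinhoodStar} yields a chain $u = v_0, v_1, \ldots, v_n = v$ in $\TransitiveReflexiveClosure{\AgentSubsetW}$ with $v_{i-1} \ConditionalCommonCause v_i$ for $1 \le i \le n$ (the case $n = 0$ covering the $\Delta_{\TransitiveReflexiveClosure{\AgentSubsetW}}$ summand). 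Replacing each $v_i$ by the associated $w_i \in \AgentSubsetW$ furnished by the descendant collapse and applying the extension identities on both sides of the common ancestor witnessing $v_{i-1} \ConditionalCommonCause v_i$ yields $w_{i-1} \ConditionalCommonCause w_i$; applying them once at each end yields $\bgent \bp{\ConditionalAscendent \cup \ConditionalCommonCause} w_0$ and $w_n \bp{\ConverseConditionalAscendent \cup \ConditionalCommonCause} \cgent$. Since all $w_i \in \AgentSubsetW$, the chain witnesses $w_0 \Cousinhood w_n$ via~\eqref{eq:Cousinhood}, and therefore $\bgent \ConditionalActive \cgent$.

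The main technical obstacle is the common-cause branch of the extension identity $\ConditionalCommonCause \bgent' \subset \ConditionalCommonCause \bgent''$: its witness decomposes via~\eqref{eq:common_cause} as $\TransitiveClosureConverseParentalPrecedence$ followed by $\TransitiveClosureParentalPrecedence$, and the segment $\bgent' \ConditionalAscendent \bgent''$ must be attached to the second factor (not the first), with the verification that all intermediate vertices produced after splicing lie in $\Complementary{\AgentSubsetW}$. This is precisely the place where the condition $\bgent' \in \Complementary{\AgentSubsetW}$, rather than merely $\bgent' \in \TransitiveReflexiveClosure{\AgentSubsetW}$, is indispensable, and where a careful match between the definitions~\eqref{eq:common_cause} and~\eqref{eq:conditional_ascendent_relation} has to be recorded.
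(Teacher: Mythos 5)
Your proof is correct, and it rests on the same core mechanism as the paper's: a vertex of $\TransitiveReflexiveClosure{\Precedence}\AgentSubsetW$ either lies in $\AgentSubsetW$ or reaches $\AgentSubsetW$ by a descending path whose non-terminal vertices avoid $\AgentSubsetW$, and that descending segment can be absorbed into the adjacent closure factors. The difference is one of packaging. The paper stays entirely at the level of relation algebra: it proves $\TransitiveReflexiveClosure{\relation}\Bgent=\npTransitiveReflexiveClosure{\Delta_{\Complementary{\Bgent}}\relation}\Bgent$ by induction, combines this with the easy inclusion $\Delta_{\relation\Bgent}\subset\relation\Delta_{\Bgent}\Converse{\relation}$ to obtain the single identity $\TransitiveReflexiveClosureParentalPrecedence\Delta_{\AgentSubsetW}\TransitiveReflexiveClosureConverseParentalPrecedence=\TransitiveReflexiveClosureParentalPrecedence\Delta_{\np{\TransitiveReflexiveClosure{\Precedence}\AgentSubsetW}}\TransitiveReflexiveClosureConverseParentalPrecedence$ of Equation~\eqref{eq:star-no-star}, and then finishes with one syntactic observation: in the expression~\eqref{eq:conditional_active_relation}, every occurrence of $\Delta_{\AgentSubsetW}$ sits between a relation ending in $\TransitiveReflexiveClosureParentalPrecedence$ and one beginning with $\TransitiveReflexiveClosureConverseParentalPrecedence$, so that this one identity disposes of all occurrences --- the two outer attachments and every link of the cousinhood chain --- simultaneously. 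Your descendant collapse together with the four extension identities is exactly the pointwise unpacking of that identity; it costs you a link-by-link traversal of the $\CousinhoodStar$ chain and a separate treatment of the two ends, but it has the merit of making explicit where the hypothesis $\bgent'\in\Complementary{\AgentSubsetW}$ is consumed, namely at the splice via the absorption $\TransitiveClosureParentalPrecedence\ParentalPrecedence\subset\TransitiveClosureParentalPrecedence$, which you correctly flag as the delicate point in the common-cause branch. Both arguments are complete; the paper's is shorter and closer to its stated goal of computer-checkable relation-algebraic reasoning, while yours would be the natural one to write if one preferred to argue on vertices and witnessing paths.
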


\begin{proof}
  The proofs is in three steps.
  \medskip
  
  \noindent $\bullet$ (\PY{k+kn}{Lemma}~~\PY{n+nf}{B\_L\ref{le:cond-act-equals-cond-act-star}\PYZus{}E\ref{eq:R-kleene-G-equals}}\PY{o})
  We prove that
  \begin{align}
    \forall \relation \subset \AGENT\times\AGENT \eqsepv
    \forall \Bgent \subset \AGENT \eqsepv
    \TransitiveReflexiveClosure{\relation}{\Bgent}
    & =
      \npTransitiveReflexiveClosure{\Delta_{\Complementary{\Bgent}}\relation} \Bgent 
      \eqfinp
      \label{eq:R-kleene-G-equals}
  \end{align}
  For this purpose, we prove the following induction
  assumption~\( {\cal H}_n \): for any $n\ge 1$, we have that
  $\np{\cup_{k=0}^n \relation^k}\Bgent = \bp{ \cup_{k=0}^n
    \np{\Delta_{\Complementary{\Bgent}} \relation}^k }\Bgent$, where we recall the
  convention $\relation^0=\Delta$.

  As a preliminary result, for any \( \Bgent, \Cgent\subset \AGENT \), from the
  sequence of equalities
  \( \np{\Delta \Bgent} \cup \np{\relation \Cgent} = \Bgent \cup \bp{ \np{\relation \Cgent}
    \backslash \Bgent}
  = \Bgent \cup  \np{\Delta_{\Complementary{\Bgent}} \relation \Cgent}
  =\np{\Delta \Bgent} \cup  \np{\Delta_{\Complementary{\Bgent}} \relation
    \Cgent} \), we deduce that
  \begin{equation}
    \np{\Delta \Bgent} \cup \np{\relation \Cgent}=
    \np{\Delta \Bgent} \cup  \np{\Delta_{\Complementary{\Bgent}} \relation \Cgent}
    \eqfinp    
    \label{eq:preliminary_result}
  \end{equation}
  Thus, with \( \Cgent=\Bgent \), we obtain that $\np{\cup_{k=0}^1 \relation^k}\Bgent
  =  \bp{ \cup_{k=0}^1 \np{\Delta_{\Complementary{\Bgent}} \relation}^k
  }\Bgent$, that is, assumption~\( {\cal H}_1 \) holds true. 
  
  Now, we suppose that, for a given $n\ge 1$, the induction assumption~\( {\cal H}_n
  \) holds true. Then, we have that 
  \begin{align*}
    \np{\cup_{k=0}^{n+1} \relation^k}\Bgent
    &=
      \np{\Delta \Bgent} \cup
      \Bp{\relation \bp{\np{\cup_{k=0}^n \relation^k}}\Bgent}
      \tag{using the convention $\relation^0=\Delta$}
    \\
    &=
      \np{\Delta \Bgent} \cup
      \Bp{\Delta_{\Complementary{\Bgent}} \relation \bp{\np{\cup_{k=0}^n \relation^k}}\Bgent}
      \intertext{using the preliminary result~\eqref{eq:preliminary_result}
      but with the binary relation \( \relation \bp{\np{\cup_{k=0}^n
      \relation^k}} \)}
    &=
      \np{\Delta \Bgent} \cup
      \bgp{\Delta_{\Complementary{\Bgent}} \relation
      \bp{ \cup_{k=0}^n \np{\Delta_{\Complementary{\Bgent}} \relation}^k }\Bgent}
      \tag{using the induction assumption~\( {\cal H}_n \)}
    \\
    &=
      \np{\Delta \Bgent} \cup \Bp{
      \bp{ \cup_{k=1}^{n+1} \np{\Delta_{\Complementary{\Bgent}} \relation}^k }\Bgent}
    \\
    &=
      \bp{ \cup_{k=0}^{n+1} \np{\Delta_{\Complementary{\Bgent}} \relation}^k }\Bgent
      \eqfinp
      \tag{as \( \np{\Delta_{\Complementary{\Bgent}} \relation}^0 =\Delta\)}
  \end{align*}
  Thus, we have proven the induction assumption~\( {\cal H}_{n+1} \).
  \medskip
  
  Now, let us suppose that
  $\bgent \in \TransitiveReflexiveClosure{\relation}\Bgent$. Then, there exists a
  positive integer $n\ge 1$ such that
  $\bgent \in \np{\cup_{k=0}^n \relation^k}\Bgent$; using the just proven
  property~\( {\cal H}_n \), we get that
  $\bgent \in \bp{ \cup_{k=0}^{n} \np{\Delta_{\Complementary{\Bgent}} \relation}^k
  }\Bgent$ and, therefore,
  $\bgent \in
  \npTransitiveReflexiveClosure{\Delta_{\Complementary{\Bgent}}\relation}\Bgent$.
  Thus, we have shown that
  $ \TransitiveReflexiveClosure{\relation}\Bgent \subset
  \npTransitiveReflexiveClosure{\Delta_{\Complementary{\Bgent}}\relation}\Bgent$.
  The converse inclusion is easier to prove as
  \( \Delta_{\Complementary{\Bgent}}\relation \subset\relation \).  Finally, we have shown
  the equality
  $ \TransitiveReflexiveClosure{\relation}\Bgent =
  \npTransitiveReflexiveClosure{\Delta_{\Complementary{\Bgent}}\relation}\Bgent$,
  which is~\eqref{eq:R-kleene-G-equals}.  \medskip
  
  \noindent $\bullet$ (\PY{k+kn}{Lemma}~~\PY{n+nf}{B\_L\ref{le:cond-act-equals-cond-act-star}\PYZus{}E\ref{eq:delta-R-G-subset-R-delta-G-Rinv}}\PY{o})
  The following inclusion is easy to prove:
  \begin{align}
    \forall \relation \subset \AGENT\times\AGENT \eqsepv
    \forall \Bgent \subset \AGENT \eqsepv
    \Delta_{\relation\Bgent} 
    &\subset 
      \relation\Delta_{\Bgent}\Converse{\relation}
      \eqfinp
      \label{eq:delta-R-G-subset-R-delta-G-Rinv}
  \end{align}
  
  \noindent $\bullet$  (\PY{k+kn}{Lemma}~~\PY{n+nf}{B\_L\ref{le:cond-act-equals-cond-act-star}\PYZus{}E\ref{eq:star-no-star}}\PY{o})
  We prove that
  \begin{align}
    \TransitiveReflexiveClosureParentalPrecedence  \Delta_{\AgentSubsetW}
    \TransitiveReflexiveClosureConverseParentalPrecedence
    & =
      \TransitiveReflexiveClosureParentalPrecedence \Delta_{\np{\TransitiveReflexiveClosure{\Precedence}\AgentSubsetW}}
      \TransitiveReflexiveClosureConverseParentalPrecedence\eqfinp
      \label{eq:star-no-star}
  \end{align}
  Using Equation~\eqref{eq:R-kleene-G-equals} with $\relation{=}\Precedence$ and
  $\Bgent{=}\AgentSubsetW$ gives
  $\TransitiveReflexiveClosure{\Precedence}\AgentSubsetW
  =\npTransitiveReflexiveClosure{\Delta_{\Complementary{\AgentSubsetW}}\Precedence}
  \AgentSubsetW = \TransitiveReflexiveClosureParentalPrecedence \AgentSubsetW$.
  Combined with the Inclusion~\eqref{eq:delta-R-G-subset-R-delta-G-Rinv}, we get
  \( \Delta_{\TransitiveReflexiveClosure{\Precedence}\AgentSubsetW} 
  = \Delta_{\TransitiveReflexiveClosureParentalPrecedence \AgentSubsetW} 
  \subset \TransitiveReflexiveClosureParentalPrecedence \Delta_{\AgentSubsetW}
  \TransitiveReflexiveClosureConverseParentalPrecedence
  \). 
  Thus, we obtain that 
  \begin{align*}
    \TransitiveReflexiveClosureParentalPrecedence  \Delta_{\np{\TransitiveReflexiveClosure{\Precedence}\AgentSubsetW}}
    \TransitiveReflexiveClosureConverseParentalPrecedence
    & \subset\TransitiveReflexiveClosureParentalPrecedence
      \np{\TransitiveReflexiveClosureParentalPrecedence \Delta_{\AgentSubsetW} \TransitiveReflexiveClosureConverseParentalPrecedence} 
      \TransitiveReflexiveClosureConverseParentalPrecedence
      =
      {\TransitiveReflexiveClosureParentalPrecedence \Delta_{\AgentSubsetW} \TransitiveReflexiveClosureConverseParentalPrecedence}\eqfinp
  \end{align*}
  Thus, we have obtained the inclusion
  \(
  \TransitiveReflexiveClosureParentalPrecedence  \Delta_{\AgentSubsetW}
  \TransitiveReflexiveClosureConverseParentalPrecedence
  \supset
  \TransitiveReflexiveClosureParentalPrecedence \Delta_{\np{\TransitiveReflexiveClosure{\Precedence}\AgentSubsetW}}
  \TransitiveReflexiveClosureConverseParentalPrecedence
  \).
  The reverse inclusion  follows from the fact that
  \(\AgentSubsetW \subset  \TransitiveReflexiveClosureOfSet{\AgentSubsetW}=
  \TransitiveReflexiveClosure{\EDGE}\AgentSubsetW \)
  by~\eqref{eq:TransitiveReflexiveClosure_AgentSubsetW},   
  which gives
  \(
  \TransitiveReflexiveClosureParentalPrecedence  \Delta_{\AgentSubsetW}\TransitiveReflexiveClosureConverseParentalPrecedence
  \subset 
  \TransitiveReflexiveClosureParentalPrecedence \Delta_{\np{\TransitiveReflexiveClosure{\Precedence}\AgentSubsetW}}\TransitiveReflexiveClosureConverseParentalPrecedence
  \).

  \medskip
  \noindent $\bullet$  (\PY{k+kn}{Lemma}~\PY{n+nf}{B\_L\ref{le:cond-act-equals-cond-act-star}})
  Finally, we prove that
  $\ConditionalActive =\ConditionalActiveStar$. For that purpose, it suffices
  to show that replacing the subexpressions $\Delta_{\AgentSubsetW}$ by
  $\Delta_{\np{\TransitiveReflexiveClosure{\Precedence}\AgentSubsetW}}$ in the
  expression~\eqref{eq:conditional_active_relation} of $\ConditionalActive$ does not change the relation. Using the
  definition of $\ConditionalActive$ in
  Equation~\eqref{eq:conditional_active_relation}, we obtain that
  $\Delta_{\AgentSubsetW}$ appears only in subexpressions of the form
  $\ConditionalAscendent \Delta_{\AgentSubsetW} \ConverseConditionalAscendent$
  or
  $\ConditionalCommonCause \Delta_{\AgentSubsetW} \ConverseConditionalAscendent$
  or $\ConditionalAscendent \Delta_{\AgentSubsetW} \ConditionalCommonCause$ or
  $\ConditionalCommonCause \Delta_{\AgentSubsetW} \ConditionalCommonCause$. Now,
  using the fact that the two relations $\ConditionalAscendent$ and
  $\ConditionalCommonCause$ always end with
  $\TransitiveReflexiveClosureParentalPrecedence$ and the two relation
  $\ConverseConditionalAscendent$ and $\ConditionalCommonCause$ always start with
  $\TransitiveReflexiveClosureConverseParentalPrecedence$ we obtain that
  $\Delta_{\AgentSubsetW}$ appears only in subexpressions of the form
  $\TransitiveReflexiveClosureParentalPrecedence \Delta_{\AgentSubsetW}
  \TransitiveReflexiveClosureConverseParentalPrecedence$.  We conclude, using
  Equation~\eqref{eq:star-no-star}, that $\Delta_{\AgentSubsetW}$ can be
  replaced by
  $\Delta_{\np{\TransitiveReflexiveClosure{\Precedence}\AgentSubsetW}}$ in
  $\ConditionalActive$ without changing the relation.
  \medskip

  This ends the proof.
\end{proof}

\section{Proof of \protect\( \bgent \ConditionalActiveStar \cgent \implies \neg
 \np{\bgent \protect\ConditionalDirectionalSeparation \cgent \mid \AgentSubsetW }
  \) }
\label{Proof_of_Theorem_implies}

The following
Proposition~\ref{pr:ConditionalDirectionalSeparation_seilpmi_relation}, that we
are going to prove, is half of the proof of
Theorem~\ref{th:ConditionalDirectionalSeparation_IFF_relation}.  It relies on
six postponed lemmata given in this Appendix~\ref{Proof_of_Theorem_implies}.

\subsection{Proposition~\ref{pr:ConditionalDirectionalSeparation_seilpmi_relation}}

\begin{proposition}  (Coq \PY{k+kn}{Proposition}~\PY{n+nf}{C\_P\ref{pr:ConditionalDirectionalSeparation_seilpmi_relation}})
  \label{pr:ConditionalDirectionalSeparation_seilpmi_relation}  
  Let \( \npOrientedGraph \) be a graph, 
  and $\AgentSubsetW\subset\VERTEX$ be a subset of vertices.
  Let \( \bgent, \cgent \in\AGENT \) be vertices.

  We have the implication
  \begin{equation}
    \bgent \ConditionalActiveStar \cgent \implies 
    \neg \bp{\bgent \ConditionalDirectionalSeparation \cgent \mid \AgentSubsetW}
    \label{eq:rel-to-d-sep}
  \end{equation}
  where $\ConditionalActiveStar$ is the star conditional active
  relation~\eqref{eq:conditional_active_relationStar}
  and $\ConditionalDirectionalSeparation$ is the
  conditional directional separation relation~\eqref{eq:vertices-d-separated}.
\end{proposition}

\begin{proof}
  Let \( \bgent, \cgent \in\AGENT \) be vertices, and assume that
  $\bgent \ConditionalActiveStar \cgent$ where $\ConditionalActiveStar$ is the conditional active
  relation~\eqref{eq:conditional_active_relationStar}.
  We start by proving that either $\bgent = \cgent$ or there exists an active
  \undirectedEdgePath\, $\UndirectedPath$, joining the vertices~$\bgent$ and $\cgent$.
  
  Now, by~\eqref{eq:conditional_active_relationStar}, giving
  $\ConditionalActiveStar$, 
  we have that 
  \[
    \bgent \bgp{\Delta \cup \ConditionalAscendent \cup
      \ConverseConditionalAscendent \cup \ConditionalCommonCause \cup
      \ConditionalActiveThreeStar} \cgent
    \eqfinp
  \]
  
  We consider the five cases, one by one. The first case is
  \( \bgent \Delta \cgent \), that is, \(\bgent = \cgent\) and the conclusion is immediate.
  Now, for each of the remaining case, we are going to show that
  there exists \(  \UndirectedPath \in \DeploymentInUPaths{\na{\np{\bgent,\cgent}}}{\graph} 
  \cap \ActiveUndirectedPaths{\graph} \) that joins $\bgent$ and $\cgent$, recalling that the
  deployment~\( \DeploymentInUPaths{\relation}{\OrientedGraph} \) in
  \undirectedEdgePaths\ of a binary relation~$\relation$ has been defined
  in~\eqref{eq:DeploymentInUPaths}.
  
  The second case is \( \bgent \ConditionalAscendent \cgent \).
  We conclude that there exists \( \UndirectedPath\in \DeploymentInUPaths{\np{\bgent,\cgent}}{\graph} 
  \cap \ActiveUndirectedPaths{\graph} \)
  thanks
  to~\eqref{eq:BlockedUndirectedPaths_elementary_Delta_ComplementaryAgentSubsetW_EDGE}
  in Lemma~\ref{lem:Bw_active}.
  
  The third case is \( \bgent \ConverseConditionalAscendent \cgent \).
  We conclude that there exists \( \UndirectedPath\in \DeploymentInUPaths{\np{\bgent,\cgent}}{\graph} 
  \cap \ActiveUndirectedPaths{\graph} \)
  thanks
  to~\eqref{eq:BlockedUndirectedPaths_elementary_Converse_EDGE_Delta_ComplementaryAgentSubsetW}
  in Lemma~\ref{lem:Bmw_active}.
  
  The fourth case is \( \bgent \ConditionalCommonCause \cgent \).
  We conclude that there exists \( \UndirectedPath\in \DeploymentInUPaths{\np{\bgent,\cgent}}{\graph} 
  \cap \ActiveUndirectedPaths{\graph} \)
  thanks
  to~\eqref{eq:BlockedUndirectedPaths_elementary_TransitiveClosure_Converse_EDGE_Delta_ComplementaryAgentSubsetW}
  in Lemma~\ref{lem:Kw_active}.

  The fifth case is $\bgent \ConditionalActiveThreeStar \cgent$.
  We conclude that there exists \( \UndirectedPath\in \DeploymentInUPaths{\np{\bgent,\cgent}}{\graph} 
  \cap \ActiveUndirectedPaths{\graph} \)
  thanks to~\eqref{eq:BlockedUndirectedPaths_elementary_TransitiveClosure_TransitiveClosure}
  in Lemma~\ref{lem:BkCwsBmKBk}.
  \medskip

  Finally, we successively have that
  \begin{align*}
    \bgent \ConditionalActiveStar \cgent
    &\implies
      (\bgent = \cgent) \vee \exists \UndirectedPath \in
      \DeploymentInUPaths{\na{\np{\bgent,\cgent}}}{\graph} \cap \ActiveUndirectedPaths{\graph}
      \nonumber
      \tag{as just proved above}
    \\
    &\implies
      \neg \Bp{(\bgent = \cgent) \vee \bcDeploymentInUPaths{\na{\np{\bgent,\cgent}}}{\graph} \subset
      \BlockedUndirectedPaths{\graph}}
      \intertext{by definition of the subset \( \BlockedUndirectedPaths{\graph}=
      \bpComplementary{\ActiveUndirectedPaths{\OrientedGraph}}\)
      of all blocked \undirectedEdgePaths\ (see Definition~\ref{de:ActivePaths})}
    &\implies
      \neg \bp{\bgent \ConditionalDirectionalSeparation \cgent \mid \AgentSubsetW}
      \eqfinp
      \tag{by~\eqref{eq:vertices-d-separated} in Definition~\ref{de:vertices-d-separated}}
  \end{align*}
  This ends the proof.
\end{proof}

\subsection{Proof of Proposition~\ref{pr:ConditionalDirectionalSeparation_seilpmi_relation}
  broken in six lemmata}

The first Lemma~\ref{lem:ActiveUndirectedPaths_subpaths_junctions_when_reconcatenating} is instrumental
for the following five lemmata. It is used to obtain active \undirectedEdgePath\ by concatenation.
Then, the following lemmata display elementary relational patterns (composition of
relations) whose deployment in \EdgePaths\ contain active
\undirectedEdgePaths. For each relation, an explicit active \undirectedEdgePath\
is built.

The binary relations used below have been introduced in
Definition~\ref{de:all_the_relations}, except for the two additional ones
defined in~\eqref{eq:CousinhoodStar} and
in~\eqref{eq:conditional_active_relationStar}.
For any positive integer \( n\geq 1\), we denote by
\( \1_{n}=\np{+1,\ldots,+1} \) (resp.  \( -\1_{n}=\np{-1,\ldots,-1} \)) the vector of
length~$n$ made of~$+1$ (resp. of~$-1$).

\subsubsection{Concatenation of \undirectedEdgePaths\ in a graph}
\label{Additional_material_on_undirectedEdgePaths_in_a_graph}

We recall that \( \UPATH\np{\OrientedGraph} \) in~\eqref{def:eopaths_gtzero} is the set of
\undirectedEdgePaths\ of positive length relative to the
graph~$\npOrientedGraph$.
Now, we develop the machinery to analyze active \undirectedEdgePaths\ by
considering decomposition into subpaths and junctions when reconcatenating.  For
this purpose, we need notation.

We denote by
\( \Tail{\ProjectionUndirectedPath}: \UPATH\np{\OrientedGraph} \to
\UPATH_{1}\np{\OrientedGraph} \) (resp.
\( \Head{\ProjectionUndirectedPath}: \UPATH\np{\OrientedGraph} \to
\UPATH_{1}\np{\OrientedGraph} \) the projection on the tail (resp. head) subpath
of an \undirectedEdgePath, defined, for
\(\UndirectedPath \in\UPATH_n\np{\OrientedGraph}\) and $n\ge 1$, by
\begin{subequations}
  \label{eq:ProjectionUndirectedPath}
  \begin{align}
    \Tail{\ProjectionUndirectedPath}\np{\UndirectedPath}
    &= 
      \Tail{\ProjectionUndirectedPath}
      \bp{ \nseqa{\np{\tail{\vertex_i},\head{\vertex_i},\orient_i}}{i\in\ic{1,n}}}
      = \ba{\np{\tail{\vertex_1},\head{\vertex_1},\orient_1}}
      \eqfinv
      \label{eq:ProjectionUndirectedPath_Tail}
    \\
    \Head{\ProjectionUndirectedPath}\np{\UndirectedPath}
    &= 
      \Head{\ProjectionUndirectedPath}
      \bp{ \nseqa{\np{\tail{\vertex_i},\head{\vertex_i},\orient_i}}{i\in\ic{1,n}}}
      = \ba{\np{\tail{\vertex_n},\head{\vertex_n},\orient_n }}
      \eqfinp
      \label{eq:ProjectionUndirectedPath_Head}
  \end{align}
\end{subequations}
\begin{figure}[hbtp]
  \begin{center}
    \fbox{\includegraphics[width=0.5\textwidth]{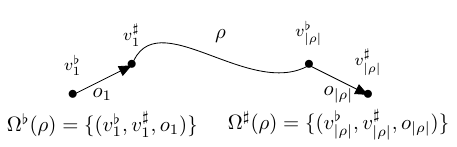}}
    \caption{
      Projection on the tail (resp. head) subpath of an \undirectedEdgePath
      \label{fig:Omega}}
  \end{center}
\end{figure}
The \undirectedEdgePath\ $\Tail{\ProjectionUndirectedPath}\np{\UndirectedPath}$
(resp.  $\Head{\ProjectionUndirectedPath}\np{\UndirectedPath}$), depicted in
Figure~\ref{fig:Omega}, is an \undirectedEdgePath\ of length one built with the
first (resp. last) extended-oriented edge of the \undirectedEdgePath\
$\np{\UndirectedPath}$.

The following
Lemma~\ref{lem:ActiveUndirectedPaths_subpaths_junctions_when_reconcatenating} is
a straightforward consequence of the definitions of active \undirectedEdgePaths\
in~\( \ActiveUndirectedPaths{\OrientedGraph} \) (see
Definition~\ref{de:ActivePaths}), of concatenation~$\ltimes$
in~\eqref{eq:concatenation_UndirectedPath} and of the projection
mappings~\(\Tail{\ProjectionUndirectedPath}\) and
\( \Head{\ProjectionUndirectedPath} \) in~\eqref{eq:ProjectionUndirectedPath}.
The proof is left to the reader.

\begin{lemma} (Coq \PY{k+kn}{Lemma}~~\PY{n+nf}{Active\_path\_cat})
  \label{lem:ActiveUndirectedPaths_subpaths_junctions_when_reconcatenating}
  Let \( \npOrientedGraph \) be a graph, 
  and $\AgentSubsetW\subset\VERTEX$ be a subset of vertices.
  Let \(\UndirectedPath \in \UPATH\np{\OrientedGraph} \)
  be an  \undirectedEdgePath\ of length~\( n \geq 2 \). We have that
  \begin{equation}
    \UndirectedPath \in \ActiveUndirectedPaths{\OrientedGraph}
    \iff
    \exists \UndirectedPathbis, \UndirectedPathter
    \in \ActiveUndirectedPaths{\OrientedGraph}
    \eqsepv 
    \UndirectedPath = \UndirectedPathbis \ltimes \UndirectedPathter
    \eqsepv
    \Head{\ProjectionUndirectedPath}\np{\UndirectedPathbis}
    \ltimes
    \Tail{\ProjectionUndirectedPath}\np{\UndirectedPathter}
    \in \ActiveUndirectedPaths{\OrientedGraph}
    \eqfinp
    \label{eq:ActiveUndirectedPaths_subpaths_junctions_when_reconcatenating}    
  \end{equation}
  \begin{leftbar}
    This Lemma~\ref{lem:ActiveUndirectedPaths_subpaths_junctions_when_reconcatenating} 
    is mathematically simple, but more involved in Coq where manipulation of active \undirectedEdgePaths\ is more tedious, as it 
    frequently requires proofs by induction on path lengths.
  \end{leftbar}
\end{lemma}

\subsubsection{Case $\bgent \ConditionalAscendent\cgent$} 
  
\begin{lemma}  (Coq \PY{k+kn}{Lemma}~\PY{n+nf}{C\_L\ref{lem:Bw_active}})
  \label{lem:Bw_active}
  Let \( \npOrientedGraph \) be a graph, 
  and $\AgentSubsetW\subset\VERTEX$ be a subset of vertices.
  For any vertices $\bgent$, $\cgent\in \VERTEX$, we have that 
  \begin{align}
    \bgent \ConditionalAscendent\cgent 
    \implies
    & 
      \mtext{there exists } \UndirectedPath \in \UPATH\np{\OrientedGraph}
      \mtext{ such that } \UndirectedPath = \Pair{\path, \1_{\cardinal{\path}} } \mtext{ and }
      \nonumber \\
    & 
      \UndirectedPath    \in 
      \DeploymentInUPaths{\np{\bgent,\cgent}}{\OrientedGraph} 
      \cap \ActiveUndirectedPaths{\OrientedGraph} 
      \eqfinp
      \label{eq:BlockedUndirectedPaths_elementary_Delta_ComplementaryAgentSubsetW_EDGE} 
  \end{align}
\end{lemma}

\begin{proof}
  We prove the implication~\eqref{eq:BlockedUndirectedPaths_elementary_Delta_ComplementaryAgentSubsetW_EDGE}.
  Let $\bgent$, $\cgent\in \VERTEX$ be such that
  \( \bgent \ConditionalAscendent\cgent \).   
  By~\eqref{eq:conditional_ascendent_relation}, we have that
  \( \ConditionalAscendent\cgent =
  \EDGE\npTransitiveReflexiveClosure{\Delta_{\Complementary{\AgentSubsetW}}\EDGE}\),
  hence that \( \bgent
  \EDGE\npTransitiveReflexiveClosure{\Delta_{\Complementary{\AgentSubsetW}}\EDGE}\cgent
  \).
    As \( \npTransitiveReflexiveClosure{\Delta_{\Complementary{\AgentSubsetW}}\EDGE}=
  \Delta \cup \cup_{n=1}^{\infty}
  \np{\Delta_{\Complementary{\AgentSubsetW}}\EDGE}^{n} \) by definition, 
  if \( \bgent \EDGE\npTransitiveReflexiveClosure{\Delta_{\Complementary{\AgentSubsetW}}\EDGE}\cgent \),
  then either 
  \( \bgent \EDGE \cgent \)
  or there exists $n\geq 1$ such that 
  \( \bgent \EDGE\np{\Delta_{\Complementary{\AgentSubsetW}}\EDGE}^{n} \cgent \).
  Thus, we consider two cases.

  If \( \bgent \EDGE \cgent \),
  the \undirectedEdgePath\ \( \ba{ \np{\bgent,\cgent,+1}} \)
  is both in \( \DeploymentInUPaths{\np{\bgent,\cgent}}{\OrientedGraph} \),
  by definition~\eqref{eq:DeploymentInUPaths},
  and belongs to~\( \ActiveUndirectedPaths{\OrientedGraph} \),
  as it is of length~1 hence is active (see Definition~\ref{de:ActivePaths}). 

  If \( \bgent\EDGE \np{\Delta_{\Complementary{\AgentSubsetW}}\EDGE}^{n} \cgent \),
  with $n\geq 1$, then 
  there exists a sequence \( \sequence{\vertex_i}{i\in\ic{0,n+1}} \) in~$\VERTEX$ such that 
  \( \vertex_{0}=\bgent \), \( \vertex_{n+1}=\cgent \), 
  and 
  \( \vertex_{i-1} \EDGE \vertex_{i} \),
  \( \vertex_{i} \in \Complementary{\AgentSubsetW} \),
  \( \vertex_{i} \EDGE \vertex_{i+1} \),
  for $i\in\ic{1,n}$.
  The following 
  \( \UndirectedPath= \bPair{\sequence{ \np{\vertex_i,\vertex_{i+1}}
    }{i\in\ic{0,n}}, \1_{n+1} } \)
  is an \undirectedEdgePath\ that 
  belongs to
  \( \DeploymentInUPaths{\np{\bgent,\cgent}}{\OrientedGraph} \),
  as \( \Projection\np{\UndirectedPath}=\np{\bgent,\cgent} \),
  and also to~\( \ActiveUndirectedPaths{\OrientedGraph} \).
  Indeed, all the \undirectedEdgeSubPaths\ 
  \( \ba{ \np{\vertex_{i-1},\vertex_{i},+1},
    \np{\vertex_{i},\vertex_{i+1},+1}} \)
  for $i\in\ic{1,n}$, satisfy Item~\ref{it:ActivePaths_case1} in
  Definition~\ref{de:ActivePaths} 
  because \( \vertex_{i} \in \Complementary{\AgentSubsetW} \).
  \medskip
  
  This ends the proof.
\end{proof}

\subsubsection{Case $\bgent \ConverseConditionalAscendent\cgent$}

\begin{lemma} (Coq \PY{k+kn}{Lemma}~\PY{n+nf}{C\_L\ref{lem:Bmw_active}})
  \label{lem:Bmw_active}
  Let \( \npOrientedGraph \) be a graph, 
  and $\AgentSubsetW\subset\VERTEX$ be a subset of vertices.
  For any vertices $\bgent$, $\cgent\in \VERTEX$, we have that
  \begin{align}
    \bgent \ConverseConditionalAscendent\cgent 
    \implies
    & 
      \mtext{there exists }\UndirectedPath \in \UPATH\np{\OrientedGraph}
      \mtext{ such that } \UndirectedPath = \Pair{\path, - \1_{\cardinal{\path}} } \mtext{ and }
      \nonumber \\
    & 
      \UndirectedPath 
      \in 
      \DeploymentInUPaths{\np{\bgent,\cgent}}{\OrientedGraph} 
      \cap \ActiveUndirectedPaths{\OrientedGraph}
      \eqfinp
      \label{eq:BlockedUndirectedPaths_elementary_Converse_EDGE_Delta_ComplementaryAgentSubsetW} 
  \end{align}
\end{lemma}

\begin{proof}
  We prove the implication~\eqref{eq:BlockedUndirectedPaths_elementary_Converse_EDGE_Delta_ComplementaryAgentSubsetW}
  in the same way as for implication~\eqref{eq:BlockedUndirectedPaths_elementary_Delta_ComplementaryAgentSubsetW_EDGE}
  in Lemma~\ref{lem:Bw_active}
  (here, all the \undirectedEdgeSubPaths\ satisfy Item~\ref{it:ActivePaths_case2} in
  Definition~\ref{de:ActivePaths}).
\end{proof}

\subsubsection{Case $\bgent\ConditionalCommonCause\cgent$}

\begin{lemma} (Coq \PY{k+kn}{Lemma}~~\PY{n+nf}{C\_L\ref{lem:Kw_active}})
  \label{lem:Kw_active}
  Let \( \npOrientedGraph \) be a graph, 
  and $\AgentSubsetW\subset\VERTEX$ be a subset of vertices.
  For any vertices $\bgent$, $\cgent\in \VERTEX$, we have that 
  \begin{align}
    \bgent\ConditionalCommonCause\cgent
    \implies
    & 
      \mtext{there exists }\UndirectedPath=\Pair{\path,\orient} \in
      \UPATH\np{\OrientedGraph},
      \mtext{with } \cardinal{\UndirectedPath} \geq 2
      \nonumber \\
    & \mtext{and } \orient=\np{-1,\ldots,+1} 
      \in \na{-1}\times\na{-1,+1}^{\cardinal{\UndirectedPath}-2}\times\na{+1},
      \mtext{ such that}
      \nonumber \\
    & 
      \UndirectedPath=\Pair{\path,\orient}=\Pair{\path, \np{-1,\ldots,+1} } 
      \in 
      \DeploymentInUPaths{\np{\bgent,\cgent}}{\OrientedGraph}
      \cap \ActiveUndirectedPaths{\OrientedGraph} 
      \eqfinp
      \label{eq:BlockedUndirectedPaths_elementary_TransitiveClosure_Converse_EDGE_Delta_ComplementaryAgentSubsetW} 
  \end{align}
\end{lemma}

\begin{proof} 
  We
  prove the implication~\eqref{eq:BlockedUndirectedPaths_elementary_TransitiveClosure_Converse_EDGE_Delta_ComplementaryAgentSubsetW}.
  Let $\bgent$, $\cgent\in \VERTEX$ be such that
  \( \bgent\ConditionalCommonCause\cgent \).
  By definition~\eqref{eq:common_cause} of \(\ConditionalCommonCause\),
  we obtain that \( \bgent \ConverseConditionalAscendent
  \Delta_{\Complementary{\AgentSubsetW}} \ConditionalAscendent \cgent \).
  As a consequence of the definition of the composition of relations, 
  there exists \( \dgent \in \Complementary{\AgentSubsetW} \) such that 
  \( \bgent
  \ConverseConditionalAscendent
  \dgent \)
  and
  \( \dgent  \ConditionalAscendent \cgent \).
  Thus, by~\eqref{eq:BlockedUndirectedPaths_elementary_Converse_EDGE_Delta_ComplementaryAgentSubsetW},
  there exists \( \UndirectedPathbis =\Pair{\pathbis, -\1_{\cardinal{\pathbis}} } 
  \in \DeploymentInUPaths{\np{\bgent,\dgent}}{\OrientedGraph} 
  \cap \ActiveUndirectedPaths{\OrientedGraph}\) and
  by~\eqref{eq:BlockedUndirectedPaths_elementary_Delta_ComplementaryAgentSubsetW_EDGE},
  there exists \( \UndirectedPathter =\Pair{\pathter,  \1_{\cardinal{\pathter}} }
  \in \DeploymentInUPaths{\np{\dgent,\cgent}}{\OrientedGraph} 
  \cap \ActiveUndirectedPaths{\OrientedGraph} \).
  We consider the \undirectedEdgePath\ 
  \( \UndirectedPath = \UndirectedPathbis \ltimes \UndirectedPathter  \in
  \DeploymentInUPaths{\np{\bgent,\cgent}}{\OrientedGraph} \)
  obtained by concatenation as in~\eqref{eq:concatenation_UndirectedPath},
  and which is such that \( \cardinal{\UndirectedPath} \geq 2 \). 
  We claim that \( \UndirectedPath \in
  \ActiveUndirectedPaths{\OrientedGraph} \).
  Indeed, \( \UndirectedPathbis \in
  \ActiveUndirectedPaths{\OrientedGraph} \)
  and \(\UndirectedPathter \in
  \ActiveUndirectedPaths{\OrientedGraph} \)
  by assumption, so that, by
  Equation~\eqref{eq:ActiveUndirectedPaths_subpaths_junctions_when_reconcatenating} in
  Lemma~\ref{lem:ActiveUndirectedPaths_subpaths_junctions_when_reconcatenating},
  it only remains to show that 
  \[
    \ba{ \np{\vertexbis,\dgent,-1}, \np{\dgent, \vertexter,+1}}
    =
    {\Head{\ProjectionUndirectedPath}\np{\UndirectedPathbis}
      \ltimes
      \Tail{\ProjectionUndirectedPath}\np{\UndirectedPathter}}
    \in \ActiveUndirectedPaths{\OrientedGraph}
    \eqfinv
  \]
  where 
  \( \np{\vertexbis,\dgent} \) is the first edge of the \undirectedEdgePath\ 
  \( \UndirectedPathbis = \bPair{\pathbis, -\1_{\cardinal{\pathbis}} } \) and 
  \( \np{\dgent, \vertexter } \) is the last edge of the \undirectedEdgePath\ 
  \( \UndirectedPathter =\bPair{\pathter,  \1_{\cardinal{\pathter}} } \).
  Now, the above subpath satisfies Item~\ref{it:ActivePaths_case3} in
  Definition~\ref{de:ActivePaths} 
  because \( \vertex_{i} \in \Complementary{\AgentSubsetW} \).
  We deduce that \( \UndirectedPath = \UndirectedPathbis \ltimes
  \UndirectedPathter
  = \bPair{ \np{\pathbis,\pathter}, 
    \np{ -\1_{\cardinal{\pathbis}}, \1_{\cardinal{\pathter}} } } \in 
  \DeploymentInUPaths{\np{\bgent,\cgent}}{\OrientedGraph}
  \cap \ActiveUndirectedPaths{\OrientedGraph} \).
  \medskip

  This ends the proof.
\end{proof}

\subsubsection{Case $\bgent\CousinhoodStar\cgent$}

\begin{lemma}  (Coq \PY{k+kn}{lemmata}~~\PY{n+nf}{C\_L\ref{lem:Cwstar_active}\PYZus{}*}) 
  \label{lem:Cwstar_active}
  Let \( \npOrientedGraph \) be a graph, 
  and $\AgentSubsetW\subset\VERTEX$ be a subset of vertices.
  For any vertices $\bgent$, $\cgent\in \VERTEX$, we have that 
  \begin{align}
    \bgent\CousinhoodStar\cgent
    \implies
    & 
      \bgent \in \TransitiveReflexiveClosureOfSet{\AgentSubsetW}, \,
      \cgent \in \TransitiveReflexiveClosureOfSet{\AgentSubsetW} \mtext{ and }
      \bgent = \cgent \mtext{ or }
      \nonumber \\
    & 
      \mtext{there exists }\UndirectedPath=\Pair{\path,\orient} \in \UPATH\np{\OrientedGraph},
      \mtext{with } \cardinal{\UndirectedPath} \geq 2                
      \nonumber \\
    & \mtext{and } \orient=\np{-1,\ldots,+1} 
      \in \na{-1}\times\na{-1,+1}^{\cardinal{\UndirectedPath}-2}\times\na{+1},
      \mtext{ such that}
      \nonumber \\
    & 
      \UndirectedPath=\Pair{\path,\orient}=\Pair{\path, \np{-1,\ldots,+1} } 
      \in 
      \DeploymentInUPaths{\np{\bgent,\cgent}}{\OrientedGraph}
      \cap \ActiveUndirectedPaths{\OrientedGraph} 
      \eqfinp
      \label{eq:BlockedUndirectedPaths_elementary_CousinhoodStar}
  \end{align}
\end{lemma}

\begin{proof}
  We prove the
  implication~\eqref{eq:BlockedUndirectedPaths_elementary_CousinhoodStar}.  We
  suppose that \( \bgent\CousinhoodStar\cgent \).  As
  \( \CousinhoodStar = \Delta_{\TransitiveReflexiveClosureOfSet{\AgentSubsetW}} \cup
  \bpTransitiveClosure{ \Delta_{\TransitiveReflexiveClosureOfSet{\AgentSubsetW}}
    \ConditionalCommonCause \Delta_{\TransitiveReflexiveClosureOfSet{\AgentSubsetW}}
  } \) by~\eqref{eq:CousinhoodStar}, we consider three cases: either
  \( \bgent\Delta_{\TransitiveReflexiveClosureOfSet{\AgentSubsetW}}\cgent \), or
  \( \bgent
  \Delta_{\TransitiveReflexiveClosureOfSet{\AgentSubsetW}}\ConditionalCommonCause
  \Delta_{\TransitiveReflexiveClosureOfSet{\AgentSubsetW}} \cgent \) or there exists
  $n \geq 1$ such that
  \( \bgent \bp{\Delta_{\TransitiveReflexiveClosureOfSet{\AgentSubsetW}}
    \ConditionalCommonCause
    \Delta_{\TransitiveReflexiveClosureOfSet{\AgentSubsetW}}}^{n+1} \cgent \).

  Suppose that
  \( \bgent\Delta_{\TransitiveReflexiveClosureOfSet{\AgentSubsetW}}\cgent \).  Then
  \( \bgent=\cgent \), and thus we have obtain the
  implication~\eqref{eq:BlockedUndirectedPaths_elementary_CousinhoodStar}.
  
  Suppose that \( \bgent
  \Delta_{\TransitiveReflexiveClosureOfSet{\AgentSubsetW}}\ConditionalCommonCause
  \Delta_{\TransitiveReflexiveClosureOfSet{\AgentSubsetW}} \cgent \).
  Then, \( \bgent \in \TransitiveReflexiveClosureOfSet{\AgentSubsetW} \), 
  \( \cgent \in \TransitiveReflexiveClosureOfSet{\AgentSubsetW} \) and
  \( \bgent\ConditionalCommonCause \cgent \).
  Therefore, 
  by~\eqref{eq:BlockedUndirectedPaths_elementary_TransitiveClosure_Converse_EDGE_Delta_ComplementaryAgentSubsetW}
  there exists \( \UndirectedPath \in \UPATH\np{\OrientedGraph} \),
  with \( \cardinal{\UndirectedPath} \geq 2 \),
  and \( \orient 
  \in\na{-1}\times\na{-1,+1}^{\cardinal{\UndirectedPath}-2}\times\na{+1} \)
  such that \( \Pair{\path, \orient } \in \DeploymentInUPaths{\np{\bgent,\cgent}}{\OrientedGraph}
  \cap \ActiveUndirectedPaths{\OrientedGraph} \).
  
  Suppose that
  \( \bgent \bp{\Delta_{\TransitiveReflexiveClosureOfSet{\AgentSubsetW}}
    \ConditionalCommonCause
    \Delta_{\TransitiveReflexiveClosureOfSet{\AgentSubsetW}}}^{n+1} \cgent \).
  Then, there exists a sequence 
  \( \sequence{\dgent_i}{i\in\ic{0,n+1}} \) in~\( \TransitiveReflexiveClosureOfSet{\AgentSubsetW} \)
  such that \( \dgent_0=\bgent \), \( \dgent_{n+1}=\cgent \) 
  and \( \dgent_i \ConditionalCommonCause \dgent_{i+1} \) for $i\in\ic{0,n}$.
  Therefore, 
  by~\eqref{eq:BlockedUndirectedPaths_elementary_TransitiveClosure_Converse_EDGE_Delta_ComplementaryAgentSubsetW},
  there exists a sequence 
  \( \sequence{\UndirectedPath^{(i)}}{i\in\ic{0,n}} \) in~\( \UPATH\np{\OrientedGraph}\)
  of \undirectedEdgePaths\ such that
  \( \cardinal{\UndirectedPath^{(i)}} \geq 2 \) for $i\in\ic{0,n}$,
  and that 
  \( \UndirectedPath^{(i)} = \Pair{\path^{(i)}, \orient^{(i)} }
  \in \DeploymentInUPaths{\np{\dgent_i,\dgent_{i+1}}}{\OrientedGraph}
  \cap \ActiveUndirectedPaths{\OrientedGraph} \)
  for $i\in\ic{0,n}$, where \( \orient^{(i)} \) is of the form~\( \np{-1,\ldots,+1}
  \).
  We define the \undirectedEdgePath\ 
  \( \UndirectedPath = \UndirectedPath^{(0)} \ltimes \cdots \ltimes
  \UndirectedPath^{(n)}  \in
  \DeploymentInUPaths{\np{\bgent,\cgent}}{\OrientedGraph} \),
  by iterated (associative) concatenation as
  in~\eqref{eq:concatenation_UndirectedPath}, which is such that
  \( \cardinal{\UndirectedPath} \geq 2 \) and that
  \[ \UndirectedPath = \Pair{\path^{(0)} \ltimes \cdots \ltimes \path^{(n)},
      \np{ \orient^{(0)}, \ldots, \orient^{(n)} }}
    \mtext{ where }
    \orient^{(i)} \in \na{-1} \times \na{-1,+1}^{\cardinal{\path_i}-2} \times \na{+1}
    \eqsepv
    \forall i\in\ic{0,n}
    \eqfinp 
  \]
  We claim that \( \UndirectedPath \in
  \ActiveUndirectedPaths{\OrientedGraph} \).
  Indeed, \( \UndirectedPath^{(i)} \in \ActiveUndirectedPaths{\OrientedGraph} \)
  for $i\in\ic{0,n}$,
  so that, by
  Equation~\eqref{eq:ActiveUndirectedPaths_subpaths_junctions_when_reconcatenating} in
  Lemma~\ref{lem:ActiveUndirectedPaths_subpaths_junctions_when_reconcatenating},
  it only remains to show that, for $i\in\ic{0,n-1}$, 
  \begin{align}
    \ba{ \np{\tail{\vertex_{i}},\dgent_{i+1},+1},  \np{\dgent_{i+1}, \head{\vertex_{i+1}},-1}}
    &= 
      {\Head{\ProjectionUndirectedPath}\np{\UndirectedPath_{i}}
      \ltimes
      \Tail{\ProjectionUndirectedPath}\np{\UndirectedPath_{i+1}}}
      \in \ActiveUndirectedPaths{\OrientedGraph}
      \eqfinv
  \end{align}
  where 
  \( \np{\tail{\vertex}_{i},\dgent_{i+1}} \) is the last edge of the \undirectedEdgePath\ 
  \( \UndirectedPath^{(i)} = \Pair{\path^{(i)}, \orient^{(i)}} \) and 
  \( \np{\dgent_{i+1}, \head{\vertex}_{i+1}} \) is the first edge of the \undirectedEdgePath\ 
  \( \UndirectedPath^{(i+1)} =\Pair{\path^{(i+1)}, \orient^{(i+1)} } \).
  As \( \dgent_{i+1} \in \TransitiveReflexiveClosureOfSet{\AgentSubsetW} \), for $i\in\ic{0,n-1}$,
  and because of the orientation \( \np{-1,+1} \),
  all the above subpaths satisfy Item~\ref{it:ActivePaths_case4} in
  Definition~\ref{de:ActivePaths}. 
  We conclude that \( \UndirectedPath = \UndirectedPath^{(0)} \ltimes \cdots \ltimes \UndirectedPath^{(n)} \in 
  \DeploymentInUPaths{\np{\bgent,\cgent}}{\OrientedGraph}
  \cap \ActiveUndirectedPaths{\OrientedGraph} \).
  \medskip
      
  This ends the proof.
\end{proof}

\subsubsection{Case $\bgent \ConditionalActiveThreeStar \cgent$}

\begin{lemma} (Coq \PY{k+kn}{Lemma}~\PY{n+nf}{C\_L\ref{lem:BkCwsBmKBk}})
  \label{lem:BkCwsBmKBk}
  Let \( \npOrientedGraph \) be a graph, 
  and $\AgentSubsetW\subset\VERTEX$ be a subset of vertices.
  For any vertices $\bgent$, $\cgent\in \VERTEX$, we have that 
  \begin{align}
    \bgent\ConditionalActiveThreeStar\cgent
    \implies
    &
      \mtext{there exists }\UndirectedPath \in \UPATH\np{\OrientedGraph},
      \mtext{with } \cardinal{\UndirectedPath} \geq 2,
      \mtext{ such that}
      \nonumber \\
    & 
      \UndirectedPath 
      \in 
      \DeploymentInUPaths{\np{\bgent,\cgent}}{\OrientedGraph}
      \cap \ActiveUndirectedPaths{\OrientedGraph} 
      \eqfinp
      \label{eq:BlockedUndirectedPaths_elementary_TransitiveClosure_TransitiveClosure}
  \end{align}
\end{lemma}

\begin{proof}
  Suppose that $\bgent \ConditionalActiveThreeStar \cgent$.
  Therefore, there
  exist \( \dgent_1 \) and \( \dgent_2 \) in~$\VERTEX$ such that
  \[
    \bgent \bp{\ConditionalAscendent \cup \ConditionalCommonCause}\dgent_1
    \text{ and }
    \dgent_1 \CousinhoodStar \dgent_2
    \text{ and }
    \dgent_2 \bp{\ConverseConditionalAscendent \cup \ConditionalCommonCause} \cgent
    \eqfinp
  \]
  
  We are going to display an \undirectedEdgePath\
  \( \UndirectedPath \in \DeploymentInUPaths{\np{\dgent,\bgent}}{\graph} \cap
  \ActiveUndirectedPaths{\graph} \).
  \begin{itemize}
  \item
    Considering the left hand side 
    \( \bgent\bp{\ConditionalAscendent \cup \ConditionalCommonCause} \dgent_1\) and
    using Lemma~\ref{lem:Bw_active} and~\ref{lem:Kw_active}, 
    we obtain --- either by~\eqref{eq:BlockedUndirectedPaths_elementary_Delta_ComplementaryAgentSubsetW_EDGE} applied to
    \( \bgent \ConditionalAscendent \dgent_1 \), or
    by~\eqref{eq:BlockedUndirectedPaths_elementary_TransitiveClosure_Converse_EDGE_Delta_ComplementaryAgentSubsetW}
    applied to \( \bgent \ConditionalCommonCause \dgent_1\) --- 
    that there exists
    \[
      \UndirectedPath^{(1)}= \Pair{\path^{(1)}, \orient^{(1)} }
      =\Pair{\path^{(1)}, \np{\ldots,+1} } \in 
      \DeploymentInUPaths{\np{\bgent,\dgent_1}}{\graph} 
      \cap \ActiveUndirectedPaths{\graph}
      \eqfinp
    \]
    
  \item 
    In the same way, considering the right hand side
    \( \dgent_2 \bp{\ConverseConditionalAscendent \cup \ConditionalCommonCause} \cgent \)
    and using Lemma~\ref{lem:Bmw_active} and~\ref{lem:Kw_active},
    we obtain  --- either by~\eqref{eq:BlockedUndirectedPaths_elementary_Converse_EDGE_Delta_ComplementaryAgentSubsetW}
    applied to  \( \dgent_2 \ConverseConditionalAscendent \cgent \),
    or by~\eqref{eq:BlockedUndirectedPaths_elementary_TransitiveClosure_Converse_EDGE_Delta_ComplementaryAgentSubsetW}
    applied to \( \dgent_2 \ConditionalCommonCause \cgent \) --- 
    that there exists
    \[
      \UndirectedPath^{(2)} =\Pair{\path_{2}, \orient_{2} }
      =\Pair{\path^{(2)}, \np{-1,\ldots} } \in 
      \DeploymentInUPaths{\np{\dgent_2,\bgent}}{\graph} 
      \cap \ActiveUndirectedPaths{\graph}
      \eqfinp
    \]
  \item Considering the middle expression
    \( \dgent_1 \CousinhoodStar \dgent_2 \) and using
    Lemma~\ref{lem:Cwstar_active}, we obtain
    by~\eqref{eq:BlockedUndirectedPaths_elementary_CousinhoodStar}
    that \( \dgent_1 \in \TransitiveReflexiveClosureOfSet{\AgentSubsetW} \),
    \( \dgent_2 \in \TransitiveReflexiveClosureOfSet{\AgentSubsetW} \), and that
    there exists 
    \[
      \UndirectedPathbis
      =\Pair{\pathbis, \orientbis }
      =\Pair{\pathbis, \np{-1,\ldots,+1} } 
      \in        \DeploymentInUPaths{\np{\dgent_1,\dgent_2}}{\graph} 
      \cap \ActiveUndirectedPaths{\graph}
      \eqfinp
    \]
  \end{itemize}
  
  We consider the \undirectedEdgePath\ 
  \( \UndirectedPath = \UndirectedPath^{(1)} \ltimes \UndirectedPathbis \ltimes \UndirectedPath^{(2)} \)
  obtained by concatenation as in~\eqref{eq:concatenation_UndirectedPath}.
  By construction, we have that \( \UndirectedPath \in
  \DeploymentInUPaths{\np{\bgent,\cgent}}{\graph} \).
  We claim that \( \UndirectedPath \in
  \ActiveUndirectedPaths{\graph} \).
  Indeed, \( \UndirectedPath^{(1)}, \UndirectedPathbis, \UndirectedPath^{(2)}
  \in \ActiveUndirectedPaths{\graph} \) 
  by assumption,
  so that, by
  Equation~\eqref{eq:ActiveUndirectedPaths_subpaths_junctions_when_reconcatenating} in
  Lemma~\ref{lem:ActiveUndirectedPaths_subpaths_junctions_when_reconcatenating},
  it only remains to show that
  \begin{align}
    \ba{\np{\tail{\vertex_{1}},\dgent_{1},+1},  \np{\dgent_{1}, \head{\vertex}_1,-1 }}
    &= 
      {\Head{\ProjectionUndirectedPath}\np{\UndirectedPath^{(1)}}
      \ltimes
      \Tail{\ProjectionUndirectedPath}\np{\UndirectedPathbis}}
      \in \ActiveUndirectedPaths{\OrientedGraph}
      \eqfinv
      \label{eq:dgent-one-property}
      \intertext{where 
      \( \np{\tail{\vertex_{1}},\dgent_{1}} \) is the last edge of the \undirectedEdgePath\ 
      \( \UndirectedPath^{(1)}\) and 
      \( \np{\dgent_{1}, \head{\vertex}_1 } \) is the first edge of the \undirectedEdgePath\ 
      \( \UndirectedPathbis \), and that }
      \ba{  \np{\tail{\vertex}_2,\dgent_{2},+1 }, \np{\dgent_{2}, \head{\vertex_{2}},-1 } }
    &= 
      {\Head{\ProjectionUndirectedPath}\np{\UndirectedPathbis}
      \ltimes
      \Tail{\ProjectionUndirectedPath}\np{\UndirectedPath^{(2)}}}
      \in \ActiveUndirectedPaths{\OrientedGraph}
      \eqfinv
      \label{eq:dgent-two-property}
  \end{align}
  where 
  \( \np{\tail{\vertex}_2,\dgent_{2}} \) is the last edge of the \undirectedEdgePath\ 
  \( \UndirectedPathbis\) and 
  \( \np{\dgent_{2}, \head{\vertex_{2}} } \) is the first edge of the \undirectedEdgePath\ 
  \( \UndirectedPath^{(2)} \). 
  As \( \dgent_1,  \dgent_2 \in \TransitiveReflexiveClosureOfSet{\AgentSubsetW} \)
  and because of the orientation \( \np{-1,+1} \),
  the two subpaths hereabove satisfy Item~\ref{it:ActivePaths_case4} in
  Definition~\ref{de:ActivePaths}. 
  We conclude that \( \UndirectedPath \in 
  \DeploymentInUPaths{\np{\bgent,\cgent}}{\graph}
  \cap \ActiveUndirectedPaths{\graph} \).%
  \medskip 
  
  This ends the proof.
\end{proof}

\section{Proof of $\neg \np{\bgent \protect\ConditionalDirectionalSeparation \cgent \mid \AgentSubsetW}
  \implies \bgent \ConditionalActiveStar \cgent $
}
\label{Proof_of_Theorem_isimplied}

The following Proposition~\ref{th:main_isimplied}, that we are going to prove,
is the second half of the proof of
Theorem~\ref{th:ConditionalDirectionalSeparation_IFF_relation}. It relies on the three
Lemmata~\ref{le:lemma8}, \ref{le:active-length-2} and~\ref{lem:induction-lemma},
postponed at the end of this Appendix~\ref{Proof_of_Theorem_isimplied}.

\subsection{Proposition~\ref{th:main_isimplied}}

\begin{proposition} (Coq \PY{k+kn}{Proposition}~\PY{n+nf}{D\_P\ref{th:main_isimplied}})
  \label{th:main_isimplied}
  Let \( \npOrientedGraph \) be a graph, 
  and $\AgentSubsetW\subset\VERTEX$ be a subset of vertices.
  Let \( \bgent, \cgent \in\AGENT \) be vertices.
  We have the implication
  \begin{equation}
    \neg \bp{\bgent \ConditionalDirectionalSeparation \cgent \mid \AgentSubsetW}
    \implies
    \bgent \ConditionalActiveStar \cgent
    \label{eq:d-sep-to-rel}
  \end{equation}
  where $\ConditionalActiveStar$ is the star conditional active
  relation~\eqref{eq:conditional_active_relationStar}
  and $\ConditionalDirectionalSeparation$ is the
  conditional directional separation relation~\eqref{eq:vertices-d-separated}.
\end{proposition}
\begin{proof}
  Let \( \bgent,\cgent \in \AGENT \) be vertices.  We show that
  \( \neg \bp{\bgent \ConditionalDirectionalSeparation \cgent \mid \AgentSubsetW}
  \implies \bgent \ConditionalActiveStar \cgent \) or, equivalently
  (see~\eqref{eq:vertices-d-separated} in
  Definition~\ref{de:vertices-d-separated}), that
  \( \np{\bgent = \cgent} \vee \bp{\DeploymentInUPaths{\na{\np{\bgent,\cgent}}}{\graph} \cap
    \ActiveUndirectedPaths{\graph} \neq\emptyset} \implies \bgent {\ConditionalActiveStar} \cgent
  \). First, when $\bgent = \cgent$ it is clear that $\bgent {\ConditionalActiveStar} \cgent$
  as \( \Delta \subset \ConditionalActiveStar \)
  by~\eqref{eq:conditional_active_relationStar}.
  Second, we assume that there exists an \undirectedEdgePath\
  \( \UndirectedPath \in \UPATH\np{\graph} \) joining the two
  vertices~$\bgent$ and $\cgent$ and such that \( \UndirectedPath \) is active.
  We are going to prove that $\bgent \ConditionalActiveStar \cgent$.
  \begin{itemize}
  \item 
  If the path length of~\( \UndirectedPath \) is equal to~1, then,
  by definition~\eqref{def:eopaths_gtzero} of
  $\UPATH_{1}\np{\OrientedGraph}$, we necessarily have that
  either $\bgent \Precedence\cgent$, or $\bgent \Converse{\Precedence}\cgent$.
  Now, as \( \Precedence \subset \ConditionalAscendent \subset
  \ConditionalActiveStar \)
  by~\eqref{eq:conditional_active_relationStar}, 
  as \( \Converse{\Precedence} \subset \ConverseConditionalAscendent \subset
  \ConditionalActiveStar \)
  by~\eqref{eq:conditional_active_relationStar}, 
  we conclude that $\bgent \ConditionalActiveStar \cgent$.

\item 
  If the path length of~\( \UndirectedPath \) is $\ge 2$, we prove by
  induction that we have either $\bgent \ConditionalActivePlus \cgent$ or
  $\bgent \ConditionalActiveMinus \cgent$, where the two relations
  $\ConditionalActivePlus$ and $\ConditionalActiveMinus$ are defined
  in~Equations~\eqref{eq:ConditionalActivePlus} and~\eqref{eq:ConditionalActiveMinus}.
  \begin{itemize}
    \item The case where the path length of~\( \UndirectedPath \) is equal to~2
    is treated in Lemma~\ref{le:active-length-2}.
    \item The proof by induction 
      on the path length is done in Lemma~\ref{lem:induction-lemma}.
    \item We therefore conclude that $\bgent \ConditionalActiveStar \cgent$ since
      $\ConditionalActiveStar = \Delta \cup \ConditionalActivePlus \cup
      \ConditionalActiveMinus$ by~\eqref{eq:ConditionalActivePlus_cup_Minus},
      obtained in Lemma~\ref{le:lemma8}.
    \end{itemize}
  \end{itemize}
  \medskip
  
  This ends the proof.
\end{proof}

\subsection{Proof of Proposition~\ref{th:main_isimplied} broken in three lemmata}

\subsubsection{Definition and properties of $\ConditionalActivePlus$ and $\ConditionalActiveMinus$}

We introduce two binary relation $\ConditionalActivePlus$ and $\ConditionalActiveMinus$ and establish three
properties which are instrumental in the next lemmata~\ref{le:active-length-2} and~\ref{lem:induction-lemma}.

\begin{lemma}  (Coq \PY{k+kn}{Lemma}~\PY{n+nf}{D\_L\ref{le:lemma8}\_E\ref{eq:ConditionalActivePlus_cup_Minus}}
  , \PY{n+nf}{D\_L\ref{le:lemma8}\_E\ref{eq:ConditionalActivePlus_supset}}
  and  \PY{n+nf}{D\_L\ref{le:lemma8}\_E\ref{eq:ConditionalActiveMinus_supset}})
  \label{le:lemma8}
  The two following relations
  \begin{subequations}
    \begin{align}
      \ConditionalActivePlus
      &=
        \ConditionalAscendent \cup \ConditionalCommonCause
        \cup 
        \Bp{\bp{\ConditionalAscendent \cup \ConditionalCommonCause}
        \CousinhoodStar
        {\ConditionalCommonCause}}
        \eqfinv
        \label{eq:ConditionalActivePlus}
      \\
      \ConditionalActiveMinus
      &= \ConverseConditionalAscendent \cup
        \Bp{
        \bp{\ConditionalAscendent \cup \ConditionalCommonCause}
        \CousinhoodStar
        \ConverseConditionalAscendent
        }
        \eqfinv
        \label{eq:ConditionalActiveMinus}      
    \end{align}
  \end{subequations}
  satisfy the following properties
  \begin{subequations}
    \begin{align}
      \ConditionalActiveStar
      &=
        \Delta \cup \ConditionalActivePlus \cup \ConditionalActiveMinus
        \eqfinv
        \label{eq:ConditionalActivePlus_cup_Minus}      
      \\
      \ConditionalActivePlus
      & \supset 
        \ConditionalActiveMinus \Delta_{\Complementary{\AgentSubsetW}}
        {\Precedence}
        \eqfinv
        \label{eq:ConditionalActivePlus_supset}
      \\
      \ConditionalActiveMinus
      & \supset 
        \ConditionalActivePlus \Delta_{\TransitiveReflexiveClosureOfSet{\AgentSubsetW}}
        \Converse{\Precedence}
        \eqfinp 
        \label{eq:ConditionalActiveMinus_supset}        
    \end{align}
  \end{subequations}  
\end{lemma}

\begin{proof}
  \quad
  \noindent $\bullet$ (\PY{k+kn}{Lemma}~~\PY{n+nf}{D\_L\ref{le:lemma8}\PYZus{}E\ref{eq:ConditionalActivePlus_cup_Minus}}\PY{o})
  We prove~\eqref{eq:ConditionalActivePlus_cup_Minus} as follows:
  \begin{align*}
    \ConditionalActiveStar
    &=
      \Delta \cup 
      \ConditionalAscendent \cup \ConverseConditionalAscendent \cup \ConditionalCommonCause
      \cup
      \bp{\ConditionalAscendent \cup \ConditionalCommonCause}
      \CousinhoodStar
      \bp{\ConverseConditionalAscendent \cup \ConverseConditionalCommonCause}
      \tag{by definition~\eqref{eq:conditional_active_relationStar} of~$\ConditionalActiveStar$}
    \\
    &=
    \Delta \cup 
      \underbrace{  \ConditionalAscendent \cup \ConditionalCommonCause
       \cup 
        \bp{\ConditionalAscendent \cup \ConditionalCommonCause}
        \CousinhoodStar {\ConditionalCommonCause} }_{=\ConditionalActivePlus \textrm{by~\eqref{eq:ConditionalActivePlus}}}
      \cup
      \underbrace{\ConverseConditionalAscendent \cup
      \bp{\ConditionalAscendent \cup \ConditionalCommonCause}
      \CousinhoodStar  \ConverseConditionalAscendent  }_{=\ConditionalActiveMinus \textrm{by~\eqref{eq:ConditionalActiveMinus}}}
      \eqfinp               
  \end{align*}
  
  \noindent $\bullet$ (\PY{k+kn}{Lemma}~~\PY{n+nf}{D\_L\ref{le:lemma8}\PYZus{}E\ref{eq:ConditionalActivePlus_supset}}\PY{o})
  We prove~\eqref{eq:ConditionalActivePlus_supset} as follows:     
  \begin{align*}
    \ConditionalActiveMinus\Delta_{\Complementary{\AgentSubsetW}} {\Precedence}
    &=
      \bgp{ \ConverseConditionalAscendent \cup
      \Bp{
      \bp{\ConditionalAscendent \cup \ConditionalCommonCause}
      \CousinhoodStar \ConverseConditionalAscendent } }
      \Delta_{\Complementary{\AgentSubsetW}} {\Precedence}
      \tag{by definition~\eqref{eq:ConditionalActiveMinus} of~$\ConditionalActiveMinus$}
    \\
    &=
      \bgp{ \Delta \cup 
      \Bp{
      \bp{\ConditionalAscendent \cup \ConditionalCommonCause}
      \CousinhoodStar } } 
      \ConverseConditionalAscendent \Delta_{\Complementary{\AgentSubsetW}} {\Precedence}
      \tag{by factorizing $\ConverseConditionalAscendent$}
    \\
    &\subset 
      \bgp{ \Delta \cup 
      \Bp{
      \bp{\ConditionalAscendent \cup \ConditionalCommonCause}
      \CousinhoodStar } }
      \ConverseConditionalAscendent \Delta_{\Complementary{\AgentSubsetW}} 
      \ConditionalAscendent 
      \tag{as \( \Precedence \subset \ConditionalAscendent \)
      by~\eqref{eq:conditional_ascendent_relation} }
    \\
    & =
      \bgp{ \Delta \cup 
      \Bp{
      \bp{\ConditionalAscendent \cup \ConditionalCommonCause}
      \CousinhoodStar } } 
      \ConditionalCommonCause
      \tag{by definition~\eqref{eq:common_cause} of $\ConditionalCommonCause$}
    \\
    & =
      \ConditionalCommonCause \cup 
      \bp{\ConditionalAscendent \cup \ConditionalCommonCause}
      \CousinhoodStar \ConditionalCommonCause 
      \tag{by developing}
    \\        
    &\subset \ConditionalActivePlus
      \eqfinp
      \tag{by definition~\eqref{eq:ConditionalActivePlus} of~$\ConditionalActivePlus$}
  \end{align*}

  \noindent $\bullet$ (\PY{k+kn}{Lemma}~~\PY{n+nf}{D\_L\ref{le:lemma8}\PYZus{}E\ref{eq:ConditionalActiveMinus_supset}}\PY{o})
  We prove~\eqref{eq:ConditionalActiveMinus_supset} as follows:
  \begin{align*}
    \ConditionalActivePlus \Delta_{\TransitiveReflexiveClosureOfSet{\AgentSubsetW}}
    \Converse{\Precedence}
    &=
      \bgp{  \ConditionalAscendent \cup \ConditionalCommonCause
      \cup 
      \Bp{\bp{\ConditionalAscendent \cup \ConditionalCommonCause}
      \CousinhoodStar
      {\ConditionalCommonCause}} } \Delta_{\TransitiveReflexiveClosureOfSet{\AgentSubsetW}}
      \Converse{\Precedence}
      \tag{by definition~\eqref{eq:ConditionalActivePlus} of~$\ConditionalActivePlus$}
    \\
    &=
      \bp{
      \np{  \ConditionalAscendent \cup \ConditionalCommonCause}
      \Delta_{\TransitiveReflexiveClosureOfSet{\AgentSubsetW}} \Converse{\Precedence}}
      \cup
      \bgp{
      \bp{\ConditionalAscendent \cup \ConditionalCommonCause}
      \CousinhoodStar
      {\ConditionalCommonCause}
      \Delta_{\TransitiveReflexiveClosureOfSet{\AgentSubsetW}} \Converse{\Precedence}}
      \tag{by developing}
      \eqfinp 
  \end{align*}
  We treat each of the two terms in the union separately.
  We are going to show that each term is included in~$\ConditionalActiveMinus$. 
  
  For the first term, we have that
  \begin{align*}
    \np{\ConditionalAscendent \cup \ConditionalCommonCause }
    \Delta_{\TransitiveReflexiveClosureOfSet{\AgentSubsetW}}
    \Converse{\Precedence}
    &\subset
      \bp{\ConditionalAscendent \cup \ConditionalCommonCause}
      \CousinhoodStar
      \ConverseConditionalAscendent
      \intertext{as $\Delta_{\TransitiveReflexiveClosureOfSet{\AgentSubsetW}}
      \subset \CousinhoodStar$ by~\eqref{eq:CousinhoodStar},
      and as  $\Converse{\Precedence} \subset
      \ConverseConditionalAscendent$
      by~\eqref{eq:converse_conditional_ascendent_relation}}
    &\subset \ConditionalActiveMinus
      \eqfinp
      \tag{as  $\ConditionalActiveMinus
      = \ConverseConditionalAscendent \cup
      \bp{
      \np{\ConditionalAscendent \cup \ConditionalCommonCause} \CousinhoodStar
      \ConverseConditionalAscendent
      }$ by definition~\eqref{eq:ConditionalActiveMinus} }
  \end{align*}
  For the second term, we have that
  \begin{align*}
    &\bp{\ConditionalAscendent \cup \ConditionalCommonCause}
      \CousinhoodStar
      {\ConditionalCommonCause}
      \Delta_{\TransitiveReflexiveClosureOfSet{\AgentSubsetW}} \Converse{\Precedence}
    \\
    &\hspace{1cm}=
      \bp{\ConditionalAscendent \cup \ConditionalCommonCause}
      \Bp{
      \np{\Delta_{\TransitiveReflexiveClosureOfSet{\AgentSubsetW}} \ConditionalCommonCause \Delta_{\TransitiveReflexiveClosureOfSet{\AgentSubsetW}}}^{+}
      \cup
      \Delta_{\TransitiveReflexiveClosureOfSet{\AgentSubsetW}}
      }
      {\ConditionalCommonCause}
      \Delta_{\TransitiveReflexiveClosureOfSet{\AgentSubsetW}} \Converse{\Precedence}
      \tag{by definition~\eqref{eq:CousinhoodStar} of $\CousinhoodStar$}
    \\
    &\hspace{1cm}=
      \bp{\ConditionalAscendent \cup \ConditionalCommonCause}
      \bp{
      \np{\Delta_{\TransitiveReflexiveClosureOfSet{\AgentSubsetW}} \ConditionalCommonCause \Delta_{\TransitiveReflexiveClosureOfSet{\AgentSubsetW}}}^{+}
      \cup
      \Delta_{\TransitiveReflexiveClosureOfSet{\AgentSubsetW}}
      }
      \bp{\Delta_{\TransitiveReflexiveClosureOfSet{\AgentSubsetW}} {\ConditionalCommonCause}
      \Delta_{\TransitiveReflexiveClosureOfSet{\AgentSubsetW}}
      }
      \Converse{\Precedence}
      \tag{by inserting $\Delta_{\TransitiveReflexiveClosureOfSet{\AgentSubsetW}}$}
    \\
    &\hspace{1cm}\subset
      \bp{\ConditionalAscendent \cup \ConditionalCommonCause}
      \bp{
      \np{\Delta_{\TransitiveReflexiveClosureOfSet{\AgentSubsetW}} \ConditionalCommonCause \Delta_{\TransitiveReflexiveClosureOfSet{\AgentSubsetW}}}^{+}
      \cup
      \Delta_{\TransitiveReflexiveClosureOfSet{\AgentSubsetW}}
      }
      \Converse{\Precedence}
      \intertext{as $ \np{ \TransitiveClosure{\relation} \cup 
      \Delta_{\TransitiveReflexiveClosureOfSet{\AgentSubsetW}} } \relation \subset 
      \TransitiveClosure{\relation} $ for any relation~$\relation$ such that 
      $\relation\Delta_{\TransitiveReflexiveClosureOfSet{\AgentSubsetW}}=
      \Delta_{\TransitiveReflexiveClosureOfSet{\AgentSubsetW}}\relation=\relation$,
      which is the case for \( \relation=\Delta_{\TransitiveReflexiveClosureOfSet{\AgentSubsetW}} \ConditionalCommonCause
      \Delta_{\TransitiveReflexiveClosureOfSet{\AgentSubsetW}} \)}
    &\hspace{1cm}=
      \bp{\ConditionalAscendent \cup \ConditionalCommonCause}
      \CousinhoodStar
      \Converse{\Precedence}
      \tag{by definition~\eqref{eq:CousinhoodStar} of $\CousinhoodStar$}
    \\
    &\hspace{1cm}\subset
      \bp{\ConditionalAscendent \cup \ConditionalCommonCause}
      \CousinhoodStar
      \ConverseConditionalAscendent
      \tag{as $ \Converse{\Precedence}
      \subset\ConverseConditionalAscendent$
      by~\eqref{eq:converse_conditional_ascendent_relation}}
    \\
    &\hspace{1cm}\subset \ConditionalActiveMinus
      \eqfinp
      \tag{as  $\ConditionalActiveMinus
      = \ConverseConditionalAscendent \cup
      \bp{
      \np{\ConditionalAscendent \cup \ConditionalCommonCause} \CousinhoodStar
      \ConverseConditionalAscendent
      }$ by definition~\eqref{eq:ConditionalActiveMinus} }
  \end{align*}
  We conclude that 
  $\ConditionalActivePlus
  \Delta_{\TransitiveReflexiveClosureOfSet{\AgentSubsetW}}
  \Converse{\Precedence} \subset \ConditionalActiveMinus$.
  \medskip

  This ends the proof. 
\end{proof}

\subsubsection{Case of active \undirectedEdgePath\ of length~2}

The following Lemma~\ref{le:active-length-2} is instrumental in the proof
of Proposition~\ref{th:main_isimplied}. It covers the easy case of 
active \undirectedEdgePaths\ of length~2.

\begin{lemma}  (Coq \PY{k+kn}{Lemma}~\PY{n+nf}{D\_L\ref{le:active-length-2}})
  \label{le:active-length-2}
  Let \( \npOrientedGraph \) be a graph, 
  and $\AgentSubsetW\subset\VERTEX$ be a subset of vertices.
  Let  \(\UndirectedPath \in  \ActiveUndirectedPaths{\OrientedGraph}  \) be
  a given active \undirectedEdgePath\ of length~2
  (see Definition~\ref{de:ActivePaths})
  with   $\bgent$ as head endpoint and $\cgent$ as tail endpoint, that is,
  there exists a vertex $\dgent\in\VERTEX$ such that
  \begin{equation}
    \UndirectedPath = \ba{ \np{\bgent,\dgent,\orient_1},
      \np{\dgent,\cgent,\orient_2} }
    \text{ with }
    \np{\orient_1,\orient_2} \in \Orientation^2
    \eqfinp
    \label{eq:le:active-length-2}    
  \end{equation}
  Then, one of the following two possibilities holds true:
  \begin{enumerate}
  \item
    the \undirectedEdgePath~$\UndirectedPath$ ends with $\orient_2=+1$ orientation,
    and then \(\bgent \ConditionalActivePlus \cgent\).
  \item
    the \undirectedEdgePath~$\UndirectedPath$ ends with $\orient_2=-1$ orientation,
    and then \(\bgent \ConditionalActiveMinus \cgent\).
  \end{enumerate}
\end{lemma}

\begin{proof}
  As \(\UndirectedPath \) in~\eqref{eq:le:active-length-2}
  belongs to~\( \ActiveUndirectedPaths{\OrientedGraph} \) --- 
  hence to \( \UPATH\np{\OrientedGraph} \),
  the set of \undirectedEdgePaths\ of positive length in~\eqref{def:eopaths_gtzero} --- 
  we have that 
  \( \np{\bgent,\dgent} \in \EDGE \) if $\orient_1={+1}$,
  \( \np{\bgent,\dgent} \in \Converse{\EDGE} \) if $\orient_1={-1}$,
  \( \np{\dgent,\cgent} \in \EDGE \) if $\orient_2={+1}$,
  and \( \np{\dgent,\cgent} \in \Converse{\EDGE} \) if $\orient_2={-1}$.
  Now, we consider the four conditions enumerated in Definition~\ref{de:ActivePaths} which must be satisfied
  for the \undirectedEdgePath~$\UndirectedPath$ to be active and which impose constraints on the
  vertex~$\dgent$ according to the possible orientations.

  \begin{enumerate}
  \item
    First, we consider the case when the \undirectedEdgePath~$\UndirectedPath$
    ends with $+1$ orientation, that is, when $\orient_2=+1$.
    \begin{itemize}
    \item
      Item~\ref{it:ActivePaths_case1} in Definition~\ref{de:ActivePaths}
      corresponds to 
      $\orient_1 = +1$, $\orient_{2}= +1$ and $\dgent \in
      \Complementary{\AgentSubsetW}$, which gives that 
      $\bgent \EDGE^{(+1)} \Delta_{\Complementary{\AgentSubsetW}}\dgent$ and $\dgent\Delta_{\Complementary{\AgentSubsetW}}
      \EDGE^{(+1)}\cgent$. Hence, by composition of binary relations, we get that
      $\bgent \EDGE \Delta_{\Complementary{\AgentSubsetW}} \EDGE\cgent$,
      using the property \( \Delta_{\Complementary{\AgentSubsetW}}\Delta_{\Complementary{\AgentSubsetW}}=\Delta_{\Complementary{\AgentSubsetW}}\).
    \item
      Item~\ref{it:ActivePaths_case3} in Definition~\ref{de:ActivePaths} corresponds to 
      $\orient_1 = -1$, $\orient_{2}= +1$ and $\dgent \in
      \Complementary{\AgentSubsetW}$, which gives that 
      $\bgent \EDGE^{(-1)} \Delta_{\Complementary{\AgentSubsetW}}\dgent$ and $\dgent\Delta_{\Complementary{\AgentSubsetW}}
      \EDGE^{(+1)}\cgent$. Hence, we get that 
      $\bgent \Converse\EDGE \Delta_{\Complementary{\AgentSubsetW}} \EDGE\cgent$.
    \end{itemize}
    We have obtained that the \undirectedEdgePath~$\UndirectedPath$ ends with orientation~$+1$  and is
    such that
    either \(\bgent  \Precedence \Delta_{\Complementary{\AgentSubsetW}}\Precedence  \cgent\)
    or \(\bgent   \Converse{\Precedence} \Delta_{\Complementary{\AgentSubsetW}} \Precedence  \cgent\).
    Using the properties that
    \( \Precedence \Delta_{\Complementary{\AgentSubsetW}}\Precedence
    \subset \ConditionalAscendent
    \subset \ConditionalActivePlus\)
    (by~\eqref{eq:conditional_ascendent_relation} and~\eqref{eq:ConditionalActivePlus})
    and that 
    \( \Converse{\Precedence} \Delta_{\Complementary{\AgentSubsetW}} \Precedence
    \subset \ConditionalCommonCause \subset \ConditionalActivePlus\)
    (by~\eqref{eq:common_cause} and~\eqref{eq:ConditionalActivePlus}),
    we obtain that \(\bgent \ConditionalActivePlus \cgent\).
        
  \item
    Second, we consider the case when the \undirectedEdgePath~$\UndirectedPath$ ends with $-1$ orientation,
    that is, when $\orient_2=-1$:
    \begin{itemize}
    \item
      Item~\ref{it:ActivePaths_case2} in Definition~\ref{de:ActivePaths} corresponds to 
     $\orient_1 = -1$, $\orient_{2}= -1$ and $\dgent \in \Complementary{\AgentSubsetW}$, which gives that 
       $\bgent \EDGE^{(-1)} \Delta_{\Complementary{\AgentSubsetW}}\dgent$ and $\dgent\Delta_{\Complementary{\AgentSubsetW}}
      \EDGE^{(-1)}\cgent$. Hence, we get that 
      $\bgent \Converse\EDGE \Delta_{\Complementary{\AgentSubsetW}}
      \Converse\EDGE\cgent$,
    \item
      Item~\ref{it:ActivePaths_case4} in Definition~\ref{de:ActivePaths} corresponds to 
      $\orient_1 = +1$, $\orient_{2}= -1$ and $\dgent \in \TransitiveReflexiveClosureOfSet{\AgentSubsetW}$, which gives that 
       $\bgent \EDGE^{(+1)}
       \Delta_{\TransitiveReflexiveClosureOfSet{\AgentSubsetW}}\dgent$
       and $\dgent\Delta_{\TransitiveReflexiveClosureOfSet{\AgentSubsetW}} \EDGE^{(-1)}\cgent$. Hence, we get that 
      $\bgent \EDGE \Delta_{\TransitiveReflexiveClosureOfSet{\AgentSubsetW}}
      \Converse\EDGE\cgent$,
      using the property \(
      \Delta_{\TransitiveReflexiveClosureOfSet{\AgentSubsetW}}\Delta_{\TransitiveReflexiveClosureOfSet{\AgentSubsetW}}
      = \Delta_{\TransitiveReflexiveClosureOfSet{\AgentSubsetW}} \). 
    \end{itemize}
    We have obtained that the \undirectedEdgePath~$\UndirectedPath$ ends with orientation~$-1$ and is
    such that
    either 
    \( \bgent\Precedence \Delta_{\TransitiveReflexiveClosureOfSet{\AgentSubsetW}} \Converse{\Precedence}\cgent \)
    or \( \bgent \Converse{\Precedence} \Delta_{\Complementary{\AgentSubsetW}}\Converse{\Precedence} \cgent\).
    Using the properties that
    \(\Precedence \Delta_{\TransitiveReflexiveClosureOfSet{\AgentSubsetW}} \Converse{\Precedence}
    \subset   \ConditionalAscendent  \Delta_{\TransitiveReflexiveClosureOfSet{\AgentSubsetW}} \ConverseConditionalAscendent
    \subset \ConditionalActiveMinus \)
    (by~\eqref{eq:conditional_ascendent_relation}
    and~\eqref{eq:ConditionalActiveMinus})
    and that
    \(\Converse{\Precedence} \Delta_{\Complementary{\AgentSubsetW}}\Converse{\Precedence}
    \subset \ConverseConditionalAscendent\subset \ConditionalActiveMinus\)
    (by~\eqref{eq:converse_conditional_ascendent_relation}
    and~\eqref{eq:ConditionalActiveMinus}),
    we obtain that  \(\bgent \ConditionalActiveMinus \cgent\).

  \end{enumerate}

  This ends the proof.
\end{proof}

\subsubsection{Case of active \undirectedEdgePath\ of length greater than~$2$}

The following Lemma~\ref{lem:induction-lemma} is instrumental in the proof of
Proposition~\ref{th:main_isimplied}.  It covers by induction the case of active
\undirectedEdgePaths\ of length greater than~2.

\begin{figure}[hbtp]
  \begin{center}
    \mbox{\includegraphics[width=0.9\textwidth]{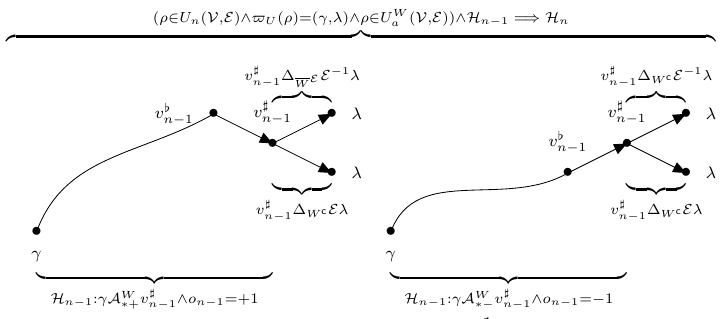}}
    \caption{Sketch of proof of
      induction Lemma~\ref{lem:induction-lemma} used in
      Proposition~\ref{th:main_isimplied} in Appendix~\ref{Proof_of_Theorem_isimplied}
      \label{fig:isimplied}}
  \end{center}
\end{figure}

\begin{lemma}  (Coq \PY{k+kn}{Lemma}~\PY{n+nf}{D\_L\ref{lem:induction-lemma}})
  \label{lem:induction-lemma}
  Let \( \npOrientedGraph \) be a graph, 
  and $\AgentSubsetW\subset\VERTEX$ be a subset of vertices.
  For any $n\ge 2$ and $\bgent$, $\cgent \in \AGENT$, the following statement
  holds true.
  For any
  \undirectedEdgePath\ \( \UndirectedPath \in \UPATH_{n}\np{\graph} \) of length~$n$
  joining vertices~$\bgent$ and $\cgent$ (that is,
  \( \Projection_{\UPATH}(\UndirectedPath)= \np{\bgent,\cgent}\) as in~\eqref{eq:Projection_UPATH_all}), and
  such that \( \UndirectedPath \) is active (that is,
  \( \UndirectedPath \in \ActiveUndirectedPaths{\graph} \) as in
  Definition~\ref{de:ActivePaths}), 
  one of the two following properties is fullfiled:
  \begin{enumerate}
  \item
    Either $\bgent\ConditionalActivePlus \cgent$ and
    the last orientation of $\UndirectedPath$ is $\orient_n=+1$,
    \label{it:lem:induction-lemma_+1}
  \item
    Or $\bgent\ConditionalActiveMinus \cgent$ and
    the last orientation of $\UndirectedPath$ is $\orient_n=-1$.
    \label{it:lem:induction-lemma_-1}
  \end{enumerate}
\end{lemma}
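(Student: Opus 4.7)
The plan is to proceed by induction on the path length $n \geq 2$, treating $n=2$ as the base case and, in the inductive step, decomposing an active path of length $n+1$ into an active subpath of length $n$ followed by the last edge.

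For the base case $n=2$, the active \undirectedEdgePath~$\UndirectedPath=\np{\path,\orient}$ has an intermediate vertex $\dgent$ and Lemma~\ref{le:active-length-2} gives four possibilities grouped by the value of $\orient_2$. I would check each:
\begin{itemize}
\item If $\orient_2=+1$ and $\bgent \EDGE \Delta_{\Complementary{\AgentSubsetW}} \EDGE\cgent$, then $\bgent \Precedence \Delta_{\Complementary{\AgentSubsetW}} \Precedence \cgent$ gives $\bgent \ConditionalAscendent \cgent$ via $\ConditionalAscendent = \Precedence \TransitiveReflexiveClosureParentalPrecedence$, hence $\bgent\ConditionalActivePlus\cgent$.
\item If $\orient_2=+1$ and $\bgent \Converse{\EDGE} \Delta_{\Complementary{\AgentSubsetW}} \EDGE\cgent$, then the definition~\eqref{eq:common_cause} yields $\bgent \ConditionalCommonCause \cgent$, so $\bgent\ConditionalActivePlus\cgent$.
\item If $\orient_2=-1$ and $\bgent \EDGE \Delta_{\TransitiveReflexiveClosure{\AgentSubsetW}} \Converse{\EDGE}\cgent$, use $\Precedence\subset\ConditionalAscendent$, $\Delta_{\TransitiveReflexiveClosure{\AgentSubsetW}}\subset\CousinhoodStar$, and $\Converse{\Precedence}\subset\ConverseConditionalAscendent$ to obtain $\bgent \ConditionalAscendent\CousinhoodStar\ConverseConditionalAscendent \cgent \subset \ConditionalActiveMinus$.
\item If $\orient_2=-1$ and $\bgent \Converse{\EDGE} \Delta_{\Complementary{\AgentSubsetW}} \Converse{\EDGE}\cgent$, then $\bgent \ConverseParentalPrecedence \Converse{\Precedence}\cgent \subset \TransitiveReflexiveClosureConverseParentalPrecedence \Converse{\Precedence}\cgent = \ConverseConditionalAscendent\cgent \subset \ConditionalActiveMinus\cgent$.
\end{itemize}

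For the inductive step, suppose the result holds at length $n\geq 2$ and let $\UndirectedPath$ be active of length $n+1$ from $\bgent$ to $\cgent$. I would apply Lemma~\ref{lem:ActiveUndirectedPaths_subpaths_junctions_when_reconcatenating} with $\UndirectedPath=\UndirectedPathbis \ltimes \UndirectedPathter$ where $\UndirectedPathbis$ has length $n$ (ending at some $\vertex \in \VERTEX$) and $\UndirectedPathter$ is the single last edge (with orientation $\orient_{n+1}$). The induction hypothesis applied to $\UndirectedPathbis$ gives either $\bgent\ConditionalActivePlus\vertex$ with $\orient_n=+1$, or $\bgent\ConditionalActiveMinus\vertex$ with $\orient_n=-1$. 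The activity of the junction subpath $\Head{\ProjectionUndirectedPath}\np{\UndirectedPathbis} \ltimes \Tail{\ProjectionUndirectedPath}\np{\UndirectedPathter}$, analyzed through Definition~\ref{de:ActivePaths}, forces $\vertex \in \Complementary{\AgentSubsetW}$ in cases $(\orient_n,\orient_{n+1})\in\na{(+1,+1),(-1,-1),(-1,+1)}$ and $\vertex \in \TransitiveReflexiveClosure{\AgentSubsetW}$ in case $(+1,-1)$.

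Combining these four subcases with the two possible inductive outcomes yields the conclusion via short relational-algebra steps: case $(-1,+1)$ is exactly Inclusion~\eqref{eq:ConditionalActivePlus_supset}, and case $(+1,-1)$ is exactly Inclusion~\eqref{eq:ConditionalActiveMinus_supset}; case $(+1,+1)$ follows from $\ConditionalAscendent\ParentalPrecedence\subset\ConditionalAscendent$ and $\ConditionalCommonCause\ParentalPrecedence\subset\ConditionalCommonCause$, so $\ConditionalActivePlus\ParentalPrecedence\subset\ConditionalActivePlus$; and case $(-1,-1)$ follows from the dual identity $\ConverseConditionalAscendent\ConverseParentalPrecedence\subset\ConverseConditionalAscendent$, showing that $\ConditionalActiveMinus\Delta_{\Complementary{\AgentSubsetW}}\Converse{\Precedence}\subset\ConditionalActiveMinus$. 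In all four cases, the orientation of $\orient_{n+1}$ matches the side ($\ConditionalActivePlus$ versus $\ConditionalActiveMinus$) produced, as required.

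The main obstacle is purely bookkeeping: keeping the four orientation combinations aligned with the four constraints on $\vertex$ and with the two possible conclusions from the induction hypothesis, and correctly invoking the four relational inclusions. The algebraic steps themselves are short, relying on the already-established properties $\TransitiveReflexiveClosureParentalPrecedence\ParentalPrecedence\subset\TransitiveReflexiveClosureParentalPrecedence$, $\TransitiveClosureParentalPrecedence\ParentalPrecedence\subset\TransitiveClosureParentalPrecedence$, and their converses, together with Inclusions~\eqref{eq:ConditionalActivePlus_supset} and~\eqref{eq:ConditionalActiveMinus_supset}.
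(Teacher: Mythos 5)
Your proposal is correct and follows essentially the same route as the paper's own proof: induction on the length, base case handled via Lemma~\ref{le:active-length-2}, and an inductive step that peels off the last edge, reads the constraint on the junction vertex from Definition~\ref{de:ActivePaths}, and closes each of the four orientation cases with absorption inclusions, the two sign-switching cases being exactly~\eqref{eq:ConditionalActivePlus_supset} and~\eqref{eq:ConditionalActiveMinus_supset}. One cosmetic remark: in the $(-1,-1)$ case the auxiliary identity you actually need is \( \ConverseConditionalAscendent \Delta_{\Complementary{\AgentSubsetW}} \Converse{\Precedence} \subset \ConverseConditionalAscendent \) (the diagonal sits on the junction vertex, i.e.\ on the tail of the last reversed edge), not \( \ConverseConditionalAscendent \ConverseParentalPrecedence \subset \ConverseConditionalAscendent \) as written, but the inclusion \( \ConditionalActiveMinus \Delta_{\Complementary{\AgentSubsetW}} \Converse{\Precedence} \subset \ConditionalActiveMinus \) that you then state is the right one and does hold.
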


\begin{proof}
  We call ${\mathcal H}_{n}$ the statement in Lemma~\ref{lem:induction-lemma}
  and we prove by induction that it is satisfied for all $n\ge 2$.
  
  The proof of ${\mathcal H}_{2}$ is given by Lemma~\ref{le:active-length-2}.
  \medskip
  
  We suppose that the induction assumption~${\mathcal H}_{n-1}$ holds true, where $n-1 \ge 2$,
  and we are going to show that ${\mathcal H}_{n}$ holds true. 
  For this purpose, we consider an \undirectedEdgePath~$\UndirectedPath$ of length~$n$ ($n \geq 2$), 
  joining vertices~$\bgent$ and $\cgent$ in the graph~$(\graph)$, and 
  which is active, that is, 
  \begin{equation*}
    \UndirectedPath \in \UPATH_{n}\np{\graph}
    \text{ and }
    \Projection_{\UPATH}(\UndirectedPath)= \np{\bgent,\cgent}
    \text{ and }
    \UndirectedPath \in \ActiveUndirectedPaths{\graph}
    \eqfinp 
  \end{equation*}
  We decompose the \undirectedEdgePath~$\UndirectedPath$ as
  \[
    \UndirectedPath=\nseqa{(\tail{\vertex_i},\head{\vertex_i},\orient_i)}{i\in \ic{1,n}}
    = \UndirectedPathbis \times \UndirectedPathter
    \eqfinv
  \]
  where
  \(
  \UndirectedPathbis=\nseqa{(\tail{\vertex_i},\head{\vertex_i},\orient_i)}{i \in
      \ic{1,n{-}1}}
  \in \UPATH_{n{-}1}\np{\graph}
  \)
  is an \undirectedEdgePath\ of length~$n{-}1$,
  and where $\UndirectedPathter=\na{\np{\tail{\vertex_n},\head{\vertex_n}, \orient_{n}}}\in \UPATH_{1}\np{\graph}$
  is an \undirectedEdgePath\ of length~$1$.
  We have that \( \tail{\vertex_1}=\bgent \) and \( \head{\vertex_n}=\cgent \).
  It is clear that the \undirectedEdgePath~$\UndirectedPathbis$ is active, 
  that is, \(   \UndirectedPathbis \in \ActiveUndirectedPaths{\graph}\).
  Indeed, otherwise, the \undirectedEdgePath~$\UndirectedPathbis$
  would be in one of the four cases listed in 
  Definition~\ref{de:vertices-d-separated}, hence so would be the \undirectedEdgePath~$\UndirectedPath$.
  But this would contradict the assumption that
  \(  \UndirectedPath \in \ActiveUndirectedPaths{\graph} \).
  As the \undirectedEdgePath~$\UndirectedPathbis$ is active and of length~$n{-}1$,
  it satisfies the induction assumption~${\mathcal H}_{n-1}$. We deduce that
  either $\bgent \ConditionalActivePlus \head{\vertex_{n-1}}$
  and the last orientation of $\UndirectedPathbis$ is
  $\orient_{n-1}=+1$, or
  $\bgent \ConditionalActiveMinus \head{\vertex_{n-1}}$ and the last orientation of $\UndirectedPathbis$ is
  $\orient_{n-1}=-1$.
  We analyze the two cases separately. Each case being subdivided in two cases, we will
  analyse four different cases as summarized in Figure~\ref{fig:isimplied}.
  
  \medskip
  \noindent $\bullet$
  Assume that we have
  $\bgent \ConditionalActivePlus \head{\vertex_{n-1}}$ and that
  the last orientation of~$\UndirectedPathbis$ is $\orient_{n-1}=+1$,
  that is, $\UndirectedPathbis$ ends with \(
  \np{\tail{\vertex_{n-1}},\head{\vertex_{n-1}}, +1} \). 
  There are two possibilities for the
  \undirectedEdgePath~$\UndirectedPathter=
  \na{\np{\tail{\vertex_n},\head{\vertex_n}, \orient_{n}}}$.
  
  \begin{itemize}
    
  \item
    Suppose that $\UndirectedPathter=\na{\np{\tail{\vertex_n},\head{\vertex_n}, +1}}
    = \na{\np{\tail{\vertex_{n}},\cgent, +1}}$, that
    is, \( \np{\tail{\vertex_{n}},\cgent} \in \Precedence \)
    by~\eqref{eq:eopaths_defs}.
    As the path $\UndirectedPath$ is active by assumption, the pattern
    \( \ba{ \np{\tail{\vertex_{n-1}},\head{\vertex_{n-1}}, +1},
      \np{\tail{\vertex_n},\head{\vertex_n}, +1} } \) must satisfy Item~\ref{it:ActivePaths_case1} in
    Definition~\ref{de:ActivePaths}. We deduce that 
    $\head{\vertex_{n-1}}=\tail{\vertex_{n}} \in \Complementary{\AgentSubsetW}$. 
    Now, we wrap up the results obtained so far.
    On the one hand, from $\bgent \ConditionalActivePlus \head{\vertex_{n-1}}$,
    $\head{\vertex_{n-1}}=\tail{\vertex_{n}} \in \Complementary{\AgentSubsetW}$
    and \( \np{\tail{\vertex_{n}},\cgent} \in \Precedence \),
    we get that  $\bgent\ConditionalActivePlus
    \Delta_{\Complementary{\AgentSubsetW}}
    \Precedence\cgent$, hence that 
    $\bgent \ConditionalActivePlus\cgent$ because
    the relation~$\ConditionalActivePlus$ in~\eqref{eq:ConditionalActivePlus}
    ends with the relation
    $\ConditionalAscendent$ and as 
    $\ConditionalAscendent \Delta_{\Complementary{\AgentSubsetW}}\Precedence
    =\ConditionalDown (\Delta_{\Complementary{\AgentSubsetW}}\Precedence)
    \subset \ConditionalAscendent$ 
    by~\eqref{eq:conditional_ascendent_relation}.
    On the other hand, the \undirectedEdgePath~$\UndirectedPath$ ends with~$+1$,
    as it is the case for~$\UndirectedPathter$.
    We conclude that the \undirectedEdgePath~$\UndirectedPath$ of length~$n$
    satisfies the case~\ref{it:lem:induction-lemma_+1} of~${\mathcal H}_{n}$,
    since it ends with~$+1$ and its endpoints are such 
    that $\bgent \ConditionalActivePlus\cgent$.
    Therefore, we have proven the case~\ref{it:lem:induction-lemma_+1} of
    the induction assumption~${\mathcal H}_{n}$
    for the \undirectedEdgePaths\ of length~$n$.
  \item
    Suppose that $\UndirectedPathter=
    \na{\np{\tail{\vertex_n},\head{\vertex_n}, -1}}=
    \na{\np{\tail{\vertex_{n}},\cgent, -1}}$, that
    is, \( \np{\tail{\vertex_{n}},\cgent} \in \Converse{\Precedence} \)
    by~\eqref{eq:eopaths_defs}.
    As the path $\UndirectedPath$ is active, the pattern
    \( \ba{\np{\tail{\vertex_{n-1}},\head{\vertex_{n-1}}, +1},
      \np{\tail{\vertex_n},\head{\vertex_n}, -1} } \) must satisfy Item~\ref{it:ActivePaths_case4} in
    Definition~\ref{de:ActivePaths}. We deduce that 
    $\head{\vertex_{n-1}}=\tail{\vertex_{n}} \in \TransitiveReflexiveClosureOfSet{\AgentSubsetW}$.
    Now, we wrap up the results obtained so far.
    On the one hand, from $\bgent \ConditionalActivePlus \head{\vertex_{n-1}}$,
    $\head{\vertex_{n-1}}=\tail{\vertex_{n}} \in \TransitiveReflexiveClosureOfSet{\AgentSubsetW}$
    and \( \np{\tail{\vertex_{n}},\cgent} \in \Converse{\Precedence} \),
    we get that  $\bgent\ConditionalActivePlus
    \Delta_{\TransitiveReflexiveClosureOfSet{\AgentSubsetW}}
    \Converse{\Precedence}\cgent$, hence that 
    $\bgent \ConditionalActiveMinus\cgent$  by~\eqref{eq:ConditionalActiveMinus_supset}.
    On the other hand, the \undirectedEdgePath~$\UndirectedPath$ ends with~$-1$,
    as it is the case for~$\UndirectedPathter$.
    We conclude that the \undirectedEdgePath~$\UndirectedPath$ of length~$n$
    satisfies the case~\ref{it:lem:induction-lemma_-1} of~${\mathcal H}_{n}$,
    since it ends with~$-1$ and its endpoints are such 
    that $\bgent \ConditionalActiveMinus\cgent$.
    Therefore, we have proven the case~\ref{it:lem:induction-lemma_-1} of
    the induction assumption~${\mathcal H}_{n}$
    for the \undirectedEdgePaths\ of length~$n$.
  \end{itemize}

  \medskip
  \noindent $\bullet$
  Assume that we have
  $\bgent \ConditionalActiveMinus \head{\vertex_{n-1}}$
  and that the last orientation of~$\UndirectedPathbis$ is
  $\orient_{n-1}=-1$, that
  is, \( \np{\tail{\vertex_{n}},\cgent} \in \Converse{\Precedence} \). 
  There are two possibilities for the 
  \undirectedEdgePath~$\UndirectedPathter=
  \na{\np{\tail{\vertex_n},\head{\vertex_n},\orient_{n}}}$.
  
  \begin{itemize}
  \item
    Suppose that $\UndirectedPathter=
    \na{\np{\tail{\vertex_n},\head{\vertex_n}, +1}}= \na{\np{\tail{\vertex_{n}},\cgent, +1}}$, that
    is, \( \np{\tail{\vertex_{n}},\cgent} \in \Precedence \)
    by~\eqref{eq:eopaths_defs}.
    As the path $\UndirectedPath$ is active, the pattern
    \( \ba{\np{\tail{\vertex_{n-1}},\head{\vertex_{n-1}}, -1},
      \np{\tail{\vertex_n},\head{\vertex_n}, +1} } \) must satisfy Item~\ref{it:ActivePaths_case3} in
    Definition~\ref{de:ActivePaths}. We deduce that 
    $\head{\vertex_{n-1}}=\tail{\vertex_{n}} \in \Complementary{\AgentSubsetW}$. 
    Now, we wrap up the results obtained so far.
    On the one hand, from $\bgent \ConditionalActiveMinus \head{\vertex_{n-1}}$,
    $\head{\vertex_{n-1}}=\tail{\vertex_{n}} \in \Complementary{\AgentSubsetW}$
    and \( \np{\tail{\vertex_{n}},\cgent} \in \Precedence \),
    we get that  $\bgent\ConditionalActiveMinus
    \Delta_{\Complementary{\AgentSubsetW}}
    \Converse{\Precedence}\cgent$, hence that 
    $\bgent \ConditionalActivePlus\cgent$  by~\eqref{eq:ConditionalActiveMinus_supset}.
    On the other hand, the \undirectedEdgePath~$\UndirectedPath$ ends with~$+1$,
    as it is the case for~$\UndirectedPathter$.
    We conclude that the \undirectedEdgePath~$\UndirectedPath$ of length~$n$
    satisfies the case~\ref{it:lem:induction-lemma_+1} of~${\mathcal H}_{n}$,
    since it ends with~$+1$ and its endpoints are such 
    that $\bgent \ConditionalActivePlus\cgent$.
    Therefore, we have proven the case~\ref{it:lem:induction-lemma_+1} of
    the induction assumption~${\mathcal H}_{n}$
    for the \undirectedEdgePaths\ of length~$n$.
  \item
    Suppose that $\UndirectedPathter=
    \na{\np{\tail{\vertex_n},\head{\vertex_n}, -1}}= \na{\np{\tail{\vertex_{n}},\cgent, -1}}$, that
    is, \( \np{\tail{\vertex_{n}},\cgent} \in \Converse{\Precedence} \)
    by~\eqref{eq:eopaths_defs}.
    As the path $\UndirectedPath$ is active, the pattern
    \( \ba{ \np{\tail{\vertex_{n-1}},\head{\vertex_{n-1}}, -1},
      \np{\tail{\vertex_n},\head{\vertex_n}, -1} } \)
    must satisfy Item~\ref{it:ActivePaths_case2} in
    Definition~\ref{de:ActivePaths}. We deduce that 
    $\head{\vertex_{n-1}}=\tail{\vertex_{n}} \in \Complementary{\AgentSubsetW}$. 
    Now, we wrap up the results obtained so far.
    On the one hand, from $\bgent \ConditionalActiveMinus \head{\vertex_{n-1}}$,
    $\head{\vertex_{n-1}}=\tail{\vertex_{n}} \in \Complementary{\AgentSubsetW}$
    and \( \np{\tail{\vertex_{n}},\cgent} \in \Converse{\Precedence} \),
    we get that  $\bgent\ConditionalActiveMinus
    \Delta_{\Complementary{\AgentSubsetW}}
    \Converse{\Precedence}\cgent$, hence that 
    $\bgent \ConditionalActiveMinus\cgent$ because
    the relation~$\ConditionalActiveMinus$ ends with the relation
    $\ConverseConditionalAscendent$ and we have that
    $\ConverseConditionalAscendent \Delta_{\Complementary{\AgentSubsetW}}\Converse{\Precedence}
    = \ConditionalUp\Delta_{\Complementary{\AgentSubsetW}}\Converse{\Precedence}
    \subset \ConverseConditionalAscendent$  which implies that
    $\ConditionalActiveMinus \Delta_{\Complementary{\AgentSubsetW}}\Converse{\Precedence}
    \subset \ConditionalActiveMinus$. 
    On the other hand, the \undirectedEdgePath~$\UndirectedPath$ ends with~$-1$,
    as it is the case for~$\UndirectedPathter$.
    We conclude that the \undirectedEdgePath~$\UndirectedPath$ of length~$n$
    satisfies the case~\ref{it:lem:induction-lemma_-1} of~${\mathcal H}_{n}$,
    since it ends with~$-1$ and its endpoints are such 
    that $\bgent \ConditionalActiveMinus\cgent$.
    Therefore, we have proven the case~\ref{it:lem:induction-lemma_-1} of
    the induction assumption~${\mathcal H}_{n}$
    for the \undirectedEdgePaths\ of length~$n$.
  \end{itemize}
  \medskip

  This ends the proof. 
\end{proof}

\newcommand{\etalchar}[1]{$^{#1}$}
\newcommand{\noopsort}[1]{} \ifx\undefined\allcaps\def\allcaps#1{#1}\fi

\end{document}